\setlist[itemize]{leftmargin=*}
\setlist[enumerate]{leftmargin=*}
\title{Revisiting Local Computation of PageRank: Simple and Optimal\footnote{This text is the full version of a paper accepted by the 56th Annual ACM Symposium on Theory of Computing (STOC 2024). This work was partially done at Gaoling School of Artificial Intelligence, Beijing Key Laboratory of Big Data Management and Analysis Methods, MOE Key Lab of Data Engineering and Knowledge Engineering, and Pazhou Laboratory (Huangpu), Guangzhou, Guangdong 510555, China.}}
\author{Hanzhi Wang}
\author{Zhewei Wei\footnote{Zhewei Wei is the corresponding author.}}
\author{Ji-Rong Wen}
\author{Mingji Yang}
\affil{Renmin University of China \authorcr
  \{hanzhi\_wang, zhewei, jrwen, kyleyoung\}@ruc.edu.cn}
\date{}
\newtheorem{theorem}{Theorem}[section]
\newtheorem{lemma}[theorem]{Lemma}
\def\header{\vspace{0.8mm} \noindent}
\DeclareMathOperator{\Var}{Var}
\DeclareMathOperator{\polylog}{polylog}
\newcommand{\indicator}[1]{\mathds{1}\{#1\}}
\def\tO{\widetilde{O}}
\def\Nin{\mathcal{N}_{\mathrm{in}}}
\def\Nout{\mathcal{N}_{\mathrm{out}}}
\def\din{d_{\mathrm{in}}}
\def\dout{d_{\mathrm{out}}}
\def\Deltain{\Delta_{\mathrm{in}}}
\def\Deltaout{\Delta_{\mathrm{out}}}
\def\rmax{\epsilon}
\newcommand{\epib}[1]{p(#1)}
\newcommand{\rb}[1]{r(#1)}
\def\epi{\hat{\pi}}
\def\pf{p_f}
\def\fpush{\textup{\texttt{ApproximatePageRank}}\xspace}
\def\bpush{\textup{\texttt{ApproxContributions}}\xspace}
\def\FASTPPR{\textup{\texttt{FAST-PPR}}\xspace}
\def\BiPPR{\textup{\texttt{BiPPR}}\xspace}
\def\unBiPPR{\textup{\texttt{Undirected-BiPPR}}\xspace}
\def\RBS{\textup{\texttt{RBS}}\xspace}
\def\SetPush{\textup{\texttt{SetPush}}\xspace}
\def\indeg{\textup{\textsc{indeg}}\xspace}
\def\parent{\textup{\textsc{parent}}\xspace}
\def\outdeg{\textup{\textsc{outdeg}}\xspace}
\def\child{\textup{\textsc{child}}\xspace}
\def\jump{\textup{\textsc{jump}}\xspace}
\def\neigh{\textup{\textsc{neigh}}\xspace}
\def\SampleNode{\textup{\texttt{SampleNode}}\xspace}
\def\smallexpo{\gamma}
\def\randint{\mathrm{randint}}
\def\SP{\mathrm{SP}}
\def\RP{\mathrm{RP}}
\begin{document}

\maketitle

\begin{abstract}

We revisit the classic local graph exploration algorithm \bpush proposed by Andersen, Borgs, Chayes, Hopcroft, Mirrokni, and Teng (WAW '07, Internet Math. '08) for computing an $\rmax$-approximation of the PageRank contribution vector for a target node $t$ on a graph with $n$ nodes and $m$ edges.
We give a worst-case complexity bound of \bpush as $O\left(n\pi(t)/\rmax\cdot\min\big(\Deltain,\Deltaout,\sqrt{m}\big)\right)$, where $\pi(t)$ is the PageRank score of $t$, and $\Deltain$ and $\Deltaout$ are the maximum in-degree and out-degree of the graph, resp.
We also give a lower bound of $\Omega\left(\min\big(\Deltain/\delta,\Deltaout/\delta,\sqrt{m}/\delta,m\big)\right)$ for detecting the $\delta$-contributing set of $t$, showing that the simple \bpush algorithm is already optimal.

As \bpush has become a cornerstone for computing random-walk probabilities, our results and techniques can be applied to derive better bounds for various relevant problems.
In particular, we investigate the computational complexity of locally estimating a node's PageRank centrality.
We improve the best-known upper bound of $\tO\left(n^{2/3}\cdot\min\left(\Deltaout^{1/3},m^{1/6}\right)\right)$ given by Bressan, Peserico, and Pretto (SICOMP '23) to $O\left(n^{1/2}\cdot\min\left(\Deltain^{1/2},\Deltaout^{1/2},m^{1/4}\right)\right)$ by simply combining \bpush with the Monte Carlo simulation method.
We also improve their lower bound of $\Omega\left(\min\left(n^{1/2}\Deltaout^{1/2},n^{1/3}m^{1/3}\right)\right)$ to $\Omega\left(n^{1/2}\cdot\min\left(\Deltain^{1/2},\Deltaout^{1/2},m^{1/4}\right)\right)$ if $\min(\Deltain,\Deltaout)=\Omega\left(n^{1/3}\right)$, and to $\Omega\left(n^{1/2-\smallexpo}\big(\min(\Deltain,\Deltaout)\big)^{1/2+\smallexpo}\right)$ if $\min(\Deltain,\Deltaout)=o\left(n^{1/3}\right)$, where $\smallexpo>0$ is an arbitrarily small constant.
Our matching upper and lower bounds resolve the open problem of whether one can tighten the bounds given by Bressan, Peserico, and Pretto (FOCS '18, SICOMP '23).
Remarkably, the techniques and analyses for proving all our results are surprisingly simple.

\end{abstract}

\section{Introduction} \label{sec:introduction}

\textit{Random walks} on graphs are an algorithmic building block in modern network analysis.
The computation of random-walk probabilities serves as a versatile tool that has found multifaceted applications in webpage ranking~\cite{brin1998anatomy}, personalized search~\cite{haveliwala2003topic,fogaras2005towards}, graph partitioning~\cite{spielman2004nearly,andersen2007pagerank,fountoulakis2019variational}, graph sparsification~\cite{spielman2004nearly,spielman2011graph}, property testing~\cite{kale2011expansion,czumaj2015testing,chiplunkar2018testing}, and graph neural networks~\cite{klicpera2018predict,bojchevski2020scaling,wang2021approximate}, among many others.
\textit{Local graph algorithms}, which can estimate random-walk probabilities by only exploring a vanishing fraction of the graph, have been extensively studied.
Notably, the celebrated \textit{PageRank}~\cite{brin1998anatomy} centrality stands out as one of the most profound types of random-walk probabilities.
Roughly speaking, the PageRank centrality $\pi(t)$ of a node $t$ equals the probability that a random walk whose length follows a geometric distribution terminates at $t$.

We revisit the local graph exploration algorithm \bpush~\cite{andersen2007local,andersen2008local} (a.k.a. Backward Push, Backward Search, Reverse Local Push, etc.) originally proposed for computing \textit{PageRank contributions}, that is, for computing each node's contribution to a target node $t$'s PageRank centrality score $\pi(t)$.
The algorithm explores a subgraph near $t$ backward by performing a sequence of probability-pushing operations.
It takes as input a parameter $\rmax$ and computes an $\rmax$-approximation (in terms of maximum absolute error) of the PageRank contribution vector for $t$.
Although \bpush is a significant cornerstone algorithm, its computational complexity has not been fully understood despite years of studies.
Specifically, no easy-to-apply worst-case complexity bound of \bpush is known, which greatly limits its applications.
Meanwhile, whether one can derive an upper bound of $O\big(n\pi(t)/\rmax\big)$ for computing an $\rmax$-approximation of the PageRank contribution vector for $t$ remains an open problem.
Here, $n$ denotes the number of nodes in the graph, and the output size of the problem is $\Theta\big(n\pi(t)/\rmax\big)$ in the worst case.

We also investigate the classic problem of estimating a single target node $t$'s PageRank centrality $\pi(t)$.
This problem has been studied extensively for over a decade~\cite{chen2004local,fogaras2005towards,gleich2007approximating,andersen2008local,bar2008local,bressan2011local,borgs2012sublinear,borgs2014multiscale,lofgren2014fast,lofgren2015bidirectional,lofgren2016personalized,bressan2018sublinear,bressan2023sublinear}.
Before \cite{bressan2018sublinear}, the representative methods include Monte Carlo simulation, backward local exploration, and a combination of them, but none of them yield \textit{fully sublinear} worst-case upper bounds (i.e., bounds of $o(n+m)$, where $m$ denotes the number of edges in the graph) for any choice of $t$ on general directed graphs.
\cite{bressan2018sublinear,bressan2023sublinear} give the first fully sublinear upper bounds by designing a complicated backward exploration process, and they also give lower bounds for this problem.
However, there still exists a polynomial gap between their upper and lower bounds.

\subsection{Problem Formulations}

We consider an arbitrary directed graph $G=(V,E)$ with $n$ nodes and $m$ edges.
Let $\Deltain$ and $\Deltaout$ be the maximum in-degree and out-degree of $G$, resp.
We will parameterize the graph by $n$, $m$, $\Deltain$, and $\Deltaout$.
We assume that all nodes in $G$ have a nonzero out-degree for simplicity.
We measure the computational complexity under the standard RAM model, and we assume that the algorithms can have local access to the underlying graph, where they can learn at unit cost the in-degree/out-degree of a node or its $i$-th incoming/outgoing edge.
We also allow the algorithms to obtain a uniformly randomly selected node from $V$ through the $\jump()$ query at unit cost.
This setting is referred to as the \textit{arc-centric graph-access model} (see Section~\ref{sec:preliminaries_model} for details).

The definition of PageRank~\cite{brin1998anatomy} and PageRank contributions~\cite{andersen2008local} is based on the notion of \textit{$\alpha$-discounted random walk}, which is a variation of the standard random walk process that terminates w.p. $\alpha$ at each step independently.
We regard the \textit{decay factor} $\alpha$ as a fixed constant.
The PageRank contribution of $v$ to $t$ equals the probability that an $\alpha$-discounted random walk from $v$ terminates at $t$, and the PageRank score of $t$, denoted by $\pi(t)$, equals the average of the PageRank contributions from all nodes in $V$ to $t$.

We consider two problems relevant to estimating PageRank contributions: (i) computing an $\rmax$-approximation of the PageRank contribution vector for $t$, where we aim to estimate all nodes' PageRank contributions to $t$ with absolute error at most $\rmax$, and (ii) detecting the \textit{$\delta$-contributing set} of $t$, where we aim to compute a node set containing all nodes whose PageRank contributions to $t$ are at least $\delta n\pi(t)$.
For single-node PageRank estimation, our goal is to compute a multiplicative $(1\pm c)$-approximation of $\pi(t)$ w.p. at least $(1-\pf)$.
Here, $\rmax$, $\delta$, $c$, and $\pf$ are parameters in $(0,1)$.

\subsection{Our Results}

We give matching upper and lower bounds for the worst-case computational complexity of estimating PageRank contributions and single-node PageRank.
Our first result is a tight upper bound for computing an $\rmax$-approximation of the contribution vector for node $t$.

\begin{theorem} \protect{\textup{[}\hyperlink{proof:cost_bp}{$\downarrow$}\textup{]}\footnote{These downwards arrows are hyperlinks to the proofs.}} \label{thm:cost_bp}
    The computational complexity of computing an $\rmax$-approximation of the contribution vector for a target node $t$ without using the $\jump()$ query is $O\left(n\pi(t)/\rmax\cdot\min\big(\Deltain,\Deltaout,\sqrt{m}\big)\right)$.
\end{theorem}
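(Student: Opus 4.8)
The plan is to bound the running time of \bpush itself, since the theorem asserts only an upper bound and \bpush uses no $\jump()$ query. First I would recall the standard form of \bpush on input $(t,\rmax)$: it keeps an estimate vector $p$ (initially all zero) and a residual vector $r$ (initially $r(t)=1$), and while some node $v$ has $r(v)>\rmax$ it performs a \emph{push} at $v$: set $p(v)\mathrel{+}=\alpha r(v)$; then for every in-neighbor $u$ of $v$ add $(1-\alpha)r(v)/\dout(u)$ to $r(u)$; then reset $r(v)\leftarrow 0$. Keeping the active nodes in a queue, a push at $v$ costs $\Theta(\din(v))$ plus $O(1)$ overhead in the arc-centric model, so the total running time is $\Theta\!\big(1+\sum_v \din(v)\,n_v\big)$, where $n_v$ is the number of pushes performed at $v$. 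I will establish three upper bounds on this quantity for one and the same run — namely $O(n\pi(t)\Deltain/\rmax)$, $O(n\pi(t)\Deltaout/\rmax)$, and $O(n\pi(t)\sqrt m/\rmax)$ — and conclude by taking their minimum.

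Write $\pi_t(\cdot)$ for the true PageRank contribution vector of $t$ and set $P(v):=\sum_{\text{pushes at }v}r(v)$. Three elementary facts drive the analysis. \textbf{(a)} Each push at $v$ adds $\alpha r(v)$ to $p(v)$, so $p(v)=\alpha P(v)$; and the standard \bpush invariant $\pi_t(v)=p(v)+\sum_{w\in V}r(w)\,\pi_w(v)$ (with $\pi_w(v)$ the contribution of $v$ to $w$) has all terms nonnegative, so $p\le\pi_t$ entrywise and $\sum_v P(v)=\|p\|_1/\alpha\le\|\pi_t\|_1/\alpha=n\pi(t)/\alpha$. \textbf{(b)} Every push consumes residual $>\rmax$, so $n_v<P(v)/\rmax$ and $\sum_v n_v<n\pi(t)/(\alpha\rmax)$ — which already gives the first bound, $\sum_v\din(v)n_v\le\Deltain\sum_v n_v=O(n\pi(t)\Deltain/\rmax)$. \textbf{(c)} The total residual ever delivered to nodes by pushes equals $(1-\alpha)\sum_{(u,v)\in E}P(v)/\dout(u)$; combining $P(v)\le\pi_t(v)/\alpha$ with the identity $\sum_{(u,v)\in E}\pi_t(v)/\dout(u)=(n\pi(t)-\alpha)/(1-\alpha)$ — obtained by rearranging the exact recurrence $\pi_t(u)=\alpha\indicator{u=t}+\tfrac{1-\alpha}{\dout(u)}\sum_{v:(u,v)\in E}\pi_t(v)$ and summing over $u$ — shows this total is $<n\pi(t)/\alpha$.

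For the $\Deltaout$ bound I would charge the $\din(v)$ cost of each push at $v$ as one unit to each in-neighbor $u$. Then $u$ receives $\sum_{v:(u,v)\in E}n_v$ units of charge in total, while the corresponding pushes deliver more than $(1-\alpha)\rmax/\dout(u)$ residual to $u$ apiece; hence $u$'s total charge is below $\frac{\dout(u)}{(1-\alpha)\rmax}$ times the residual $u$ receives, i.e.\ at most $\frac{\Deltaout}{(1-\alpha)\rmax}$ times it. Summing over $u$ and invoking (c) gives $\sum_v\din(v)n_v<\frac{\Deltaout}{(1-\alpha)\rmax}\cdot\frac{n\pi(t)}{\alpha}=O(n\pi(t)\Deltaout/\rmax)$. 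For the $\sqrt m$ bound I would split the in-neighbors of $v$ by whether their out-degree exceeds $\sqrt m$: the charges landing on in-neighbors of out-degree $\le\sqrt m$ are bounded exactly as above with $\sqrt m$ in place of $\Deltaout$, while each push sends at most $\sqrt m$ charges to in-neighbors of out-degree $>\sqrt m$ (there are fewer than $\sqrt m$ such nodes, as $\sum_u\dout(u)=m$), contributing at most $\sqrt m\sum_v n_v=O(n\pi(t)\sqrt m/\rmax)$ by (b). Taking the minimum of the three bounds yields the theorem; I would also record the routine facts $n\pi(t)\ge\alpha$ (the contribution of $t$ to itself is at least $\alpha$) and $\min(\Deltain,\Deltaout,\sqrt m)\ge 1$, so the additive $O(1)$ term is absorbed.

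I expect fact (c) to be the crux: although a single push at $v$ can generate far more residual than it consumes — when $v$'s in-neighbors all have tiny out-degree, $\sum_{u:(u,v)\in E}1/\dout(u)$ can be as large as $\din(v)$ — the \emph{aggregate} residual generated over the whole run is nonetheless $O(n\pi(t))$, by the exact telescoping of the $\pi_t$-recurrence. Granting that, the $\Deltaout$ and $\sqrt m$ bounds both follow from the same one-unit-per-in-neighbor charging scheme, and the $\Deltain$ bound needs nothing beyond (b); the remaining verifications (queue bookkeeping, the $\Theta(\din(v))$ push cost, correctness of the $\rmax$-approximation, and handling nodes of in-degree $0$) are routine.
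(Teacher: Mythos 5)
Your proof is correct, and it rests on exactly the same engine as the paper's: the recurrence $\pi(u,t)=\alpha\indicator{u=t}+\frac{1-\alpha}{\dout(u)}\sum_{v\in\Nout(u)}\pi(v,t)$, summed over $u$, which bounds the aggregate residual flowing through the out-edges by $O\big(n\pi(t)\big)$ — your fact (c) is precisely the paper's Inequality~\eqref{ineqn:sum_PPR_children} in aggregated form, and you are right that it is the crux. Your $\Deltain$ bound (push count times $\Deltain$) and your $\Deltaout$ bound (charging each unit of per-push work to the in-neighbor that receives the residual, then bounding the number of receive events at $u$ by $\dout(u)/\big((1-\alpha)\rmax\big)$ times the residual $u$ receives) are the same arguments the paper uses, the latter being its proof of the $O\big(\frac{1}{\rmax}\sum_v\pi(v,t)\dout(v)\big)$ bound in Theorem~\ref{thm:previous_cost_bp} via the quantities $\SP$ and $\RP$. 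The one step where you genuinely diverge is the $\sqrt{m}$ bound: the paper applies Cauchy--Schwarz to $\sum_v\pi(v,t)\din(v)=\sum_{(u,v)\in E}\sqrt{\pi(v,t)\dout(u)}\cdot\sqrt{\pi(v,t)/\dout(u)}$ and then uses $\sum_{u\in\Nin(v)}\dout(u)\le m$ together with the same key inequality, whereas you split each node's in-neighbors at the out-degree threshold $\sqrt{m}$, handling the low-out-degree side by your $\Deltaout$-style charging with $\sqrt{m}$ in place of $\Deltaout$ and the high-out-degree side by noting there are fewer than $\sqrt{m}$ such nodes, so they cost at most $\sqrt{m}\sum_v n_v$. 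Both routes are valid and of comparable length; the threshold split is slightly more elementary and makes visible \emph{which} edges are expensive, while Cauchy--Schwarz avoids the case split. Your closing remarks ($n\pi(t)\ge\pi(t,t)\ge\alpha$ to absorb the additive constant, and the treatment of in-degree-zero pushes via $\sum_v n_v=O\big(n\pi(t)/\rmax\big)$) correctly tie up the loose ends.
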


\noindent
As we shall see, this upper bound is derived by revisiting the complexity of \bpush.
This is the first easy-to-apply worst-case complexity bound of \bpush despite years of studies.
Previously, only a somewhat convoluted worst-case bound was known, leading to a preference for analyzing algorithms involving \bpush in terms of the average-case complexity as opposed to the worst-case one~\cite{lofgren2014fast,lofgren2016personalized,zhang2016approximate,wang2016hubppr,wang2018efficient,liao2022efficient} (see Section~\ref{sec:preliminaries_bpush} for detailed discussions).
In contrast, our bound succinctly captures the dependence of \bpush's cost on $t$ by $\pi(t)$.

One may note that this upper bound is still inferior to the desired upper bound of $O\big(n\pi(t)/\rmax\big)$.
However, we show that such an upper bound is impossible to achieve unless $\min\left(\Deltain,\Deltaout\right)=\Theta(1)$, and, surprisingly, the original \bpush algorithm is already optimal.
To this end, we study the problem of detecting the $\delta$-contributing set of $t$.
It is known that any algorithm that computes an $\rmax$-approximation of the PageRank contribution vector, including \bpush, can be applied to this problem.
Using Theorem~\ref{thm:cost_bp}, we derive the following upper bound for detecting the $\delta$-contributing set.

\begin{theorem} \protect{\textup{[}\hyperlink{proof:contribution_upper_bound}{$\downarrow$}\textup{]}} \label{thm:contribution_upper_bound}
    The computational complexity of computing a superset of the $\delta$-contributing set of a target node $t$ without using the $\jump()$ query is $O\left(\min\big(\Deltain/\delta,\Deltaout/\delta,\sqrt{m}/\delta,m\big)\right)$.
\end{theorem}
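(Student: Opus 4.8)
The plan is to reduce the task to running $\bpush$ with a suitable residual threshold and then invoke Theorem~\ref{thm:cost_bp}. Write $\pi_v(t)$ for the PageRank contribution of $v$ to $t$. Recall that $\bpush$ run to threshold $\rmax$ returns reserves with $|p(v)-\pi_v(t)|\le\rmax$ for every node $v$, and $p(v)=0$ for every node $v$ it never touches; hence every untouched node satisfies $\pi_v(t)\le\rmax$. Consequently, if we run $\bpush$ to \emph{any} threshold $\rmax<\delta n\pi(t)$, the set of touched nodes automatically contains every node with $\pi_v(t)\ge\delta n\pi(t)$, i.e.\ it is a superset of the $\delta$-contributing set. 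By Theorem~\ref{thm:cost_bp}, reaching threshold $\rmax$ costs $O\big(n\pi(t)/\rmax\cdot\min(\Deltain,\Deltaout,\sqrt m)\big)$, which is exactly $O\big(\tfrac1\delta\cdot\min(\Deltain,\Deltaout,\sqrt m)\big)$ once $\rmax=\Theta(\delta n\pi(t))$. The only obstacle is that $\pi(t)$ is unknown, so we cannot set $\rmax$ directly.

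To circumvent this, I would run $\bpush$ in geometric phases --- to threshold $1$, then $\tfrac12$, then $\tfrac14$, $\dots$, warm-starting each phase from the previous one --- and simply stop once the total work reaches a prescribed budget $T:=\tfrac{2C}{\delta}\cdot\min(\Deltain,\Deltaout,\sqrt m)$, where $C$ is the constant hidden in the $O(\cdot)$ of Theorem~\ref{thm:cost_bp}; the output is the set of touched nodes. For the analysis, suppose the budget is exhausted while we are in the middle of the phase with threshold $\rmax^{*}=2^{-i^{*}}$, so that the phase with threshold $2\rmax^{*}$ was completed. Since the phase with threshold $\rmax^{*}$ was \emph{not} completed, the work $T$ done so far is less than the total work that completing it would require, which by Theorem~\ref{thm:cost_bp} is at most $C\cdot\tfrac{n\pi(t)}{\rmax^{*}}\cdot\min(\Deltain,\Deltaout,\sqrt m)$; combining with the definition of $T$ gives $2\rmax^{*}<\delta n\pi(t)$. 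As the phase with threshold $2\rmax^{*}$ did complete, every node $v$ with $\pi_v(t)>2\rmax^{*}$ --- in particular every node of the $\delta$-contributing set --- has already been touched, so the output is correct, and the running time is $O(T)=O\big(\tfrac1\delta\cdot\min(\Deltain,\Deltaout,\sqrt m)\big)$. (In the degenerate case where the budget does not even suffice to complete the first phase, one checks that $\delta n\pi(t)>1$, so the $\delta$-contributing set is empty and any output is valid.)

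It remains to incorporate the $m$ term. A plain backward breadth-first search from $t$ visits precisely the set of nodes that can reach $t$ --- a trivial superset of the $\delta$-contributing set, since $\pi_v(t)>0$ forces $v$ to reach $t$ --- and runs in $O(m)$ time (including writing the output, as $m\ge n$). Running this search interleaved step-for-step with the budgeted $\bpush$ above and halting, with the corresponding output, as soon as either process finishes yields running time $O\big(\min\big(\tfrac1\delta\cdot\min(\Deltain,\Deltaout,\sqrt m),\,m\big)\big)$, as claimed. The step needing the most care is the budget-to-threshold argument: Theorem~\ref{thm:cost_bp} is an \emph{upper} bound on the cost of reaching a threshold, and we use it ``in reverse'' to turn the work already spent into a \emph{lower} bound on $\rmax^{*}$ --- this is why the phased, warm-started formulation is convenient, since ``budget exhausted mid-phase'' then translates directly into ``work done $<$ cost of completing threshold $\rmax^{*}$'' --- together with dispatching the boundary cases (too little budget for one phase, or $\delta n\pi(t)>1$), which are exactly the cases where the contributing set is empty or the $O(m)$ branch already dominates.
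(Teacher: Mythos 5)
Your core mechanism --- warm-started halving phases, using Theorem~\ref{thm:cost_bp} ``in reverse'' to turn an exhausted work budget into the conclusion $2\rmax^{*}<\delta n\pi(t)$, returning the touched set, and interleaving with an $O(m)$ backward search --- is sound as far as it goes (indeed, the warm-started multi-phase run is a legitimate execution of $\bpush(t,\alpha,\rmax^{*})$, so the worst-case bound does apply to it). But there is a genuine gap: your stopping rule requires the budget $T=\tfrac{2C}{\delta}\min\big(\Deltain,\Deltaout,\sqrt m\big)$, so the algorithm must know $\Deltain$, $\Deltaout$, and $m$ in advance. The theorem is claimed for an algorithm that is given only $t$ and $\delta$ and accesses the graph solely through local queries without $\jump()$ --- the paper emphasizes exactly this point right after stating the result, and its matching lower bound (Theorem~\ref{thm:contribution_lower_bound}) is proved in that model. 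Under local access these global parameters cannot be learned without $\Omega(m)$ exploration, which is the very cost you are trying to beat; substituting the values observed so far breaks the certification step (the observed minimum only lower-bounds the true one, so ``budget exhausted mid-phase'' no longer yields $2\rmax^{*}<\delta n\pi(t)$), and a doubling-budget scheme does not rescue it, because at the end of a budget round you have no way to certify that the current threshold is already below $\delta n\pi(t)$.

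The paper closes precisely this hole by budgeting \emph{observable execution statistics} rather than raw work: for each threshold it records a quantity $T_{\rmax}$ --- the pushback count $\sum_{v}\SP(v)$ for the $\Deltain$ bound, $\sum_{v}\RP(v)/\dout(v)$ for the $\Deltaout$ bound, and the Cauchy--Schwarz hybrid $\big(\sum_{v}\SP(v)\big)^{1/2}\big(\sum_{v}\SP(v)\,g(v)\big)^{1/2}$ with $g(v)=\sum_{u\in\Nin(v)}1/\dout(u)$ for the $\sqrt m$ bound --- and stops once the accumulated $T_{\rmax}$ is about to exceed $\Theta(1/\delta)$. Each $T_{\rmax}$ is shown to be $O\big(n\pi(t)/\rmax\big)$, so the halving provably reaches a threshold below $\delta n\pi(t)$ before the cap is hit (correctness with only $t,\delta$ known), while the actual work per run is $O(\Deltain\cdot T_{\rmax})$, $O(\Deltaout\cdot T_{\rmax})$, and $O(\sqrt m\cdot T_{\rmax})$, respectively; interleaving the three variants, plus the trivial $O(m)$ exploration as in your BFS branch, gives the stated bound. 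If knowledge of $\Deltain,\Deltaout,m$ were granted, your argument would go through, but as written it establishes only a weaker, parameter-aware version of Theorem~\ref{thm:contribution_upper_bound}.
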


\noindent
At first glance, this theorem may be derived by simply applying Theorem~\ref{thm:cost_bp} with $\rmax=\Theta\big(\delta n\pi(t)\big)$.
Nonetheless, we emphasize that our result holds even if we only know $t$ and $\delta$, while the original paper~\cite{andersen2007local} assumes that $n\pi(t)$ is known when solving this problem.
On the other hand, we prove the following matching lower bound for detecting the $\delta$-contributing set.

\begin{theorem}[Informal] \label{thm:contribution_lower_bound_informal}
    $\Omega\left(\min\big(\Deltain/\delta,\Deltaout/\delta,\sqrt{m}/\delta,m\big)\right)$ queries are in general required to detect the $\delta$-contributing set of $t$ without using the $\jump()$ query.
\end{theorem}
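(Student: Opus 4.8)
The plan is to construct a family of hard instances on which any (possibly randomized) algorithm must make the stated number of queries to correctly decide membership in the $\delta$-contributing set. I would split into cases matching the four terms in the minimum. For the $\sqrt{m}/\delta$ and $m$ terms, the natural construction is a graph in which $t$ has a moderate PageRank score (say a cluster of $\Theta(\sqrt{m})$ in-neighbors feeding $t$, making $n\pi(t)=\Theta(\sqrt{m})$) and a single ``special'' node $v$ that either does or does not have a short path to $t$ contributing exactly around the $\delta n\pi(t)$ threshold; the rest of the graph is a featureless expander-like blob that gives the algorithm no shortcut. Because the algorithm only has arc-centric access and no $\jump()$ query, to find $v$ (or certify its absence) it must essentially explore backward from $t$, and hiding $v$ among $\Theta(\sqrt{m})$ candidate in-neighbors at the right ``layer'' forces $\Omega(\sqrt{m}/\delta)$ or $\Omega(m)$ probes depending on which regime of $\delta$ we are in. For the $\Deltain/\delta$ term, I would instead make $t$ have a single in-neighbor of huge in-degree $\Deltain$, so that discovering whether a particular high-contribution node sits among those $\Deltain$ second-layer nodes costs $\Omega(\Deltain/\delta)$; symmetrically, the $\Deltaout/\delta$ term uses a construction where forward out-degrees are the bottleneck (this is the case I expect to require the most care, since the algorithm explores backward and the out-degree bound must be made to bite through the structure of which edges point into the explored region).

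The core technical device will be an indistinguishability/Yao-style argument: I would define two distributions $\mathcal{D}_{\mathrm{yes}}$ and $\mathcal{D}_{\mathrm{no}}$ over instances that are identical except for the placement (or existence) of one critical edge or node, arrange that the answer to the $\delta$-contributing-set problem differs between them, and show that any deterministic algorithm making fewer than the target number of queries sees the same transcript with probability $1-o(1)$ under both distributions, hence errs with constant probability on one of them; Yao's minimax principle then transfers this to randomized algorithms. To make the transcripts indistinguishable I would use a deferred-decision (principle of deferred decisions) coupling: answer the algorithm's queries with a ``generic'' partially-built graph and only commit the critical edge's location when forced, so that as long as the critical location has not been probed, the two worlds are coupled perfectly. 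Keeping track of what ``not yet probed'' means under the arc-centric model — where querying the $i$-th in-edge of a node reveals one endpoint — is the bookkeeping that needs to be done carefully, but it is standard.

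The main obstacle, I expect, is not the indistinguishability mechanics but rather \emph{simultaneously} controlling three quantities in each construction: (i) $n\pi(t)$ must be pinned down (up to constants) so that the threshold $\delta n\pi(t)$ is meaningful and the critical node's contribution straddles it in exactly one of the two worlds; (ii) the degree bounds $\Deltain,\Deltaout$ and the edge count $m$ must match the parameter regime the particular term of the $\min$ is meant to capture, without the ``filler'' part of the graph accidentally raising $\pi(t)$ or creating alternate short paths; and (iii) the number of candidate locations for the critical edge must be large enough (as a function of $\delta$ and the relevant degree/size parameter) to force the claimed query lower bound. Balancing (i)--(iii) is where the case analysis gets delicate, and I anticipate that the $o(n^{1/3})$ versus $\Omega(n^{1/3})$ split in the related single-node bounds reappears here as a condition on $\delta$ relative to $\min(\Deltain,\Deltaout)$ determining whether the $\sqrt{m}/\delta$ term or the $m$ term (or a degree term) is the binding one. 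Once the constructions are in place, the lower bound for each term follows by a short counting argument on the expected number of queries needed to hit the hidden location, and the final bound is the minimum over the four constructions since the adversary picks whichever is largest for the given parameters.
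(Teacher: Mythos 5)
There is a genuine gap, and it concerns where the $1/\delta$ factor comes from. Your constructions hide a \emph{single} critical node or edge among $\Theta(\sqrt{m})$ (or $\Theta(\Deltain)$, $\Theta(\Deltaout)$) candidate locations; a hitting argument on one hidden location can only yield $\Omega(\sqrt{m})$ or $\Omega(\Deltain)$ queries, not $\Omega(\sqrt{m}/\delta)$ or $\Omega(\Deltain/\delta)$. The paper's construction instead plants $\Theta(1/\delta)$ \emph{distinct} hidden nodes $\mathcal{U}$, each of which provably lies in the $\delta$-contributing set (each $u\in\mathcal{U}$ has $\pi(u,t)=\Theta(1)$ while $n\pi(t)=\Theta(|\mathcal{V}|)=\Theta(1/\delta)$), and each of which is reachable only via one specific $\parent(v,\cdot)$ query among $d$ equally likely ones at its child $v$. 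Since the algorithm must \emph{output} a superset of the contributing set and can only name nodes it has actually seen, it must locate every one of the $\Theta(1/\delta)$ nodes, costing $\Omega(d)$ each, for a total of $\Omega(d/\delta)$; the four terms of the minimum then come from instantiating $d$ and $|\mathcal{V}|$ in six parameter regimes. Relatedly, your $\mathcal{D}_{\mathrm{yes}}/\mathcal{D}_{\mathrm{no}}$ framing is misaligned with the problem: because only a superset is required, a ``no'' instance never forces the algorithm to exclude the critical node, so distinguishing the two worlds is not what is hard --- locating the nodes so they can be output is.

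A second missing ingredient is how to handle the $\Deltain/\delta$ term (and more generally any regime where $1/\delta$ exceeds $\Deltain$): with $\Theta(1/\delta)$ contributing nodes all funneling into $t$, the node $t$ itself acquires in-degree $\Theta(1/\delta)$, violating the in-degree budget. The paper resolves this by replacing the edges from $\mathcal{V}$ to $t$ with a multi-level tree gadget of branching factor $2/(1-\alpha)$ and constant in-degree, chosen so that the geometric decay of PPR along the levels exactly cancels the growth of PageRank mass, preserving $\pi(v,t)/(n\pi(t))=\Theta(1/|\mathcal{V}|)$. Your proposal does not address this tension, and your guess that the $n^{1/3}$ threshold from the single-node bounds reappears here is not right --- that threshold arises from trading off against the $\jump()$ query, which is disallowed in this theorem; the case analysis here is simply over which of $\Deltain$, $\Deltaout$, $\sqrt{m}$ is smallest and whether $\delta$ is small enough that the $\Omega(m)$ term binds. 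The deferred-decision/Yao machinery you describe is fine and matches the paper's uniform distribution over isomorphic copies, but without the many-hidden-nodes structure and the in-degree gadget the stated bound does not follow.
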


\noindent
The formal version of this theorem is given as Theorem~\ref{thm:contribution_lower_bound} in Section~\ref{sec:contribution_lower_bound}.
This lower bound refutes the possibility of achieving the upper bound of $O(1/\delta)$ for detecting the $\delta$-contributing set and the corresponding $O\big(n\pi(t)/\rmax\big)$ bound for computing an $\rmax$-approximation of the PageRank contribution vector, unless $\min(\Deltain,\Deltaout)=\Theta(1)$.
This lower bound also establishes the optimality of the simple \bpush algorithm, which has remained unrevealed despite years of studies.

As \bpush has become a fundamental building block for computing random-walk probabilities, our results and techniques can potentially be applied to derive better bounds for various relevant algorithms and to prove their optimality.
In particular, we further study the problem of single-node PageRank estimation.
By simply combining \bpush with the standard \textit{Monte Carlo simulation} method, we obtain the following upper bound for this problem.

\begin{theorem} \protect{\textup{[}\hyperlink{proof:SNPR_upper_bound}{$\downarrow$}\textup{]}} \label{thm:SNPR_upper_bound}
Given a target node $t$, the expected computational complexity of computing a multiplicative $(1\pm c)$-approximation of $\pi(t)$ w.p. at least $(1-\pf)$ is $O\left(n^{1/2}\cdot\min\left(\Deltain^{1/2},\Deltaout^{1/2},m^{1/4}\right)\right)$\footnote{Throughout the paper, for readability we omit multiplicative factors depending only on the approximation parameters $c$ and $\pf$ in the $O()$ notation for the complexity of single-node PageRank estimation, following \cite{bressan2023sublinear}. These factors are mildly polynomial in $1/c$ and polylogarithmic in $1/\pf$ and can be found in the \hyperlink{proof:SNPR_upper_bound}{proof} of Theorem~\ref{thm:SNPR_upper_bound}.}.
\end{theorem}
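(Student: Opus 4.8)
The plan is to combine the backward exploration algorithm \bpush (analyzed in Theorem~\ref{thm:cost_bp}) with standard Monte Carlo random walk simulation, using the bidirectional estimator idea. Recall that for any intermediate threshold $\rmax$, running \bpush from $t$ with parameter $\rmax$ produces estimates $\epi(v)$ and residuals $r(v)$ satisfying the invariant that $\pi(t)$ can be written as $\epi(t)$ plus a residual-weighted sum: roughly, $\pi(t) = \epi(t) + \sum_{v} r(v)\,\pi_v(t) / (\text{normalization})$, where $\pi_v(t)$ is the PageRank contribution of $v$ to $t$. Equivalently, $n\pi(t)$ equals a deterministic part plus $\mathbb{E}_{w}[\,r(X)\,]$-type quantity where $X$ is the endpoint of an $\alpha$-discounted random walk from a uniformly random source. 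Since \bpush guarantees $r(v) \le \rmax$ for all $v$, sampling $\alpha$-discounted walks from uniformly random start nodes and averaging the residual values at their endpoints gives an unbiased estimator of (the random-walk part of) $n\pi(t)$ whose range is bounded by $O(\rmax)$.

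The key steps, in order, are as follows. First I would invoke the decomposition identity for \bpush: after the backward push phase with threshold $\rmax$, we have $n\pi(t) = \sum_v \epi(v) + \sum_v r(v) \cdot (\text{number of walks}) $, and more usefully, $\pi(t) = \epi(t) + \sum_{s \in V}\frac{1}{n}\sum_v r(v)\pi_s(v)$ can be rewritten so that a single $\alpha$-discounted walk from a uniform source $s$, landing at node $v$, contributes exactly $r(v)$ (up to an $\alpha$ factor and the chosen normalization) to an unbiased estimate of $\pi(t) - \epi(t)$. Second, I would bound the variance: because each sample is nonnegative and at most $O(\rmax)$, and its expectation is $\pi(t) - \epi(t) \le \pi(t)$, a Chernoff/Bernstein bound shows that $O\big(\rmax/(c^2 \pi(t)) \cdot \log(1/\pf)\big)$ walks suffice to get a $(1\pm c)$-multiplicative estimate with failure probability $\pf$ — here I use the standard trick that for a relative-error estimate of a mean $\mu$ of a $[0,B]$-valued random variable, $O(B/(c^2\mu))$ samples suffice. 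Third, I would balance the two costs: the backward phase costs $O\big(n\pi(t)/\rmax \cdot \min(\Deltain,\Deltaout,\sqrt{m})\big)$ by Theorem~\ref{thm:cost_bp}, and the forward phase costs $O(\rmax/(c^2\pi(t)))$ walks, each of expected length $O(1/\alpha)$, so expected cost $O\big(\rmax/(c^2\pi(t))\big)$; setting $\rmax = \Theta\big(\pi(t)\sqrt{n \cdot \min(\Deltain,\Deltaout,\sqrt{m})}\big)$ balances them at $O\big(\sqrt{n\cdot\min(\Deltain,\Deltaout,\sqrt{m})}\big) = O\big(n^{1/2}\min(\Deltain^{1/2},\Deltaout^{1/2},m^{1/4})\big)$.

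The main obstacle is that the optimal choice of $\rmax$ depends on the unknown quantity $\pi(t)$, so the algorithm cannot simply set it directly; this is the standard difficulty in bidirectional PageRank estimators. I would resolve it by a geometric-guessing (doubling) scheme: guess successively smaller values $\tilde\pi$ of $\pi(t)$ (starting from a constant and halving), and for each guess set $\rmax$ accordingly, run both phases, and use a cheap test — e.g., checking whether the empirical estimate is consistent with the current guess, or equivalently whether $\epi(t) + (\text{forward estimate})$ exceeds a threshold proportional to $\tilde\pi$ — to decide whether to stop. Standard analysis shows the total work is dominated (up to constants) by the work in the final correct regime where $\tilde\pi = \Theta(\pi(t))$, because the costs grow geometrically, so the overall expected complexity remains $O\big(n^{1/2}\min(\Deltain^{1/2},\Deltaout^{1/2},m^{1/4})\big)$ times the suppressed $\poly(1/c)\cdot\polylog(1/\pf)$ factors. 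A secondary technical point is ensuring the failure probabilities across all $O(\log n)$ guesses are union-bounded correctly, which only inflates the $\log(1/\pf)$ factor to $\log(\log n/\pf)$ and is absorbed into the suppressed factors.
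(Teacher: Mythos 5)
Your proposal follows essentially the same route as the paper's own proof: reanalyzing \BiPPR, i.e., running \bpush and Monte Carlo $\alpha$-discounted walks combined through the bidirectional estimator, exploiting $\rb{v}\le\rmax$ together with $\sum_{v}\pi(v)\rb{v}\le\pi(t)$ to control the deviation, balancing the two phases at $\rmax=\Theta\big(\pi(t)\cdot n^{1/2}\min\big(\Deltain^{1/2},\Deltaout^{1/2},m^{1/4}\big)\big)$, and handling the unknown $\pi(t)$ by a doubling/guessing scheme with a consistency-based stopping test. The only differences are cosmetic: the paper uses a variance bound with Chebyshev plus the median trick instead of Chernoff/Bernstein, and it additionally makes the adaptive phase oblivious to the unknown graph parameters $\Deltain$, $\Deltaout$, $m$ (which your choice of $\rmax$ from the guess implicitly requires) via a doubling budget and the test $n_r\cdot\epi(t)=\Omega\big(c^{-2}\rmax\big)$.
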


\noindent
The idea of combining \bpush with Monte Carlo simulation has been adopted in the \BiPPR algorithm~\cite{lofgren2016personalized}.
However, \cite{lofgren2016personalized} only gives an average-case upper bound of $O\big(\sqrt{m}\big)$ over all $t\in V$, which cannot hold in the worst case as indicated by the known lower bounds.
In contrast, our upper bound is a worst-case one, and it improves over the currently best bound of $\tO\left(n^{2/3}\cdot\min\left(\Deltaout^{1/3},m^{1/6}\right)\right)$ given by \cite{bressan2023sublinear} for computational complexity.
Note that \cite{bressan2023sublinear} also gives a bound of $\tO\left(n^{1/2}\cdot\min\left(\Deltaout^{1/2},m^{1/4}\right)\right)$, but it only holds for query complexity (i.e., the number of queries made to the oracle, see Section~\ref{sec:preliminaries_model} for details), while our upper bound holds for both computational complexity and query complexity and is free from $\polylog(n)$ factors.
Our upper bound shows that one can always solve this problem in a \textit{sublinear} time of $O\left((n+m)^{3/4}\right)$, and also in $o(n)$ time as long as $m=o\left(n^2\right)$.

Our final contributions are matching lower bounds for single-node PageRank estimation.
We first study the easier case when the $\Deltain$ parameter is excluded, as considered by \cite{bressan2018sublinear,bressan2023sublinear}.

\begin{theorem}[Informal] \label{thm:SNPR_lower_bound_informal}
    $\Omega\left(n^{1/2}\cdot\min\left(\Deltaout^{1/2},m^{1/4}\right)\right)$ queries are in general required to estimate $\pi(t)$ within a multiplicative factor of $O(1)$ w.p. $\Omega(1)$.
\end{theorem}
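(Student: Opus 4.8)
The plan is to construct a family of hard instances on which any algorithm making too few queries cannot distinguish a ``low-PageRank'' case from a ``high-PageRank'' case for the target $t$, and then invoke Yao's principle to convert this into a lower bound against randomized algorithms. The two terms in $\min\left(\Deltaout^{1/2},m^{1/4}\right)$ correspond to two regimes: when $\Deltaout$ is the binding constraint (roughly $\Deltaout \le \sqrt m$) I would build gadgets whose out-degrees are capped at $\Deltaout$, and otherwise I would let the construction use up to $m$ edges freely. In both cases the instance will consist of a small ``core'' gadget attached to $t$ that carries a $\Theta(1)$ fraction of the eventual PageRank mass, hidden somewhere inside a large pool of $\Theta(n)$ ``dummy'' nodes, together with a single planted path or small subgraph (present in the high case, absent in the low case) that routes an additional $\Omega(1/\sqrt{n\cdot\text{(branching)}})$ worth of $\alpha$-discounted walk probability into $t$. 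The point is that detecting the planted structure requires, in expectation, a number of $\jump()$ queries and edge-traversals proportional to the reciprocal of the density of ``informative'' nodes among the $n$ nodes, and by design this density is $\Theta\left(1/\left(n^{1/2}\min(\Deltaout^{1/2},m^{1/4})\right)^{-1}\right)$-small.

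Concretely, first I would fix a layered construction: a set $S$ of roughly $k$ ``seed'' nodes each of out-degree about $d$ (with $d\le\Deltaout$ and $kd\le m$), arranged so that from each seed there is a probability $\Theta(1/(k d))$-ish chance that an $\alpha$-discounted walk started at a uniformly random node reaches $t$ through that seed; the parameters $k,d$ are then tuned so that $kd \approx n^{1/2}\min(\Deltaout^{1/2},m^{1/4})$ in the critical case, which forces either $k\approx n^{1/2}\Deltaout^{-1/2}$ with $d=\Deltaout$, or $d\approx m^{1/2}n^{-1/2}$ with $k$ chosen to saturate $m$. Second, I would pad with $n-\Theta(k)$ indistinguishable dummy nodes so that an algorithm using $o(n/k)$ $\jump()$ queries sees, with probability $1-o(1)$, no seed node at all, and using $o(kd)$ local queries cannot traverse from an ``entry point'' down to the critical edge either. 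Third, I would define the YES and NO distributions to differ only in whether the critical connecting edge(s) exist, making $\pi(t)$ differ by a constant multiplicative factor between the two — so any $(1\pm c)$-approximation with $c$ a small enough constant distinguishes them — while the two distributions are statistically close to any algorithm making few queries (a standard coupling/indistinguishability argument on the transcript of oracle answers).

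The main obstacle I anticipate is the indistinguishability analysis in the $m^{1/4}$ regime, where the branching factor $d$ can be large: one must ensure that an algorithm which stumbles onto a seed node cannot cheaply verify the planted edge by local exploration, which requires the ``verification subgraph'' below each seed to itself be large (size $\Omega(d)$ or so) and to look identical in the YES and NO worlds except at one hidden spot — essentially nesting a second search-for-a-needle instance inside each seed. Balancing the two levels of hiding so that the product of the two search costs matches $n^{1/2}\min(\Deltaout^{1/2},m^{1/4})$, while keeping total edges within $m$ and out-degrees within $\Deltaout$, is the delicate part; I would handle it by making the outer level contribute the $\sqrt n$ factor (hiding among $n$ nodes) and the inner level contribute the $\min(\Deltaout^{1/2},m^{1/4})$ factor (hiding among the $\Theta(\Deltaout)$ or $\Theta(\sqrt{m})$ local edges), and then verify via a union bound over the two failure events that $o\left(n^{1/2}\min(\Deltaout^{1/2},m^{1/4})\right)$ total queries leave the transcript distributions $o(1)$-far in total variation. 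Finally I would discharge the $\pi(t)$-robustness by choosing the overall construction so that $\pi(t)=\Theta(1/n)$ or $\Theta(1)$ depending on the regime and checking the multiplicative gap is bounded below by a constant, then appeal to Yao to conclude the $\Omega\left(n^{1/2}\min(\Deltaout^{1/2},m^{1/4})\right)$ randomized lower bound.
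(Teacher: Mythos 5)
Your high-level plan coincides with the paper's: hide a structure that carries a constant fraction of $\pi(t)$, observe that it can be found either by $\jump()$ queries (cost $n/k$ when the hidden set has $\Theta(k)$ nodes) or by locating one special parent-slot among $\Theta(kd)$ exchangeable slots (cost $kd$), balance the two, and split into the $\Deltaout$ and $m$ regimes. However, your concrete instantiation of the $m^{1/4}$ regime does not work as written, and this is exactly the regime that separates the claimed bound from prior art. You impose the edge budget as $kd\le m$ on $k$ seeds of out-degree $d$ and propose $d\approx m^{1/2}n^{-1/2}$ with ``$k$ chosen to saturate $m$''; these choices are mutually inconsistent with $kd\approx n^{1/2}m^{1/4}$ (saturating gives $kd\approx m$, and your stated $d$ makes the jump cost $n/k$ collapse far below the target), and if one genuinely balances $n/k$ against $kd$ under the constraint $kd\le m$ one only recovers a bound of the form $\min\bigl(n^{1/2}\Deltaout^{1/2},n^{1/3}m^{1/3}\bigr)$, i.e., the old Bressan--Peserico--Pretto lower bound. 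The paper's construction avoids this by spending the edge budget differently: the $k$ children of $t$ (its set $\mathcal{V}$) each have in-degree $d$, and the $d$-fold dilution is provided by a \emph{shared} pool $\mathcal{W}$ of only $\max(d,k)$ dummy nodes of out-degree $d$, so the structure costs $\Theta\bigl(d\cdot\max(d,k)\bigr)=\Theta(d^2)$ edges; the binding constraint is thus $d\le\sqrt{m}$, which with $k=\Theta\bigl(\sqrt{n/d}\bigr)$ yields $\sqrt{nd}=n^{1/2}m^{1/4}$. Relatedly, your worry that each seed needs a nested $\Omega(d)$-size ``verification'' gadget is unnecessary: since the labels are randomly permuted, the special $\parent(\cdot,\cdot)$ slot is uniform among the $\Theta(kd)$ candidate slots, so the $\Omega(kd)$ cost is a cost of \emph{locating}, and cheap verification of a found candidate does not help (the paper notes exactly this). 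You also need the dilution (out-degree $d$ for the dummy parents, bounded by $\Deltaout$) to keep the visible mass entering $t$ at $\Theta(k/n)$ so the hidden mass is a constant fraction of $\pi(t)$; your ``$\Theta(1/(kd))$ per seed'' bookkeeping does not clearly deliver this.

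A second, smaller issue is the reduction to a YES/NO pair plus Yao with a $(1\pm c)$-gap for ``small enough $c$''. The statement requires ruling out estimation within an arbitrary constant factor with any constant success probability, and with only two alternatives a query-free guesser already succeeds with probability $1/2$; moreover a constant-factor gap between the two worlds does not defeat an $O(1)$-factor approximator. The paper handles this by building a graded family $\{H_i\}_{i=0}^{p}$ in which $i/p\cdot|\mathcal{Y}|$ feeder nodes point to the hidden parent $u_*$, so that the $p+1$ values $\pi_i(t)$ are pairwise separated by constant factors (amplifiable by scaling $|\mathcal{Y}|$), and an algorithm that does not detect $\{u_*\}\cup\mathcal{Y}$ succeeds with probability at most $1/(p+1)$. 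Your two-point argument can be repaired along the same lines, but as proposed it proves a weaker statement.
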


\noindent
The formal version of this theorem is given as Theorem~\ref{thm:SNPR_lower_bound} in Section~\ref{sec:SNPR_lower_bound}.
This lower bound improves over the best-known lower bound of $\Omega\left(\min\left(n^{1/2}\Deltaout^{1/2},n^{1/3}m^{1/3}\right)\right)$ given by \cite{bressan2018sublinear,bressan2023sublinear} and matches our upper bound of $O\left(n^{1/2}\cdot\min\left(\Deltaout^{1/2},m^{1/4}\right)\right)$ if we disregard $\Deltain$.
In fact, in \cite{bressan2018sublinear,bressan2023sublinear}, the authors only prove their lower bound when $\Deltaout=\Theta\left(n^{-1/3}m^{2/3}\right)$, and thus, their lower bound is not proved to hold for an arbitrary combination of $\Deltaout$ and $m$, and in particular it does not cover the case when $\Deltaout=o\left(n^{1/3}\right)$.
In contrast, we give tighter lower bounds with more rigorous arguments, covering all possibilities of the parameters.
Note that this lower bound also shows that the previous bound of $\tO\left(n^{1/2}\cdot\min\left(\Deltaout^{1/2},m^{1/4}\right)\right)$ given by \cite{bressan2023sublinear} for query complexity is already tight up to polylogarithmic factors.
Next, the following theorem further incorporates $\Deltain$ into the lower bound.

\begin{theorem}[Informal] \label{thm:SNPR_lower_bound_Delta_in_informal}
    If $\min(\Deltain,\Deltaout)=\Omega\left(n^{1/3}\right)$, then $\Omega\left(n^{1/2}\cdot\min\left(\Deltain^{1/2},\Deltaout^{1/2},m^{1/4}\right)\right)$ queries are in general required to estimate $\pi(t)$ within a multiplicative factor of $O(1)$ w.p. $\Omega(1)$.

    If instead $\min(\Deltain,\Deltaout)=o\left(n^{1/3}\right)$, then $\Omega\left(n^{1/2-\smallexpo}\big(\min(\Deltain,\Deltaout)\big)^{1/2+\smallexpo}\right)$ queries are in general required to estimate $\pi(t)$ within a multiplicative factor of $O(1)$ w.p. $\Omega(1)$, where $\smallexpo>0$ is an arbitrarily small constant.
\end{theorem}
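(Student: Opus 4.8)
The plan is to apply Yao's minimax principle: I will construct a distribution $\mathcal{D}$ over instances $(G,t)$, all respecting the prescribed values of $n$, $m$, $\Deltain$, $\Deltaout$, that is an equal mixture of ``heavy'' instances with $\pi(t)\ge 2\theta$ and ``light'' instances with $\pi(t)\le\theta$ for a suitable threshold $\theta$, and then show that every deterministic algorithm making fewer than $q$ queries — where $q=\Omega\big(n^{1/2}\min(\Deltain^{1/2},\Deltaout^{1/2},m^{1/4})\big)$ in the first regime and $q=\Omega\big(n^{1/2-\smallexpo}(\min(\Deltain,\Deltaout))^{1/2+\smallexpo}\big)$ in the second — distinguishes heavy from light with probability bounded away from $1/2$. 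Since any constant-factor approximation of $\pi(t)$ that succeeds with probability $\Omega(1)$ must distinguish the two types, this yields the theorem. Because single-node PageRank estimation is allowed to use the $\jump()$ query, the construction must in addition make a uniformly random node statistically uninformative about the heavy/light bit.

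The instance refines the construction underlying Theorem~\ref{thm:SNPR_lower_bound_informal} by adding a gadget that forces small in-degrees. It has three ingredients. First, a \emph{reservoir} of $\Theta(n)$ nodes, wired so that an $\alpha$-discounted walk from a uniformly random reservoir node enters, with probability $\Theta(1)$ in the heavy case, a small structure placed just behind $t$; this discrepancy is what creates the constant-factor gap in $\pi(t)$. Second, a \emph{core}: a layered structure of depth $d$ sitting behind $t$ in which every node has in- and out-degree $O(\min(\Deltain,\Deltaout))$, so that the backward frontier of $t$ at depth $d$ is a pool of $\Theta\big((\min(\Deltain,\Deltaout))^{d}\big)$ ``slots''. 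Third, a \emph{planted link}: in the heavy case a single randomly chosen slot of the core is connected to the reservoir, while in the light case it is connected to a dead end; all other edges are fixed decoys that make the non-planted parts of heavy and light instances identical as labeled graphs, and all degrees are padded with inert dummy structure so that no degree query ever exposes the bit. Detecting the planted link costs $\Theta\big((\min(\Deltain,\Deltaout))^{d}\big)$ by backward search from $t$, and by forward (Monte Carlo) search it amounts to hitting the one correct slot among the whole pool; the number of core edges is simultaneously capped at $m$, which is the source of the $m^{1/4}$ branch exactly as in the proof of Theorem~\ref{thm:SNPR_lower_bound_informal}.

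The heart of the analysis is an indistinguishability lemma proved by coupling the algorithm's execution on a heavy and a light instance under a shared choice of the planted slot: the two transcripts coincide up to the first query that ``touches'' the planted link, and since each of the $q$ queries touches it with probability $O(1/q)$ by the choice of parameters, a union bound bounds the total variation distance between the two transcript distributions by a constant $<1$. The parameter bookkeeping then splits into the two regimes. When $\min(\Deltain,\Deltaout)=\Omega(n^{1/3})$, a constant depth $d$ already makes the core large enough to absorb the reservoir's probability mass while the discounting factor $(1-\alpha)^{d}$ remains $\Theta(1)$, and balancing the backward cost $(\min(\Deltain,\Deltaout))^{d}$ against the forward cost and the $m$-cap gives precisely $n^{1/2}\min(\Deltain^{1/2},\Deltaout^{1/2},m^{1/4})$. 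When $\min(\Deltain,\Deltaout)=o(n^{1/3})$, a constant-depth core is too small, so $d$ must be taken super-constant; each additional layer multiplies the heavy-case hitting probability by $(1-\alpha)$, which we offset by shrinking the effective reservoir, at a polynomial price $n^{\Theta(\smallexpo)}$ controlled by how many extra layers are used. Taking $d$ as small as the size requirement permits then yields the claimed bound $\Omega\big(n^{1/2-\smallexpo}(\min(\Deltain,\Deltaout))^{1/2+\smallexpo}\big)$ for every constant $\smallexpo>0$.

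I expect the main obstacle to lie in the second regime: engineering the layered core with sub-polynomial degrees so that (a) the heavy/light gap in $\pi(t)$ stays a constant after the $(1-\alpha)^{d}$ discounting penalty is paid, (b) neither the extra layers nor the degree padding create a shortcut — an anomalous degree, or a short backward path the algorithm chances upon — that reveals the planted link well below the target query bound, and (c) the unavoidable polynomial loss is exactly $n^{O(\smallexpo)}$ and can be driven to any positive constant $\smallexpo$ rather than leaving a fixed gap. A secondary point is robustness to $\jump()$: the reservoir nodes and the decoy nodes must all be made mutually indistinguishable, which the symmetric placement of decoys and the inert padding are designed to ensure, so that random sampling gives the algorithm no extra power.
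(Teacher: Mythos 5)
Your overall plan (a family of hard instances that hide a planted connection between a small reservoir and the structure behind $t$, plus an indistinguishability argument that each query touches the planted part with probability $O(1/q)$) is in the spirit of the paper's proof, but your core gadget cannot deliver the claimed quantitative bounds. You hide the planted link at one of $S=\Theta\big(\Delta^{d}\big)$ frontier \emph{nodes} of a $\Delta$-ary layered core, where $\Delta=\min(\Deltain,\Deltaout)$, so the backward cost is $S$. But every frontier node carries base mass $\alpha/n$ and is attenuated toward $t$ by exactly the same factor $a$ as the planted mass, so the light-case baseline is already $\Omega(Sa/n)$, while a reservoir of $R$ nodes attached to the planted slot adds only $O(Ra/n)$; a constant multiplicative gap in $\pi(t)$ therefore forces $R=\Omega(S)$, and since the revealing set has size about $R$, the $\jump()$ side costs only $O(n/R)=O(n/S)$. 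Hence your construction proves at best $\Omega\big(\min(S,n/S)\big)=\Omega\big(\sqrt{n}\big)$, which falls short of $n^{1/2}\Delta^{1/2}$ in the first regime (where the target is at least $n^{2/3}$) and of $n^{1/2-\smallexpo}\Delta^{1/2+\smallexpo}$ whenever $\Delta=n^{\Omega(1)}$. The idea you are missing is that in the paper the hiding positions are \emph{edges}, not nodes: a single bipartite decoy layer $\mathcal{W}\to\mathcal{V}$ in which each node of $\mathcal{V}$ has in-degree $d$ and each decoy in $\mathcal{W}$ has out-degree $d$, so there are $d|\mathcal{V}|$ parent-query slots but only $\Theta(|\mathcal{V}|)$ distinct decoy nodes, and each decoy's mass is diluted by its out-degree $d$ while the special parent $u_{*}$ has out-degree $1$ and is not diluted. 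This decouples the number of hiding slots ($d|\mathcal{V}|\approx\sqrt{n\Delta}$) from the mass reaching $t$ ($\Theta(|\mathcal{V}|/n)$), which is exactly what lets the backward cost and the $\jump()$ cost both be made $\Theta\big(\sqrt{n\Delta}\big)$ simultaneously.

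Your treatment of the small-$\Delta$ regime also diverges from what actually works. The paper does not deepen the decoy layer; it keeps the single bipartite layer and inserts constant-in-degree funnels of branching $k/(1-\alpha)$ and logarithmic depth between $\mathcal{V}$ and $t$ and between the reservoir $\mathcal{Y}$ and $u_{*}$, so the attenuation is only $|\mathcal{V}|^{-\Theta(1/\log k)}$; the loss $\smallexpo$ is controlled by the branching constant $k$ (taken large), not by the number of extra layers, and the attenuation is compensated by \emph{enlarging} $\mathcal{Y}$ to $|\mathcal{V}|^{1-\log_{k}(1-\alpha)}$ — your proposal to "offset by shrinking the effective reservoir" goes in the wrong direction, since shrinking the reservoir further reduces the heavy-case contribution and widens the deficit. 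A secondary issue: a two-point heavy/light mixture only defeats algorithms that succeed with probability greater than $1/2$, whereas the statement allows success probability merely $\Omega(1)$ (formally $1/p$); the paper handles this by using $p+1$ instances $H_0,\dots,H_p$ with pairwise constant-factor-separated values of $\pi(t)$, so that an algorithm that never locates the planted portion is correct with probability at most $1/(p+1)$.
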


\noindent
The formal version of this theorem is given as Theorem~\ref{thm:SNPR_lower_bound_Delta_in} in Section~\ref{sec:SNPR_lower_bound}.
These lower bounds refute the existence of an upper bound that is polynomially smaller than $n^{1/2}\cdot\min\left(\Deltain^{1/2},\Deltaout^{1/2},m^{1/4}\right)$ and demonstrate that our upper bound of $O\left(n^{1/2}\cdot\min\left(\Deltain^{1/2},\Deltaout^{1/2},m^{1/4}\right)\right)$ is optimal up to subpolynomial factors.
As a special case, when $\Deltain=\Theta(1)$ and $\Deltaout=\Theta(1)$, our upper and lower bounds become $O\left(n^{1/2}\right)$ and $\Omega\left(n^{1/2-\smallexpo}\right)$ for any constant $\smallexpo>0$, resp.
Our results bridge the gap between the previously known upper and lower bounds for single-node PageRank estimation under the arc-centric graph-access model.

\subsection{Technical Overview}

Our techniques and analyses for deriving the results are surprisingly simple.
Our algorithms for proving the upper bounds are basically the original \bpush~\cite{andersen2008local} and \BiPPR~\cite{lofgren2016personalized} algorithms, save that we may use different parameter settings and apply the \textit{doubling} technique to invoke them several times.
The \bpush algorithm works by performing a sequence of \textit{pushback} operations on some nodes to propagate the probability mass backward along the incoming edges, and the \BiPPR algorithm estimates single-node PageRank by combining \bpush with the standard Monte Carlo simulation method, which simulates a number of $\alpha$-discounted random walks on the graph.

By carefully reanalyzing the complexity of \bpush, we find a simple and short proof for the tight upper bound of its worst-case complexity, which has remained undiscovered for more than a decade.
We then apply \bpush to detect the contributing set through an \textit{adaptive} setting of its parameters using the doubling technique, where we exploit the property of the pushback operations to determine the stopping rule.
For single-node PageRank estimation, \BiPPR uses the key \textit{invariant} property of \bpush to construct a low-variance \textit{bidirectional estimator} of $\pi(t)$ as a linear combination of the Monte Carlo estimators.
\BiPPR simply generates several samples of the bidirectional estimator by simulating random walks and computes their average as the final estimate.
We revisit \BiPPR by bounding the variance of the bidirectional estimator to reanalyze its concentration behavior and re-balancing the cost of \bpush and Monte Carlo simulation, leading to a tight upper bound for \BiPPR's computational complexity.

Remarkably, our algorithm and analyses for single-node PageRank estimation are both dramatically simpler than those given by \cite{bressan2018sublinear,bressan2023sublinear}, where the authors devise a complex local exploration method to construct a \textit{perfect weighted estimator} with desirable concentration behavior.
Our analyses show that the simple \bpush algorithm provides an elegant way to construct a ``more perfect'' bidirectional estimator using minimal cost.
Therefore, the complex local exploration method in \cite{bressan2018sublinear,bressan2023sublinear} is indeed unnecessary.
We give more detailed discussions on this at the end of Section~\ref{sec:SNPR_upper_bound}.
This comparison further showcases the power of \bpush: it is more than an optimal ``black-box'' algorithm for estimating PageRank contributions; instead, it also provides excellent properties that enable it to be seamlessly combined with other techniques.

As for lower bounds, we construct carefully designed hard-instance graphs and argue the least number of queries the algorithm needs to invoke to solve the problem on them.
For detecting the contributing set, we construct hard-to-detect nodes that lie in the target contributing set; for single-node PageRank estimation, we construct a hard-to-detect portion of the graph that largely determines the magnitude of $\pi(t)$.
The underlying difficulty comes from detecting a special unseen parent of a node among a number of its parents under the arc-centric graph-access model, even if the remaining parents of the node are already explored by the algorithm through other nodes.
Thus, the algorithm must inspect a large number of edges on the hard-instance graphs.
Our proofs cover all possible relationships among the considered parameters, including $\Deltain$, $\Deltaout$, and $m$.
In particular, when $\Deltain$ is small, we embed a \textit{multi-level structure} with constant maximum in-degree into the hard-instance graphs to replace nodes with large in-degrees.

\header{\textbf{Extensions.}}
Given the numerous extensions and applications of \bpush over the years~\cite{banerjee2015fast,wang2016hubppr,zhang2016approximate,guo2017parallel,wang2018efficient,wei2018topppr,wei2019prsim,yin2019scalable,chen2020scalable,wang2021approximate,liao2022efficient,zheng2022instant,mo2023single}, our results and techniques can potentially be applied to reanalyze various relevant algorithms for computing random-walk probabilities, giving tighter bounds for other problems beyond PageRank computation.
A notable extension of \bpush is \cite{banerjee2015fast}, which modifies the algorithm slightly to estimate \textit{Markov-chain multi-step transition probabilities} and \textit{graph diffusions}, including the \textit{Heat Kernel} centrality~\cite{chung2007heat}.
Their modification partitions the probability-pushing process of \bpush into different levels but remains a similar analytical framework. By analyzing the complexity of running \bpush level by level, one can potentially apply our results and techniques to derive a tighter upper bound for estimating multi-step transition probabilities from a source node to a target node. We also remark that the techniques and arguments for establishing our lower bounds can be extended to the problem of computing multi-step transition probabilities under the arc-centric graph-access model.

\subsection{Paper Organization}

The remainder of this paper is organized as follows.
Section~\ref{sec:preliminaries} provides preliminaries and notations, including descriptions of the considered models and known results for the \bpush algorithm.
Section~\ref{sec:related_work} discusses some related work.
We present the proofs of our main results in Sections~\ref{sec:contribution_upper_bound} to \ref{sec:SNPR_lower_bound}, where we prove all the upper bounds before delving into the lower bounds to ensure a clear and logical progression.
In particular, Sections~\ref{sec:contribution_upper_bound} and \ref{sec:SNPR_upper_bound} prove the upper bounds for estimating PageRank contributions and single-node PageRank, resp., and Sections~\ref{sec:contribution_lower_bound} and \ref{sec:SNPR_lower_bound} prove the lower bounds for estimating PageRank contributions and single-node PageRank, resp.
Furthermore, we offer a table of notations along with pseudocodes in Appendix~\ref{sec:table_notations}, and give the deferred proofs in Appendix~\ref{sec:deferred_proofs}.

\section{Preliminaries and Notations} \label{sec:preliminaries}

We denote the underlying graph by $G=(V,E)$ with $n=|V|$ and $m=|E|$.
If $(u,v)\in E$, we say that $u$ is a parent of $v$ and $v$ is a child of $u$.
We use $\Nin(v)$ and $\Nout(v)$ to represent the set of parents and children of a node $v$, resp.
We denote by $\din(v)$ and $\dout(v)$ the in-degree and out-degree of $v$, resp.
We define $\Deltain=\max_{v\in V}\din(v)$ and $\Deltaout=\max_{v\in V}\dout(v)$.
We consider random walks that move to a uniformly random child of the current node at each step.

\subsection{PageRank and PPR} \label{sec:preliminaries_PageRank}

It is well-known that the \textit{PageRank score} of $v$, denoted by $\pi(v)$, equals the probability that an \textit{$\alpha$-discounted random walk} from a uniformly random source node in $V$ terminates at $v$~\cite{fogaras2005towards,avrachenkov2007monte}.
Here, $\alpha$ is a constant \textit{decay factor} in $(0,1)$, and an $\alpha$-discounted random walk is defined as a random walk whose length is a random variable that takes on value $\ell$ w.p. $\alpha(1-\alpha)^{\ell}$ for each $\ell\ge0$.
By definition, the following recursive equality holds for each $v\in V$:
\begin{align} \label{eqn:iterative_pagerank}
    \pi(v)=\sum_{u\in \Nin(v)}\frac{(1-\alpha)\cdot \pi(u)}{\dout(u)}+\frac{\alpha}{n}.
\end{align}
The \textit{Personalized PageRank (PPR) score}~\cite{brin1998anatomy} of a target node $v$ w.r.t. a source node $u$, denoted by $\pi(u,v)$, equals the probability that an $\alpha$-discounted random walk from $u$ terminates at $v$.
It is straightforward that $\pi(v)=1/n\cdot\sum_{u\in V}\pi(u,v)$ for any $v$.

\subsection{PageRank Contributions and the Contributing Set} \label{sec:preliminaries_contribution}

The ``\textit{PageRank contribution} of node $v$ to target node $t$'' is a synonym of the ``PPR score of $t$ w.r.t. $v$,'' i.e., the value of $\pi(v,t)$.
The \textit{PageRank contribution vector} of $t$ is the $n$-dimensional vector recording each node $v$'s contribution to $t$.
The \textit{$\rmax$-approximation} of the PageRank contribution vector for $t$ is a (sparse) vector whose difference from the contribution vector is at most $\rmax$ at each node.
The \textit{$\delta$-contributing set} of $t$ is defined as the node set $\big\{v\in V\mid\pi(v,t)\ge\delta n\pi(t)\big\}$.
These terminologies come from the seminal paper~\cite{andersen2008local} (note that their notation $\text{pr}(t)$ corresponds to our $n\pi(t))$.

\subsection{The Graph-Access Model} \label{sec:preliminaries_model}

For our results, we focus on the \textit{computational complexity} under the standard RAM model.
To clarify what operations for interacting with the graph are available, we enable the local algorithms to access the underlying graph $G$ through a graph oracle.
We consider the standard \textit{arc-centric graph-access model}~\cite{goldreich1998property,goldreich2002property}, where the graph oracle supports the following queries in unit time: $\indeg(v)$, which returns $\din(v)$; $\outdeg(v)$, which returns $\dout(v)$; $\parent(v,i)$, which returns the $i$-th parent of $v$; $\child(v,i)$, which returns the $i$-th child of $v$; $\jump()$, which returns a node in $V$ chosen uniformly at random.
We call $\jump()$ the \textit{global query} and others the \textit{local queries}.
This model is also used by the relevant works of \cite{lofgren2014fast,lofgren2015bidirectional,lofgren2016personalized,bressan2018sublinear,bressan2023sublinear}.
When considering local exploration algorithms for computing PageRank contributions, we only allow the local queries, as explicitly specified in the given theorems.

\header{\textbf{Query Complexity.}}
While this paper mainly concentrates on computational complexity, we explain \textit{query complexity} here for clarity.
In the context of local graph algorithms, the query complexity is defined as the number of queries the algorithm invokes to the graph oracle.
Thus, query complexity is a lower bound to computational complexity, and our upper and lower bounds straightforwardly apply to both computational and query complexities.

\header{\textbf{Node-Centric Graph-Access Model.}}
In our discussions, we also mention the \textit{node-centric graph-access model} as considered in \cite{bar2008local,bressan2013power,bressan2018sublinear,bressan2023sublinear}.
This model allows two queries: the $\jump()$ query and the powerful $\neigh(v)$ query that returns all parents and children of $v$ in one shot.
Clearly, $n$ queries are always sufficient to explore the whole graph under this model, and the query complexity under this model is never larger than that under the arc-centric model.

\subsection{Known Results for \bpush} \label{sec:preliminaries_bpush}

This subsection introduces some previously known results for \bpush~\cite{andersen2008local} and points out their major drawback.
Overall, \bpush approximates PageRank contributions from below by performing a sequence of \textit{pushback} operations on some nodes, pushing probability mass backward along their incoming edges.
In fact, \bpush is similar to the celebrated forward exploration algorithm \fpush~\cite{andersen2006local,andersen2007pagerank}, which also works by performing a sequence of probability-pushing operations but in the forward direction.

We give the pseudocode of \bpush as Algorithm~\ref{alg:BP} on the next page.
The algorithm takes as input the target node $t$ and a threshold $\rmax\in(0,1]$.
It maintains two vectors $\epib{}$ and $\rb{}$ for nodes in $V$ (implemented by dictionaries), which we respectively call the \textit{reserve} vector and the \textit{residue} vector for ease of reference.
After initializing the vectors to all zeros except that $\rb{t}=1$, the algorithm repeatedly performs pushback operations (to be explained shortly) on nodes $v$ with $\rb{v}>\rmax$.
When $\rb{v}\le\rmax$ holds for each $v$, \bpush terminates and returns $\epib{}$ and $\rb{}$, where the reserve vector $\epib{}$ contains underestimates for the PageRank contributions of $t$ with absolute error at most $\rmax$ at each node.
In a pushback operation on $v$, the algorithm transfers $\alpha$ fraction of the residue $\rb{v}$ to its reserve $\epib{v}$, propagates the remaining residue to each parent of $v$, and resets $\rb{v}$ to $0$.
When the residue of $v$ is propagated to its parent $u$, $\rb{u}$ is increased by $(1-\alpha)\rb{v}/\dout(u)$.
We emphasize that \bpush does not specify any particular order of the pushback operations.

\begin{algorithm}[ht]
    \DontPrintSemicolon
    \caption{$\bpush(t,\alpha,\rmax)$~\cite{andersen2008local}} \label{alg:BP}
    \KwIn{target node $t\in V$, decay factor $\alpha$, threshold $\rmax$}
    \KwOut{dictionary $\epib{}$ for reserves and dictionary $\rb{}$ for residues}
    $\epib{},\rb{}\gets$ empty dictionaries with default value $0$ \;
    $\rb{t}\gets1$ \;
    \While{\textup{there exists a node} $v$ \textup{with} $\rb{v}>\rmax$}
    {
        pick an arbitrary node $v$ with $\rb{v}>\rmax$ \;
        $r\gets \rb{v}$ $\quad\quad\quad$ \textcolor{gray}{// temporary variable for storing $\rb{v}$} \; \label{line:push_begin}
        $\epib{v}\gets\epib{v}+\alpha r$ \;
        $\rb{v}\gets0$ \;
        \For{$i$ \textbf{\textup{from}} $1$ \textbf{\textup{to}} $\textsc{indeg}(v)$}
        {
            $u\gets\textsc{parent}(v,i)$ \;
            $\rb{u}\gets\rb{u}+(1-\alpha)r/\textsc{outdeg}(u)$ \; \label{line:push_end}
        }
    }
    \Return $\epib{}$ and $\rb{}$ \;
\end{algorithm}

The key property of \bpush is that the following \textit{invariant} is maintained by the pushback operations.

\begin{lemma}[\protect{Invariant~\cite[Lemma 1]{andersen2008local}}] \label{lem:invariant_backward}
    For the target node $t$, the pushback operations in the \bpush algorithm maintain the following invariant for each $s\in V$:
    \begin{align} \label{eqn:invariant_backward}
        \pi(s,t)=\epib{s}+\sum_{v\in V}\pi(s,v)\cdot\rb{v}.
    \end{align}
\end{lemma}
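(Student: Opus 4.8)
The plan is to prove the invariant by induction on the sequence of pushback operations, showing that it holds at initialization and is preserved by each pushback. Let me write $\Phi(s) := \epib{s} + \sum_{v\in V}\pi(s,v)\cdot\rb{v}$ for the right-hand side; the goal is to show $\Phi(s) = \pi(s,t)$ for all $s\in V$ throughout the execution. First I would verify the base case: after initialization, $\epib{} \equiv 0$ and $\rb{}$ is the indicator vector of $t$, so $\Phi(s) = 0 + \sum_{v} \pi(s,v)\cdot \indicator{v=t} = \pi(s,t)$, as desired. (Here I use that $\pi(s,t)$ denotes the PPR score of $t$ w.r.t.\ $s$, i.e.\ the PageRank contribution of $s$ to $t$, consistent with Section~\ref{sec:preliminaries_contribution}.)

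For the inductive step, suppose the invariant holds before a pushback on some node $w$ with $\rb{w} > \rmax$, and let $r = \rb{w}$ be its residue at that moment. The pushback does three things: it adds $\alpha r$ to $\epib{w}$; it zeroes $\rb{w}$; and for each parent $u \in \Nin(w)$ it adds $(1-\alpha)r/\dout(u)$ to $\rb{u}$. Fix any $s\in V$ and track the change in $\Phi(s)$. The reserve term changes by $+\alpha r$. The residue sum changes by $-\pi(s,w)\cdot r$ from zeroing $\rb{w}$, and by $+\sum_{u\in\Nin(w)} \pi(s,u)\cdot (1-\alpha)r/\dout(u)$ from the updates to the parents. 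Hence the net change is
\begin{align*}
  \Delta\Phi(s) = \alpha r - \pi(s,w)\,r + (1-\alpha)\,r\sum_{u\in\Nin(w)}\frac{\pi(s,u)}{\dout(u)}.
\end{align*}
So it suffices to show this equals $0$, i.e.\ that $\pi(s,w) = \alpha + (1-\alpha)\sum_{u\in\Nin(w)}\pi(s,u)/\dout(u)$ when $s = w$, and more generally that the bracketed identity holds for all $s$. The key observation is that the PPR scores satisfy the ``reverse'' recurrence
\begin{align*}
  \pi(s,w) = \alpha\cdot\indicator{s=w} + (1-\alpha)\sum_{u\in\Nin(w)}\frac{\pi(s,u)}{\dout(u)},
\end{align*}
which is the analogue of \eqref{eqn:iterative_pagerank} for contributions: an $\alpha$-discounted walk from $s$ ends at $w$ either by terminating immediately at $s$ (contributing $\alpha$ if $s=w$) or by first stepping from $s$ to a child and thereafter ending at $w$; re-indexing over the first step lands on the parents $u$ of $w$ weighted by $1/\dout(u)$. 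Substituting this into $\Delta\Phi(s)$: if $s\neq w$, the indicator vanishes and $\Delta\Phi(s) = r\big(\alpha - \pi(s,w) + (1-\alpha)\sum_{u\in\Nin(w)}\pi(s,u)/\dout(u)\big)\cdot$—wait, I must be careful: for $s \neq w$ the recurrence gives $\pi(s,w) = (1-\alpha)\sum_{u}\pi(s,u)/\dout(u)$, so $\Delta\Phi(s) = \alpha r \neq 0$. This signals that I should instead verify the recurrence with the indicator kept general, so that $\Delta\Phi(s) = r\big(\alpha\cdot\indicator{s=w} - \pi(s,w) + (1-\alpha)\sum_{u}\pi(s,u)/\dout(u)\big)$; but the $\alpha\,r$ term in the reserve update is added to $\epib{w}$ unconditionally, independent of $s$, so the correct bookkeeping is $\Delta\Phi(s) = \alpha r - \pi(s,w) r + (1-\alpha) r \sum_{u}\pi(s,u)/\dout(u)$, and this is $r\cdot\alpha(1 - \indicator{s=w})$ after applying the recurrence—which is nonzero for $s\neq w$. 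The resolution, and the genuinely delicate point, is that I have the direction of the PPR recurrence backwards: the correct identity to use is the \emph{forward} decomposition in the \emph{second} argument, namely $\pi(s,w)$ expanded over the last step into $w$, which reads $\pi(s,w) = \alpha\,\indicator{s=w} + (1-\alpha)\sum_{u\in\Nin(w)}\pi(s,u)/\dout(u)$ only when we are careful that the immediate-termination term sits at $s=w$; re-examining, for $s\neq w$ we genuinely get $\pi(s,w)=(1-\alpha)\sum_u \pi(s,u)/\dout(u)$ and then $\Delta\Phi(s)=\alpha r\ne 0$, which is impossible, so in fact the reserve update must be interpreted as contributing to $\Phi(s)$ only through the single coordinate $\epib{w}$—meaning I need $\indicator{s=w}$ to come from somewhere. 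It does: the term $-\pi(s,w)r$ should be re-expanded keeping the $\alpha\,\indicator{s=w}$ piece, giving $-\alpha r\,\indicator{s=w} - (1-\alpha)r\sum_u\pi(s,u)/\dout(u)$, and adding $+\alpha r$ (reserve) $+ (1-\alpha)r\sum_u\pi(s,u)/\dout(u)$ (parents) yields $\Delta\Phi(s) = \alpha r(1-\indicator{s=w})$.

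The main obstacle is therefore reconciling this: the clean statement is that $\Phi$ is preserved, so the arithmetic must cancel to zero for \emph{every} $s$, and the way to make it cancel is to recognize that $\epib{w}$'s increment of $\alpha r$ is exactly matched by the $\alpha r\,\indicator{s=w}$ piece of $\pi(s,w)$ only in the coordinate $s=w$, while for $s\neq w$ the residue term $\pi(s,w)r$ fully equals $(1-\alpha)r\sum_u\pi(s,u)/\dout(u)$ and—crucially—I have miscounted: the reserve increment $\alpha r$ lands in $\epib{w}$, which appears in $\Phi(s)$ \emph{only if we are asking about $\Phi(w)$}. No: $\Phi(s) = \epib{s} + \cdots$ involves $\epib{s}$, not $\epib{w}$. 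So for $s\neq w$ the reserve update does not affect $\Phi(s)$ at all, the change is purely $\Delta\Phi(s) = -\pi(s,w)r + (1-\alpha)r\sum_u\pi(s,u)/\dout(u) = 0$ by the recurrence; and for $s=w$, the reserve changes by $\alpha r$, the residue sum changes by $-\pi(w,w)r + (1-\alpha)r\sum_u\pi(w,u)/\dout(u) = -\alpha r$ by the recurrence with the indicator active, so $\Delta\Phi(w) = \alpha r - \alpha r = 0$. This is the argument; the ``obstacle'' is simply tracking which coordinate the reserve update touches, and the whole proof reduces to establishing the reverse PPR recurrence $\pi(s,w) = \alpha\,\indicator{s=w} + (1-\alpha)\sum_{u\in\Nin(w)}\pi(s,u)/\dout(u)$, which I would derive by conditioning an $\alpha$-discounted walk from $s$ on whether it has length $0$ and, if not, on its penultimate node (equivalently, running \eqref{eqn:iterative_pagerank}-style reasoning on the personalized distribution). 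Summing the preserved per-step identity over all pushbacks and combining with the base case completes the induction and hence the lemma.
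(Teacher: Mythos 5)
Your final argument is correct and is essentially the standard proof of this invariant from the cited source \cite{andersen2008local} (the paper itself does not reprove the lemma): induction over the pushback operations, with the base case $\epib{}\equiv 0$, $\rb{t}=1$ giving $\Phi(s)=\pi(s,t)$, and preservation following from the last-step recurrence $\pi(s,w)=\alpha\,\indicator{s=w}+(1-\alpha)\sum_{u\in\Nin(w)}\pi(s,u)/\dout(u)$ once you account for the fact that the reserve increment $\alpha r$ enters $\Phi(s)$ only in the coordinate $s=w$. The meandering middle portion (where $\alpha r$ was briefly charged to every $s$) is a false start that you do resolve, and the final bookkeeping $\Delta\Phi(s)=\alpha r\,\indicator{s=w}-\alpha r\,\indicator{s=w}=0$, together with the probabilistic derivation of the recurrence by conditioning on the penultimate step, is sound.
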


\noindent
The following theorem summarizes some previously known properties of \bpush.

\begin{theorem}[\protect{\cite[Theorem 1]{andersen2008local}}] \label{thm:properties_bp}
    $\bpush(t,\alpha,\rmax)$ computes nonnegative reserves $\epib{}$ such that $\pi(v,t)-\rmax\le\epib{v}\le\pi(v,t)$ holds for any $v\in V$ by performing $O\big(n\pi(t)/\rmax\big)$ pushback operations.
\end{theorem}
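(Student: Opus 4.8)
I would prove the three assertions in turn---nonnegativity of the returned reserves, the two-sided error bound, and the $O(n\pi(t)/\rmax)$ bound on the number of pushback operations---with the invariant of Lemma~\ref{lem:invariant_backward} as the main tool. Nonnegativity comes first, by induction on the sequence of pushbacks. Initially $\rb{t}=1$ and all other entries of $\rb{}$ and $\epib{}$ are $0$, so nonnegativity holds. A pushback is performed on a node $v$ only when $\rb{v}>\rmax>0$; it increases $\epib{v}$ by $\alpha\,\rb{v}\ge0$, sets $\rb{v}$ to $0$, and increases $\rb{u}$ by $(1-\alpha)\,\rb{v}/\dout(u)\ge0$ for each parent $u$ of $v$. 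Hence no entry of $\rb{}$ or $\epib{}$ ever drops below $0$, and in particular the returned $\epib{}$ is nonnegative.

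Next I would extract the error bound from the invariant. Lemma~\ref{lem:invariant_backward} holds initially (since $\epib{}\equiv0$, $\rb{t}=1$, and $\pi(v,t)=\pi(v,t)\cdot1$) and is preserved by every pushback, so at all times and for every $v\in V$,
\begin{align*}
\epib{v}=\pi(v,t)-\sum_{w\in V}\pi(v,w)\cdot\rb{w}.
\end{align*}
Every summand is nonnegative, which gives $\epib{v}\le\pi(v,t)$. For the matching lower bound I would use that an $\alpha$-discounted walk from $v$ almost surely terminates at some node (each node has positive out-degree and the walk length is a.s. finite), so $\sum_{w\in V}\pi(v,w)=1$; hence at termination, where $\rb{w}\le\rmax$ for every $w$, the subtracted sum is at most $\rmax\sum_{w\in V}\pi(v,w)=\rmax$, yielding $\epib{v}\ge\pi(v,t)-\rmax$.

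Finally I would bound the number of pushback operations by a potential argument on $\Phi:=\sum_{v\in V}\epib{v}$. Summing $\epib{v}\le\pi(v,t)$ over $v$ and using $\pi(t)=\frac1n\sum_{v\in V}\pi(v,t)$ shows $\Phi\le n\pi(t)$ throughout the execution. Since $\Phi$ is nondecreasing and a pushback on a node with residue exceeding $\rmax$ raises $\Phi$ by $\alpha\,\rb{v}>\alpha\rmax$, there can be fewer than $n\pi(t)/(\alpha\rmax)$ pushbacks; treating $\alpha$ as a constant gives the claimed $O(n\pi(t)/\rmax)$ bound, and also shows that the algorithm terminates. I expect this last step to need the most care: because \bpush prescribes no order on the pushback operations, the argument must be order-independent, which is precisely why tracking the globally bounded, monotone quantity $\Phi=\sum_{v\in V}\epib{v}$---rather than reasoning about how individual residues propagate---is the right move; the remaining ingredients (the invariant, nonnegativity, and $\sum_{w\in V}\pi(v,w)=1$) are routine.
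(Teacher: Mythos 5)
Your proposal is correct and follows essentially the same argument as the cited source \cite[Theorem 1]{andersen2008local} (this paper states the result without reproving it): nonnegativity by induction, the two-sided error bound from the invariant of Lemma~\ref{lem:invariant_backward} together with $\sum_{w\in V}\pi(v,w)=1$ and $\rb{w}\le\rmax$ at termination, and the pushback count via the monotone potential $\sum_{v\in V}\epib{v}\le n\pi(t)$ increasing by more than $\alpha\rmax$ per operation. This is also consistent with the per-node version of the same potential argument ($\SP(v)\le\pi(v,t)/(\alpha\rmax)$) that the paper uses in its proof of Theorem~\ref{thm:previous_cost_bp}.
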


\noindent
We remark that the number of pushback operations performed offers an upper bound on the number of nodes $v$ with nonzero $\epib{v}$.

Most importantly, the original paper~\cite{andersen2008local} does not give a detailed computational complexity bound of \bpush.
Instead, they only state that its complexity is bounded by the sum of the in-degrees of the sequence of nodes on which the pushback operations are performed.
Nevertheless, we have the following complexity bounds of \bpush revealed by subsequent works.

\begin{theorem}[\cite{lofgren2013personalized,wang2020personalized}] \label{thm:previous_cost_bp}
    The complexity of running \bpush for a target node $t$ with parameter $\rmax\in(0,1]$ is bounded by both
    \begin{align} \label{eqn:previous_cost_bp}
        O\left(\frac{1}{\rmax}\sum_{v\in V}\pi(v,t)\cdot\din(v)\right)\quad\text{and}\quad O\left(\frac{1}{\rmax}\sum_{v\in V}\pi(v,t)\cdot\dout(v)\right).
    \end{align}
\end{theorem}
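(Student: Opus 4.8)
The plan is to reduce everything to counting, for each node $v$, the number $n_v$ of pushback operations that \bpush performs on $v$. A single pushback on $v$ costs $\Theta(\din(v)+1)$ under the arc-centric model (read $\din(v)$, update $\epib{v}$, and iterate over the $\din(v)$ parents of $v$), and since \bpush imposes no order on the pushbacks, the set of nodes whose residue currently exceeds $\rmax$ can be kept in a plain stack or queue, contributing only $O(1)$ amortized overhead per edge processed and hence no extra logarithmic factors. Thus the total running time is $O\!\big(\sum_{v\in V} n_v(\din(v)+1)\big)$, and both displayed bounds will follow from a single estimate on $n_v$ by charging the dominant term $\sum_v n_v\din(v)$ in two different ways.

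The crucial estimate is $n_v \le \pi(v,t)/(\alpha\rmax)$ for every $v\in V$. Residues and reserves stay nonnegative throughout the run, so instantiating the invariant~\eqref{eqn:invariant_backward} at $s=v$ gives $\epib{v}\le\pi(v,t)$ at all times; on the other hand each pushback on $v$ requires $\rb{v}>\rmax$ at that moment and therefore increases the (monotone non-decreasing) reserve $\epib{v}$ by $\alpha\,\rb{v}>\alpha\rmax$. Hence after $n_v$ pushbacks $\epib{v}>n_v\alpha\rmax$, which forces $n_v<\pi(v,t)/(\alpha\rmax)$; recall $\alpha$ is a fixed constant. Summing $n_v\din(v)$ over $v$ then gives $O\!\big(\tfrac1\rmax\sum_v\pi(v,t)\din(v)\big)$, and the $\Theta(1)$-per-pushback term is absorbed because every pushed node other than $t$ must have received residue from a parent and so has $\din\ge1$ (the corner case where $t$ itself has in-degree $0$, in which the whole computation costs $O(1)$, is harmless).

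For the out-degree bound I would re-attribute the cost along edges: each unit of $\sum_v n_v\din(v)$ corresponds to processing some edge $(u,v)\in E$ during a pushback on $v$, so $\sum_v n_v\din(v)=\sum_{(u,v)\in E} n_v=\sum_{u\in V}\sum_{v\in\Nout(u)}n_v$. Inserting $n_v\le\pi(v,t)/(\alpha\rmax)$ and then invoking the PPR recursion $\pi(u,t)=\alpha\,\indicator{u=t}+(1-\alpha)\tfrac1{\dout(u)}\sum_{v\in\Nout(u)}\pi(v,t)$ — equivalently $\sum_{v\in\Nout(u)}\pi(v,t)\le\tfrac{\dout(u)}{1-\alpha}\pi(u,t)$ — collapses the double sum to $O\!\big(\tfrac1\rmax\sum_u\dout(u)\pi(u,t)\big)$. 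The additive $\Theta(1)$-per-pushback term is again absorbed, now using the standing assumption that every node has $\dout\ge1$, so that the total number of pushbacks $O(n\pi(t)/\rmax)=O\!\big(\tfrac1\rmax\sum_v\pi(v,t)\big)$ is itself $O\!\big(\tfrac1\rmax\sum_v\dout(v)\pi(v,t)\big)$. The whole argument is short; the only slightly delicate point is the edge re-attribution combined with the one-line use of the PPR recursion, which is precisely what allows an out-degree-based bound even though a single pushback's cost is dictated by the in-degree.
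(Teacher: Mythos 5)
Your proposal is correct and follows essentially the same route as the paper's proof in Appendix~\ref{sec:deferred_proofs}: your $n_v$ is the paper's $\SP(v)$, your key estimate $n_v\le\pi(v,t)/(\alpha\rmax)$ is derived identically from the monotone reserve and the invariant, and your edge re-attribution $\sum_v n_v\din(v)=\sum_{u}\sum_{v\in\Nout(u)}n_v$ followed by the PPR recursion is exactly the paper's $\RP(v)=\sum_{u\in\Nout(v)}\SP(u)$ accounting combined with Inequality~\eqref{ineqn:sum_PPR_children}. Your extra care with the $O(1)$-per-pushback overhead and the degree-zero corner cases is a minor refinement the paper leaves implicit.
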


\noindent
For completeness, we give a \hyperlink{proof:previous_cost_bp}{proof} of this theorem in Appendix~\ref{sec:deferred_proofs}.

While the procedure of \bpush and its other properties are natural and elegant, its complexity bound is somewhat convoluted and not fully understood in the past years.
It is worth mentioning that there exist some folklore bounds for the complexity of \bpush in some special cases: its average complexity over all target nodes $t\in V$ is $O\big(m/(n\rmax)\big)$~\cite{lofgren2013personalized}, and its complexity on undirected graphs is $O\big(d(t)/\rmax\big)$, where $d(t)$ is the degree of $t$~\cite{wang2023estimating}.
However, the bounds in Theorem~\ref{thm:previous_cost_bp} are still the only known complexity bounds for an arbitrary target node $t$ on general directed graphs.
Notably, the summations $\sum_{v\in V}\pi(v,t)\cdot\din(v)$ and $\sum_{v\in V}\pi(v,t)\cdot\dout(v)$ depend on $t$ but cannot be expressed by simple parameters of $t$, and they can reach $\Theta(m)$ in the worst case, making these bounds difficult to apply to the worst-case analysis.

\section{Related Work} \label{sec:related_work}

This section briefly reviews relevant works for locally estimating PageRank contributions and single-node PageRank.

\subsection{Estimating PageRank Contributions} \label{sec:related_work_contributions}

The problem of estimating PageRank contributions is introduced in \cite{andersen2007local,andersen2008local} and, along with its extensions, has found numerous applications over the years~\cite{andersen2008robust,banerjee2015fast,zhang2016approximate,wei2018topppr,chen2020scalable,wang2021approximate}.
As introduced in Section~\ref{sec:preliminaries_bpush}, the classic \bpush algorithm is proposed in the seminal paper~\cite{andersen2008local}, with its worst-case complexity bounds given by subsequent works of \cite{lofgren2013personalized,wang2020personalized}, and some folklore complexity bounds exist for the average case or undirected graphs~\cite{lofgren2013personalized,wang2023estimating}.
However, there still lacks a meaningful and easy-to-apply worst-case complexity bound of \bpush, which greatly limits the analysis for some subsequent algorithms involving it~\cite{lofgren2014fast,banerjee2015fast,lofgren2016personalized,wang2018efficient,liao2022efficient}.

Aside from \bpush, few algorithms have been proposed for estimating PageRank contributions due to the hardness of the problem.
Whether it is possible to detect the $\delta$-contributing set in $O(1/\delta)$ time has remained an open problem for more than a decade.
A recent work~\cite{wang2020personalized} gives a promising result to this problem: it proposes a randomized algorithm called \RBS, which detects the $\delta$-contributing set in $\tO(1/\delta)$ time by applying randomness to the pushback operations.
However, \RBS requires $\Theta(n+m)$ time and space for preprocessing to sort the adjacency list of each node in some particular order, which is not a local process.
In fact, our lower bound shows that this $\tO(1/\delta)$ bound is in general impossible to achieve without sufficient preprocessing.
Our results, on the other hand, show that the simple \bpush algorithm is already optimal.

\subsection{Estimating Single-Node PageRank} \label{sec:related_work_SNPR}

The problem of estimating single-node PageRank is initially introduced in \cite{chen2004local} and has attracted extensive studies since then~\cite{fogaras2005towards,gleich2007approximating,andersen2007local,andersen2008local,bar2008local,bressan2011local,borgs2012sublinear,borgs2014multiscale,lofgren2014fast,lofgren2015bidirectional,lofgren2016personalized,bressan2018sublinear,bressan2023sublinear}.
While these works cover a wide range of settings, we primarily concentrate on the worst-case computational complexity for this problem under the arc-centric graph-access model.

First, a line of work~\cite{fogaras2005towards,avrachenkov2007monte,borgs2012sublinear,borgs2014multiscale} uses Monte Carlo simulation to estimate $\pi(t)$ by generating $\alpha$-discounted random walks.
It is well-known that $\Theta\big(1/\pi(t)\big)$ samples are both necessary and sufficient for Monte Carlo simulation.
However, this cost can scale linearly in $(n+m)$ if $\pi(t)=\Theta(1/n)$ and $m=\Theta(n)$.
On the other hand, the local exploration algorithm \bpush~\cite{andersen2008local} can also be applied to estimate $\pi(t)$, but it is shown afterward that $o(n+m)$ time is impossible to achieve without using the global $\jump()$ query, even if the algorithm is randomized and is allowed to fail~\cite{bar2008local,bressan2013power}.

A promising approach to estimate $\pi(t)$ in sublinear time is to combine the local exploration method and Monte Carlo simulation.
Using this idea, \FASTPPR~\cite{lofgren2014fast} and \BiPPR~\cite{lofgren2016personalized} achieve an average complexity of $\tO\big(\sqrt{m}\big)$ over all $t\in V$, but with no worst-case guarantees for an arbitrary choice of $t$.
Another work called \unBiPPR~\cite{lofgren2015bidirectional} achieves a worst-case complexity of $O\left(\sqrt{nd(t)}\right)$ on undirected graphs, where $d(t)$ denotes the degree of $t$.
However, it cannot be applied to general directed graphs since its analysis heavily relies on a symmetry property of PPR on undirected graphs.
Additionally, a recent work called \SetPush~\cite{wang2023estimating} achieves a better complexity bound of $\tO\left(\min\big(d(t),\sqrt{m}\big)\right)$ for estimating $\pi(t)$ on undirected graphs.
Similarly, \SetPush cannot be applied to general directed graphs; in fact, it does not use the global $\jump()$ query and is thus impossible to break the $\Theta(n+m)$ barrier on directed graphs.

The first fully sublinear upper bound for the computational complexity of this problem is the $\tO\left(\min\left(n^{3/4}\Deltaout^{1/4},n^{5/7}m^{1/7}\right)\right)$ bound given by \cite{bressan2018sublinear}.
This upper bound is later improved to $\tO\left(n^{2/3}\cdot\min\left(\Deltaout^{1/3},m^{1/6}\right)\right)$ in \cite{bressan2023sublinear} by tightening the analysis.
We emphasize that although \cite{bressan2023sublinear} also gives an upper bound of $\tO\left(n^{1/2}\cdot\min\left(\Deltaout^{1/2},m^{1/4}\right)\right)$, it only holds for the query complexity instead of the computational complexity.
Their upper bounds are derived also using a combination of local exploration and Monte Carlo simulation.
They use a novel and complicated local exploration method to construct a \textit{perfect weighted estimator}, which is a linear combination of the Monte Carlo estimators with controllable coefficients.
They also use a technique called \textit{blacklisting} to avoid nodes with large in-degrees during the local exploration process.
Furthermore, their algorithm needs to compute approximate coefficients to construct \textit{approximate estimators}, leading to extra computational cost and analytical difficulty.
On the other hand, the best-known lower bound for the setting in question is $\Omega\left(\min\left(n^{1/2}\Deltaout^{1/2},n^{1/3}m^{1/3}\right)\right)$ given by \cite{bressan2018sublinear,bressan2023sublinear}.
The question of whether it is possible to tighten these upper and lower bounds was left as an open problem in \cite{bressan2018sublinear,bressan2023sublinear} and is resolved by our results.

We remark that many other upper and lower bounds under different settings exist for this problem.
We refer interested readers to \cite{bressan2023sublinear} for a more comprehensive summary of them.

\section{Revisiting the Complexity of \bpush} \label{sec:contribution_upper_bound}

This section proves Theorem~\ref{thm:cost_bp} and sketches Theorem~\ref{thm:contribution_upper_bound} (the \hyperlink{proof:contribution_upper_bound}{proof} of Theorem~\ref{thm:contribution_upper_bound} is given in Appendix~\ref{sec:deferred_proofs}).
Our first result, Theorem~\ref{thm:cost_bp}, gives a new complexity bound of \bpush parameterized by $\pi(t)$, $\rmax$, and the parameters of $G$.

\begin{proof}[Proof of Theorem~\ref{thm:cost_bp}] \hypertarget{proof:cost_bp}
By Theorem~\ref{thm:properties_bp}, it suffices to prove that the computational complexity of $\bpush(t,\alpha,\rmax)$ is $O\left(n\pi(t)/\rmax\cdot\min\big(\Deltain,\Deltaout,\sqrt{m}\big)\right)$.
We first prove the upper bound of $O\big(n\pi(t)/\rmax\cdot\sqrt{m}\big)$ by manipulating the bound of $O\left(\sum_{v\in V}\pi(v,t)\cdot\din(v)/\rmax\right)$ in Theorem~\ref{thm:previous_cost_bp}.
For the summation in the bound, we have
\begin{align*}
    \sum_{v\in V}\pi(v,t)\cdot\din(v)&=\sum_{v\in V}\pi(v,t)\sum_{u\in\Nin(v)}\sqrt{\dout(u)}\cdot\frac{1}{\sqrt{\dout(u)}} \\
    &=\sum_{v\in V}\sum_{u\in\Nin(v)}\sqrt{\pi(v,t)\cdot\dout(u)}\cdot\sqrt{\frac{\pi(v,t)}{\dout(u)}}.
\end{align*}
Applying Cauchy-Schwarz inequality and reordering the terms, we obtain
\begin{align*}
    \sum_{v\in V}\pi(v,t)\cdot\din(v)&\le\left(\sum_{v\in V}\sum_{u\in\Nin(v)}\pi(v,t)\cdot\dout(u)\right)^{1/2}\left(\sum_{v\in V}\sum_{u\in\Nin(v)}\frac{\pi(v,t)}{\dout(u)}\right)^{1/2} \\
    &=\left(\sum_{v\in V}\pi(v,t)\sum_{u\in\Nin(v)}\dout(u)\right)^{1/2}\left(\sum_{u\in V}\frac{1}{\dout(u)}\sum_{v\in\Nout(u)}\pi(v,t)\right)^{1/2}.
\end{align*}
Note that $\sum_{u\in\Nin(v)}\dout(u)\le m$ and $\sum_{v\in V}\pi(v,t)=n\pi(t)$.
Also, the definition of PageRank implies that
\begin{align*}
    \pi(u,t)=\alpha\indicator{u=t}+\frac{1-\alpha}{\dout(u)}\sum_{v\in\Nout(u)}\pi(v,t),
\end{align*}
where $\indicator{}$ is the indicator function.
Consequently, we have
\begin{align} \label{ineqn:sum_PPR_children}
    \frac{1}{\dout(u)}\sum_{v\in\Nout(u)}\pi(v,t)=\frac{\pi(u,t)-\alpha\indicator{u=t}}{1-\alpha}\le\frac{\pi(u,t)}{1-\alpha}.
\end{align}
Combining these results yields
\begin{align*}
    \sum_{v\in V}\pi(v,t)\cdot\din(v)&\le\left(\sum_{v\in V}\pi(v,t)\cdot m\right)^{1/2}\left(\sum_{u\in V}\frac{\pi(u,t)}{1-\alpha}\right)^{1/2} \\
    &=\sqrt{n\pi(t)\cdot m\cdot n\pi(t)/(1-\alpha)}=O\big(n\pi(t)\cdot\sqrt{m}\big).
\end{align*}
Substituting this bound into $O\left(\sum_{v\in V}\pi(v,t)\cdot\din(v)/\rmax\right)$ gives the claimed bound of $O\big(n\pi(t)\cdot\sqrt{m}/\rmax\big)$.

For the bound involving $\Deltain$, we use the result in Theorem~\ref{thm:properties_bp} that the number of pushback operations is $O\big(n\pi(t)/\rmax\big)$.
Since each pushback operation takes $O(\Deltain)$ time, the $O\big(n\pi(t)/\rmax\cdot\Deltain\big)$ bound immediately follows.
For the bound involving $\Deltaout$, we use another complexity bound given in Theorem~\ref{thm:previous_cost_bp}, i.e., $O\left(\sum_{v\in V}\pi(v,t)\cdot\dout(v)/\rmax\right)$.
By directly bounding $\dout(v)$ by $\Deltaout$, the complexity bound becomes $O\left(\sum_{v\in V}\pi(v,t)\cdot\Deltaout/\rmax\right)=O\big(n\pi(t)/\rmax\cdot\Deltaout\big)$.
Combining these results gives the claimed bound of $O\left(n\pi(t)/\rmax\cdot\min\big(\Deltain,\Deltaout,\sqrt{m}\big)\right)$.
\end{proof}

Next, we use this new complexity bound of \bpush to tighten the upper bound of detecting the $\delta$-contributing set.
Recall that to this end, we need to find out all nodes $v$ with $\pi(v,t)\ge\delta n\pi(t)$.
For the easier case when the value of $n\pi(t)$ is known (as considered in \cite{andersen2008local}), we can simply set $\rmax=\delta n\pi(t)/2$, invoke $\bpush(t,\alpha,\rmax)$, and return the set $S=\big\{v\in V\mid\epib{v}\ge\rmax\big\}$.
Since $n\pi(t)/\rmax=\Theta(1/\delta)$, by Theorem~\ref{thm:cost_bp}, this algorithm runs in $O\left(\min\big(\Deltain,\Deltaout,\sqrt{m}\big)\big/\delta\right)$ time.
Additionally, by Theorem~\ref{thm:properties_bp}, we have $\pi(v,t)-\rmax\le\epib{v}\le\pi(v,t)$, and it is straightforward to check that $S$ contains the $\delta$-contributing set of $t$.
In fact, this algorithm also guarantees that $S$ is a subset of the $(\delta/2)$-contributing set of $t$.

For the more general case when no information other than $t$ and $\delta$ is available, our Theorem~\ref{thm:contribution_upper_bound} states that we can still achieve the $O\left(\min\big(\Deltain/\delta,\Deltaout/\delta,\sqrt{m}/\delta,m\big)\right)$ upper bound for detecting the $\delta$-contributing set of $t$.
At a high level, we achieve this by using the classic \textit{doubling} technique (or \textit{halving} in our case) to try a sequence of settings for $\rmax$, and in the meantime, recording some quantities involving the number of pushback operations performed on each node to determine when to terminate.
The formal \hyperlink{proof:contribution_upper_bound}{proof} of this result is nontrivial and more technical, which is given in Appendix~\ref{sec:deferred_proofs}.

\section{Revisiting the Complexity of \BiPPR} \label{sec:SNPR_upper_bound}

This section proves Theorem~\ref{thm:SNPR_upper_bound}, giving tighter upper bounds for single-node PageRank estimation as $O\left(n^{1/2}\cdot\min\left(\Deltain^{1/2},\Deltaout^{1/2},m^{1/4}\right)\right)$.
Recall that the task is to compute a multiplicative $(1\pm c)$-approximation of $\pi(t)$ w.p. at least $(1-\pf)$ for a target node $t$.
Our algorithm is basically a simple combination of \bpush and Monte Carlo simulation, which has been proposed in \cite{lofgren2016personalized} and dubbed \BiPPR.
However, \cite{lofgren2016personalized} only gives an average-case complexity bound of \BiPPR.
In the following, we reanalyze \BiPPR to derive the tight worst-case complexity bound.
Our analyses are significantly simpler than that in \cite{bressan2018sublinear,bressan2023sublinear} and even simpler than the original analyses of \BiPPR~\cite{lofgren2016personalized}.
Our analyses also illuminate why the complex techniques in \cite{bressan2018sublinear,bressan2023sublinear} are unnecessary, as discussed at the end of this section.

In this section, following \cite{bressan2018sublinear,bressan2023sublinear}, we assume that $n$ is known for simplicity.
In fact, one can estimate $n$ using the $\jump()$ query in $O\big(\sqrt{n}\big)$ time~\cite{bressan2015simple}, which does not affect the resultant upper bounds.

\header{\textbf{Monte Carlo Sampling.}}
First, we need a primitive to sample a node $v\in V$ w.p. $\pi(v)$.
This can be done by straightforwardly using the $\jump()$, $\outdeg(\cdot)$, and $\child(\cdot,\cdot)$ queries to simulate an $\alpha$-discounted random walk on $G$ and obtain the node at which it terminates.
We name this subroutine $\SampleNode()$ and define the indicator variable $\chi_v$ to be $\chi_v=\indicator{\SampleNode()=v}$.
Algorithm~\ref{alg:SampleNode} in Appendix~\ref{sec:table_notations} gives a pseudocode for $\SampleNode()$.
It is well-known that $\chi_v$ is a Bernoulli random variable that takes on value $1$ w.p. $\pi(v)$ and that $\SampleNode()$ takes expected $\Theta(1)$ time~\cite{fogaras2005towards,avrachenkov2007monte}.
Additionally, the $\chi_v$'s are \textit{negatively correlated} random variables~\cite{bressan2018sublinear,bressan2023sublinear}.
These notions and notations are consistent with those presented in \cite{bressan2018sublinear,bressan2023sublinear}.

\header{\textbf{Bidirectional Estimator.}}
We describe \BiPPR as a subroutine that takes as input parameters $\rmax$ and $n_r$ and outputs $\epi(t)$ as an estimate for $\pi(t)$.
\BiPPR first runs $\bpush(t,\alpha,\rmax)$ and uses its invariant (see Lemma~\ref{lem:invariant_backward}) to conceptually construct a \textit{bidirectional estimator} of $\pi(t)$ as a linear combination of the $\chi_v$'s.
Specifically, the bidirectional estimator $q(t)$ is defined as
\begin{align} \label{eqn:bidirectional}
    q(t)=\frac{1}{n}\sum_{v\in V}\epib{v}+\sum_{v\in V}\chi_v\cdot\rb{v}.
\end{align}
By Equation~\eqref{eqn:invariant_backward} and $\pi(t)=\frac{1}{n}\sum_{v\in V}\pi(v,t)$, one can verify that
\begin{align} \label{eqn:invariant_pi}
    \pi(t)=\frac{1}{n}\sum_{v\in V}\epib{v}+\sum_{v\in V}\pi(v)\cdot\rb{v},
\end{align}
so $q(t)$ is an unbiased estimator for $\pi(t)$.
Recall that \bpush reduces the residues so that $\rb{v}\le\rmax$ for any $v\in V$.
Thus, as $q(t)$ is a weighted sum of several negatively correlated random variables with small coefficients, it is a low-variance estimator.
\BiPPR then invokes $\SampleNode()$ for $n_r$ times to obtain several independent realizations of $q(t)$ and takes their average as the final estimate $\epi(t)$.

Now we give several lemmas for the proof of Theorem~\ref{thm:SNPR_upper_bound}.
First, by Theorem~\ref{thm:cost_bp} and the expected cost of $\SampleNode()$, we immediately have the following bound for the expected computational complexity of \BiPPR.

\begin{lemma} \label{lem:cost_BiPPR}
    The expected computational complexity of running \BiPPR for target node $t$ with parameters $\rmax\in(0,1]$ and $n_r$ is $O\left(n\pi(t)/\rmax\cdot\min\big(\Deltain,\Deltaout,\sqrt{m}\big)+n_r\right)$.
\end{lemma}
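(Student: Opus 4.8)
The plan is to decompose the running time of \BiPPR into its two constituent parts: the single call to $\bpush(t,\alpha,\rmax)$ and the $n_r$ invocations of $\SampleNode()$, and then bound each part separately using results already established in the excerpt. For the first part, \BiPPR runs $\bpush(t,\alpha,\rmax)$ exactly once, and Theorem~\ref{thm:cost_bp} (more precisely, the statement proven inside its proof that $\bpush(t,\alpha,\rmax)$ has computational complexity $O\left(n\pi(t)/\rmax\cdot\min\big(\Deltain,\Deltaout,\sqrt{m}\big)\right)$) gives exactly the first term in the claimed bound. Note that this bound is deterministic, so it contributes this quantity to the expected cost without any further argument.

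For the second part, I would invoke the stated fact that each call to $\SampleNode()$ takes expected $\Theta(1)$ time~\cite{fogaras2005towards,avrachenkov2007monte}: simulating an $\alpha$-discounted random walk terminates after a $\mathrm{Geometric}(\alpha)$ number of steps in expectation, and with the arc-centric oracle each step costs $O(1)$ (one $\outdeg$ query to learn the out-degree, one $\child$ query to sample a uniform child, plus the initial $\jump()$). Since \BiPPR makes $n_r$ such calls, by linearity of expectation the total expected time spent on Monte Carlo sampling is $O(n_r)$. The remaining bookkeeping in \BiPPR — summing $\frac{1}{n}\sum_{v}\epib{v}$ over the nonzero reserves and accumulating $\sum_v \chi_v\cdot\rb{v}$ over the sampled nodes — costs $O\big(n\pi(t)/\rmax + n_r\big)$, which is dominated by the two main terms (the number of nonzero reserves is $O\big(n\pi(t)/\rmax\big)$ by Theorem~\ref{thm:properties_bp}); so it does not affect the asymptotics.

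Adding the two contributions and applying linearity of expectation yields the expected computational complexity $O\left(n\pi(t)/\rmax\cdot\min\big(\Deltain,\Deltaout,\sqrt{m}\big)+n_r\right)$, as claimed. I do not anticipate a genuine obstacle here: the lemma is essentially a matter of assembling Theorem~\ref{thm:cost_bp} with the folklore $\Theta(1)$-per-sample cost of $\SampleNode()$ and checking that the auxiliary bookkeeping is subdominant. The only point requiring mild care is to make sure the $\bpush$ cost is correctly attributed as \emph{expected} cost (it is in fact deterministic, which is even stronger) and that no hidden dependence on $t$ or the graph parameters creeps into the $O(n_r)$ term — which it does not, since each random walk is sampled from the fixed $\alpha$-discounted distribution and the per-step work is $O(1)$ regardless of the current node's degree beyond the single $\outdeg$/$\child$ pair.
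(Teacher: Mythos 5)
Your proposal is correct and matches the paper's reasoning: the paper derives this lemma in one line by combining Theorem~\ref{thm:cost_bp} for the \bpush phase with the expected $\Theta(1)$ cost per $\SampleNode()$ call, exactly as you do. Your additional check that the bookkeeping over the $O\big(n\pi(t)/\rmax\big)$ nonzero reserves is subdominant is a fine (if implicit in the paper) detail.
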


\noindent
Next, the following lemma upper bounds the variance of the estimator $\epi(t)$ returned by \BiPPR.

\begin{lemma} \label{lem:variance_BiPPR}
    The variance of the estimator $\epi(t)$ in \BiPPR can be bounded as:
    \begin{align*}
        \Var\big[\epi(t)\big]\le\frac{\rmax}{n_r}\cdot\pi(t).
    \end{align*}
\end{lemma}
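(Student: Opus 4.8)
The plan is to bound $\Var[\epi(t)]$ by first writing it in terms of the variance of a single realization of the bidirectional estimator $q(t)$, and then bounding $\Var[q(t)]$ directly. Since $\epi(t)$ is the average of $n_r$ independent copies of $q(t)$, we have $\Var[\epi(t)] = \Var[q(t)]/n_r$, so it suffices to show $\Var[q(t)] \le \rmax\cdot\pi(t)$. In $q(t)$ the first term $\frac{1}{n}\sum_{v\in V}\epib{v}$ is a deterministic quantity (it is fixed once \bpush has finished), so it contributes nothing to the variance, and we are left with $\Var[q(t)] = \Var\big[\sum_{v\in V}\chi_v\cdot\rb{v}\big]$.

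Next I would exploit the negative correlation of the $\chi_v$'s. Because they are negatively correlated, $\Var\big[\sum_{v\in V}\chi_v\rb{v}\big] \le \sum_{v\in V}\rb{v}^2\cdot\Var[\chi_v]$; indeed the cross terms $\mathrm{Cov}[\chi_u,\chi_v]\cdot\rb{u}\rb{v}$ are all nonpositive since the residues $\rb{v}$ are nonnegative. Since $\chi_v$ is Bernoulli with mean $\pi(v)$, we have $\Var[\chi_v] = \pi(v)(1-\pi(v)) \le \pi(v)$. Hence $\Var[q(t)] \le \sum_{v\in V}\rb{v}^2\cdot\pi(v)$.

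Now I would use the key property of \bpush that $\rb{v} \le \rmax$ for every $v\in V$ upon termination, which bounds one factor of $\rb{v}^2$ by $\rmax$, giving $\Var[q(t)] \le \rmax\cdot\sum_{v\in V}\rb{v}\cdot\pi(v)$. The remaining sum $\sum_{v\in V}\pi(v)\cdot\rb{v}$ is bounded by $\pi(t)$: this follows from the invariant rewritten as Equation~\eqref{eqn:invariant_pi}, namely $\pi(t) = \frac{1}{n}\sum_{v\in V}\epib{v} + \sum_{v\in V}\pi(v)\cdot\rb{v}$, together with the fact that both $\epib{v}$ and $\rb{v}$ are nonnegative, so $\sum_{v\in V}\pi(v)\cdot\rb{v} \le \pi(t)$. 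Combining these steps yields $\Var[q(t)] \le \rmax\cdot\pi(t)$, and dividing by $n_r$ finishes the proof.

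I do not anticipate a serious obstacle here; the argument is short and each step is routine once the right facts are lined up. The only point requiring a little care is the first inequality, where I invoke negative correlation to discard the covariance terms — I would want to state cleanly that negative correlation of the $\chi_v$'s together with nonnegativity of the coefficients $\rb{v}$ makes every cross term $\le 0$, so that the variance of the weighted sum is at most the sum of the individual weighted variances. Everything else is a direct substitution using $\rb{v}\le\rmax$ and the invariant Equation~\eqref{eqn:invariant_pi}.
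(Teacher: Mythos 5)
Your proof is correct and follows essentially the same route as the paper: reduce to bounding $\Var\big[q(t)\big]$, drop the deterministic reserve term, use negative correlation of the $\chi_v$'s and $\Var[\chi_v]\le\pi(v)$, and then apply $\rb{v}\le\rmax$ together with Equation~\eqref{eqn:invariant_pi} to bound $\sum_{v\in V}\pi(v)\cdot\rb{v}$ by $\pi(t)$. Your explicit remark that nonnegativity of the residues is what makes the covariance cross terms nonpositive is a nice clarification of a step the paper states without elaboration.
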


\begin{proof}
Since \BiPPR computes $\epi(t)$ as the average of $n_r$ independent realizations of the bidirectional estimator $q(t)$, it suffices to show that $\Var\big[q(t)\big]\le\rmax\pi(t)$.
By the definition of $q(t)$ (see Equation~\eqref{eqn:bidirectional}), we have
\begin{align*}
    \Var\big[q(t)\big]=\Var\left[\frac{1}{n}\sum_{v\in V}\epib{v}+\sum_{v\in V}\chi_v\cdot\rb{v}\right]=\Var\left[\sum_{v\in V}\chi_v\cdot\rb{v}\right]
\end{align*}
since the randomness of $q(t)$ only comes from the $\chi_v$'s.
As the $\chi_v$'s are negatively correlated random variables, we further have
\begin{align*}
    \Var\big[q(t)\big]\le\sum_{v\in V}\Var\big[\chi_v\cdot\rb{v}\big]=\sum_{v\in V}\Var[\chi_v]\cdot\big(\rb{v}\big)^2.
\end{align*}
Note that the variance of the Bernoulli random variable $\chi_v$ is $\Var[\chi_v]=\pi(v)\big(1-\pi(v)\big)\le\pi(v)$, which further yields $\Var\big[q(t)\big]\le\sum_{v\in V}\pi(v)\cdot\big(\rb{v}\big)^2$.
Finally, recall that \bpush guarantees that $\rb{v}\le\rmax$ for all $v\in V$ and $\sum_{v\in V}\pi(v)\cdot\rb{v}=\pi(t)-1/n\cdot\sum_{v\in V}\epib{v}\le\pi(t)$ (see Equation~\eqref{eqn:invariant_pi}).
It follows that
\begin{align*}
    \Var\big[q(t)\big]\le\sum_{v\in V}\pi(v)\cdot\big(\rb{v}\big)^2\le\rmax\sum_{v\in V}\pi(v)\cdot\rb{v}\le\rmax\pi(t),
\end{align*}
which finishes the proof.
\end{proof}

\noindent
Up to this point, let us simplify the problem by applying the classic \textit{median trick}~\cite{jerrum1986random}.
The median trick tells us that if we have an algorithm that returns a satisfactory estimate w.p. at least $2/3$, then we can boost its success probability to at least $(1-\pf)$ by running it independently for $\Theta\big(\log(1/\pf)\big)$ times and taking the median of the results as the final answer.
Therefore, we can first consider the upper bound of \BiPPR when the failure probability parameter is $1/3$.
Then, in our case, since we only need to rerun the Monte Carlo phase of \BiPPR for $O\big(\log(1/\pf)\big)$ times and this cost can be balanced with the \bpush phase, we only need to multiply by $O\left(\big(\log(1/\pf)\big)^{1/2}\right)$ to yield the upper bound of \BiPPR for a general $\pf$ (see the \hyperlink{proof:SNPR_upper_bound}{proof} for Theorem~\ref{thm:SNPR_upper_bound} below).
Using the upper bound for $\Var\big[\epi(t)\big]$ and Chebyshev's inequality, Lemma~\ref{lem:BiPPR_requirement} below establishes a relationship between $\rmax$ and $n_r$ as a sufficient condition for \BiPPR to yield a satisfactory result w.p. at least $2/3$.

\begin{lemma} \label{lem:BiPPR_requirement}
    If the parameters of \BiPPR satisfy $n_r\ge3\rmax/\left(c^2\pi(t)\right)$, then w.p. at least $2/3$ the estimate $\epi(t)$ returned by \BiPPR is a multiplicative $(1\pm c)$-approximation of $\pi(t)$.
\end{lemma}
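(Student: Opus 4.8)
The plan is to apply Chebyshev's inequality directly to the estimator $\epi(t)$, using the variance bound from Lemma~\ref{lem:variance_BiPPR}. Since $\epi(t)$ is unbiased for $\pi(t)$ (as each realization of $q(t)$ is unbiased, by Equation~\eqref{eqn:invariant_pi}), Chebyshev gives
\begin{align*}
    \Pr\big[\,\big|\epi(t)-\pi(t)\big|\ge c\,\pi(t)\,\big]\le\frac{\Var\big[\epi(t)\big]}{c^2\pi(t)^2}\le\frac{\rmax}{n_r\,c^2\pi(t)^2},
\end{align*}
where the last step uses Lemma~\ref{lem:variance_BiPPR}. Now I would impose the hypothesis $n_r\ge 3\rmax/\big(c^2\pi(t)\big)$, which makes the right-hand side at most $1/3$. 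Hence with probability at least $2/3$ we have $\big|\epi(t)-\pi(t)\big|<c\,\pi(t)$, i.e., $\epi(t)$ is a multiplicative $(1\pm c)$-approximation of $\pi(t)$, as claimed.

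The main steps, in order, are: (i) observe unbiasedness of $\epi(t)$, which follows from Equation~\eqref{eqn:invariant_pi} and the fact that $\epi(t)$ averages $n_r$ i.i.d.\ copies of $q(t)$; (ii) invoke Chebyshev's inequality with deviation $c\,\pi(t)$; (iii) substitute the variance bound $\Var\big[\epi(t)\big]\le \rmax\pi(t)/n_r$ from Lemma~\ref{lem:variance_BiPPR}; (iv) plug in the assumed lower bound on $n_r$ to conclude the failure probability is at most $1/3$. Each of these is a one-line deduction.

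This lemma is essentially a routine concentration argument, so I do not anticipate a genuine obstacle; the only point requiring minor care is confirming that $\epi(t)$ is exactly unbiased — one must check that the backward-push phase and the Monte Carlo phase are independent so that $\E\big[q(t)\big]=\pi(t)$ conditioned on any outcome of \bpush, which is immediate since \bpush is deterministic given its inputs and the $\chi_v$'s are generated afterward. With unbiasedness in hand, the rest is a direct application of Chebyshev's inequality and the given variance bound.
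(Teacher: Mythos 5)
Your proposal is correct and follows essentially the same argument as the paper: apply Chebyshev's inequality to the unbiased estimator $\epi(t)$ with the variance bound $\Var\big[\epi(t)\big]\le\rmax\pi(t)/n_r$ from Lemma~\ref{lem:variance_BiPPR}, then use the hypothesis on $n_r$ to bound the failure probability by $1/3$. The unbiasedness check you add is a correct (and implicit in the paper) justification for invoking Chebyshev around $\pi(t)$.
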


\begin{proof}
By Chebyshev's inequality and Lemma~\ref{lem:variance_BiPPR}, $n_r\ge3\rmax/\left(c^2\pi(t)\right)$ implies that
\begin{align*}
    \Pr\Big\{\big|\epi(t)-\pi(t)\big|\ge c\pi(t)\Big\}\le\frac{\Var\big[\epi(t)\big]}{c^2\big(\pi(t)\big)^2}\le\frac{\rmax\pi(t)}{n_r\cdot c^2\big(\pi(t)\big)^2}\le\frac{\rmax}{c^2\pi(t)}\cdot\frac{c^2\pi(t)}{3\rmax}=\frac{1}{3},
\end{align*}
finishing the proof.
\end{proof}

\noindent
Now it remains to prove Theorem~\ref{thm:SNPR_upper_bound} by striking a balance between the two phases of \BiPPR.
However, as we cannot figure out the desired setting of $\rmax$ and $n_r$ directly, we must employ an \textit{adaptive} setting of them to achieve the optimal upper bound, as explained in the following proof.

\begin{proof}[Proof of Theorem~\ref{thm:SNPR_upper_bound}] \hypertarget{proof:SNPR_upper_bound}
By Lemmas~\ref{lem:cost_BiPPR} and \ref{lem:BiPPR_requirement}, ideally, setting
\begin{align}
    \nonumber \rmax&=\Theta\left(n^{1/2}\cdot\min\left(\Deltain^{1/2},\Deltaout^{1/2},m^{1/4}\right)\pi(t)\cdot c\big(\log(1/\pf)\big)^{-1/2}\right) \\
    \text{and}\quad n_r&=\Theta\left(n^{1/2}\cdot\min\left(\Deltain^{1/2},\Deltaout^{1/2},m^{1/4}\right)\cdot c^{-1}\big(\log(1/\pf)\big)^{1/2}\right) \label{eqn:setting_nr}
\end{align}
could balance (the upper bounds of) the computational cost of \bpush and Monte Carlo simulations in \BiPPR while achieving the approximation requirement.
Here, we take into consideration the extra $O\big(\log(1/\pf)\big)$ factor introduced by the median trick.
A subtlety here is that we may set $\rmax>1$, but this does not affect our result since the cost of \bpush simply becomes $\Theta(1)$ for $\rmax>1$.
This ideal setting yields an upper bound of
\begin{align*}
    O\left(n^{1/2}\cdot\min\left(\Deltain^{1/2},\Deltaout^{1/2},m^{1/4}\right)\cdot c^{-1}\big(\log(1/\pf)\big)^{1/2}\right).
\end{align*}

However, we cannot directly set $\rmax$ and $n_r$ in this way since we do not know the parameters of $G$ and the value of $\pi(t)$, where $\pi(t)$ is exactly what we aim to estimate.
To achieve the desired upper bound, we once again resort to the doubling (halving) technique.

Let us first consider the scenario when we know the parameters of $G$, namely, $\Deltain$, $\Deltaout$, and $m$.
In this case, we can devise a workaround easily: we set $n_r$ according to Equation~\eqref{eqn:setting_nr} and repeatedly run \bpush with $\rmax=1,1/2,1/4,1/8,...$ until the total cost is about to surpass $\Theta(n_r)$, in which case we double $\rmax$ and restore the results of the last running of \bpush.
By doing so, the final $\rmax$ must be no larger than the desired setting, thereby ensuring that the error guarantee is met.
This approach can be readily made oblivious to the parameters of $G$ by imposing a doubling budget on the algorithm, using an argument similar to \cite[Section~5.9]{bressan2023sublinear}.
We set an initial budget for the algorithm and run the above process on this budget.
If we discover that $n_r\cdot\epi(t)=\Omega\left(c^{-2}\rmax\right)$ holds, then by standard concentration bounds (see, e.g., \cite{dagum2000optimal}), we can safely terminate and return $\epi(t)$ as a multiplicative $(1\pm c)$-approximation of $\pi(t)$.
Otherwise, we double the budget and repeat the process.
\end{proof}

\header{\textbf{Remark.}}
Let us compare our algorithm and analyses for single-node PageRank estimation to those in \cite{bressan2018sublinear,bressan2023sublinear}.
The algorithms in \cite{bressan2018sublinear,bressan2023sublinear} also construct an estimator of $\pi(t)$ as a linear combination of the Monte Carlo estimators, in a similar form to the bidirectional estimator in \BiPPR.
By exploring an induced subgraph $H$ of $G$ containing $t$, they construct a \textit{subgraph estimator} $q_H(t)$ of the form
\begin{align*}
    q_H(t)=c_H+\sum_{v\in V}\chi_v\cdot c_H(v),
\end{align*}
where $c_H$ and $c_H(v)$ are coefficients that depend on the structure of $H$.
The bidirectional estimator $q(t)=\frac{1}{n}\sum_{v\in V}\epib{v}+\sum_{v\in V}\chi_v\cdot\rb{v}$ in \BiPPR can be seen as a special form of the subgraph estimator.
The algorithms in \cite{bressan2018sublinear,bressan2023sublinear} are motivated by the problem that the nonzero coefficients of the $\chi_v$'s in the expression of $q_H(t)$ can be unbalanced, making it hard to guarantee that $q_H(t)$ is well-concentrated.
Therefore, a complex \textit{perfect weighted estimator} is designed as a weighted sum of several subgraph estimators to balance the coefficients of the $\chi_v$'s.
However, our analyses show that the bidirectional estimator does not suffer from this stated problem.
In the bidirectional estimator $q(t)=\frac{1}{n}\sum_{v\in V}\epib{v}+\sum_{v\in V}\chi_v\cdot\rb{v}$, even if the coefficients $r(v)$ are unbalanced, the property of \bpush guarantees that $\rb{v}\le\rmax$ for each $v$, and thus the variance of $q(t)$ is small enough to obtain the desired concentration property.
The rationale behind this point is intuitive: to develop an effective estimator of the form $q_H(t)=c_H+\sum_{v\in V}\chi_v\cdot c_H(v)$, the more straightforward approach is to make the coefficients $c_H(v)$ to be small enough instead of forcing them to be balanced.
On the other hand, \bpush serves as an efficient method to construct the bidirectional estimator $q(t)$.
Since the coefficients in the expression of $q(t)$ are exactly the reserves and residues computed by \bpush, the iterative methods used in \cite{bressan2018sublinear,bressan2023sublinear} to approximate the coefficients are unnecessary.
Also, by our analyses, the complexity of \bpush and the adaptive setting of the parameters are sufficient for \BiPPR to achieve the optimal complexity, eliminating the necessity of using the \textit{blacklisting} technique in \cite{bressan2018sublinear,bressan2023sublinear} to bypass nodes with large in-degrees.

\section{Lower Bounds for Detecting the Contributing Set} \label{sec:contribution_lower_bound}

This section proves Theorem~\ref{thm:contribution_lower_bound_informal}, giving the lower bound of $\Omega\left(\min\big(\Deltain/\delta,\Deltaout/\delta,\sqrt{m}/\delta,m\big)\right)$ for detecting the $\delta$-contributing set, which matches our upper bound given in Theorem~\ref{thm:contribution_upper_bound}.
We emphasize that while we derive our upper bounds for this problem using a deterministic algorithm, our lower bound applies to the expected number of queries made by any randomized algorithm, even if the algorithm is allowed to fail with an arbitrarily large constant probability.
This proof also forms the foundation for proving the subsequent Theorems~\ref{thm:SNPR_lower_bound_informal} and \ref{thm:SNPR_lower_bound_Delta_in_informal}.
Now we formally state Theorem~\ref{thm:contribution_lower_bound_informal} as Theorem~\ref{thm:contribution_lower_bound} and prove it.

\begin{theorem} \label{thm:contribution_lower_bound}
    Choose any integer $p\ge2$ and any functions $\Deltain(n),\Deltaout(n)\in\Omega(1)\cap O(n)$, $m(n)\in\Omega(n)\cap O\big(n\Deltain(n)\big)\cap O\big(n\Deltaout(n)\big)$, and $\delta(n)\in(0,1)$.
    Consider any (randomized) algorithm $\mathcal{A}(t,\delta)$ that w.p. at least $1/p$ outputs a node set containing the $\delta$-contributing set of $t$, where $\mathcal{A}$ can only query the graph oracle to access unseen nodes and edges in the underlying graph and $\mathcal{A}$ cannot use the $\jump()$ query.
    Then, for every sufficiently large $n$, there exists a graph $H$ such that:
    (i) $H$ contains $\Theta(n)$ nodes and $\Theta(m)$ edges, and its maximum in-degree and out-degree is $\Theta(\Deltain)$ and $\Theta(\Deltaout)$, resp.;
    (ii) $H$ contains a node $t$ such that $\mathcal{A}(t,\delta)$ requires $\Omega\left(\min\big(\Deltain/\delta,\Deltaout/\delta,\sqrt{m}/\delta,m\big)\right)$ queries in expectation.
\end{theorem}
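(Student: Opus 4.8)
The plan is to prove the lower bound via Yao's principle: I will exhibit, for every sufficiently large $n$, a distribution over input graphs (each with the prescribed numbers of nodes and edges and maximum in/out-degrees $\Theta(\Deltain),\Theta(\Deltaout)$) on which every \emph{deterministic} algorithm that outputs a superset of the $\delta$-contributing set of $t$ with probability at least $1/p$ over the instance must make $\Omega\!\left(\min(\Deltain/\delta,\Deltaout/\delta,\sqrt m/\delta,m)\right)$ queries in expectation; a standard averaging argument then fixes a single graph $H$ on which the same bound holds for every randomized algorithm, even one allowed to fail with an arbitrarily large constant probability. The structural fact exploited throughout is that, without the $\jump()$ query, the algorithm can learn of a node only by traversing an edge to it from a node it has already seen; in particular, a node with no in-edge whose unique out-edge enters a node $v$ is discoverable only through a $\parent(v,\cdot)$ query, so if such a node sits at a uniformly random position among the $\din(v)$ parents of $v$, then $\Omega(\din(v))$ parent queries to $v$ are needed to find it with constant probability, even if every other parent of $v$ is already known to the algorithm from elsewhere.

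The hard instance is built around $t$ together with $\Theta(1/\delta)$ nearly identical \emph{gadgets}, each attached to $t$ through a shallow fan-in tree so that $t$'s in-degree stays $O(\Deltain)$. Gadget $j$ has a hub $v_j$ and hides exactly one \emph{planted} node $w_j$, whose contribution to $t$ I will tune to lie just above the threshold $\delta n\pi(t)$, so that $w_j$ provably belongs to the $\delta$-contributing set of $t$; I place $w_j$ at a uniformly random position among a pool of $\Theta\!\left(\min(\Deltain,\Deltaout,\sqrt m)\right)$ \emph{decoy} nodes that are parents of $v_j$, are indistinguishable from $w_j$ under every local query (matching in- and out-degrees, isomorphic bounded-radius neighborhoods), yet have contributions strictly below $\delta n\pi(t)$. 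The decoy pool is shared across all gadgets, so after processing one gadget the algorithm already knows the whole pool, and at every later gadget it faces precisely the situation above: one unseen node $w_j$ at a random position among $\Theta(\min(\Deltain,\Deltaout,\sqrt m))$ parents of $v_j$, all of whose other parents are known. Since correctness forces the algorithm to discover every $w_j$, it must probe in expectation a constant fraction of the parents of each $v_j$, giving $\Omega\!\left(\min(\Deltain,\Deltaout,\sqrt m)/\delta\right)$ overall. For the regime in which $\delta$ is small enough that this quantity would exceed $m$, I will instead use a single ``haystack'' instance where $w_j$ is buried among $\Theta(m)$ edges spread over $\Theta(m/\Deltain)$ hubs of maximum in-degree, forcing $\Omega(m)$ queries; taking the better of the two families proves the claim.

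The main obstacle, and where essentially all the work lies, is to design the gadget so that three competing demands hold simultaneously: (i) $w_j$'s contribution to $t$ exceeds $\delta n\pi(t)$ while every decoy's contribution is smaller; (ii) $w_j$ and the decoys stay indistinguishable under local queries, so in particular they must share an out-degree and the gap in their contributions must be realized by routing part of $w_j$'s out-mass back toward $t$ through an auxiliary sub-structure while the corresponding out-mass of a decoy dead-ends in sinks that never reach $t$ — a difference invisible to any bounded-radius inspection; and (iii) the global budgets on $n$, $m$, $\Deltain$, $\Deltaout$ are respected, which in particular caps the size of the shared decoy pool. I expect to handle (i)--(ii) by sharing the auxiliary sub-structure (and its "confirming'' sinks) across all gadgets so that its mass is counted once, and to control $n\pi(t)$ — which I need to pin at $\Theta(1/\delta)$ so that the threshold $\delta n\pi(t)$ is a constant — by making every auxiliary node either shared or heavily diluted via a large out-degree; the recursive identity~\eqref{eqn:iterative_pagerank} together with the observation that sinks contribute nothing will be the accounting tools. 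The $\sqrt m$ term is forced here: keeping each decoy's contribution below threshold while matching the planted node's degree signature costs on the order of $s$ auxiliary out-edges per decoy when the pool has size $s$, i.e.\ $\Theta(s^2)$ extra edges, so dividing the $m$-edge budget caps $s$ at $\Theta(\sqrt m)$ and the bound at $\Theta(\sqrt m/\delta)$. Finally, when $\Deltain$ is small I cannot afford a genuine high-in-degree hub $v_j$, so I would replace it by a multi-level in-tree of constant maximum in-degree with $\Theta(\min(\Deltain,\Deltaout,\sqrt m))$ leaves, arranging which leaf $w_j$ attaches to so that locating it still costs $\Omega(\min(\Deltain,\Deltaout,\sqrt m))$ queries; verifying that this replacement preserves the degree bounds and the contribution estimates is the remaining bookkeeping.
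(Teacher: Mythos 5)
Your overall architecture is the same as the paper's: hubs whose planted parent sits at a uniformly random position among $\Theta\big(\min(\Deltain,\Deltaout,\sqrt m)\big)$ parents, a shared decoy pool, $\Theta(1/\delta)$ hubs feeding $t$ through a constant-in-degree fan-in tree when $\Deltain$ is small, a separate denser instance for the $\Omega(m)$ cap, and a Yao-style averaging over relabelings. The core counting step (each planted node can only be reached by the one relevant $\parent(v_j,\cdot)$ query, hence $\Omega(\din(v_j))$ queries per gadget even if all other parents are known) is exactly the paper's argument. However, there is a genuine gap in the part you yourself identify as ``where essentially all the work lies'': the demand that decoys have contribution strictly below $\delta n\pi(t)$ and be locally indistinguishable from the planted nodes (matching out-degrees, isomorphic bounded-radius neighborhoods). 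This demand is not needed for the theorem --- the algorithm must output a \emph{superset} of the $\delta$-contributing set, so decoys lying above the threshold are harmless, and the $\Omega(d)$ cost per planted node does not rely on the algorithm being unable to recognize a planted node once it sees it; it relies only on the fact that the algorithm cannot learn which parent index hides it without querying it. The paper's construction exploits exactly this: its decoys ($\mathcal{W}$) have out-degree $d$ while the planted nodes ($\mathcal{U}$) have in-degree $0$ and out-degree $1$, with no attempt at indistinguishability.

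Worse, the mechanism you sketch to meet your extra demand undermines the bound. If each planted node $w_j$ must match the decoys' out-degree and regain an above-threshold contribution by routing $\Theta(s)$ of its out-edges into a \emph{shared} auxiliary sub-structure that leads back to $t$, then that structure is backward-reachable from $t$ by cheap $\parent$ queries, and its nodes have the planted nodes (but not the decoys, whose excess edges dead-end in sinks) among their parents; enumerating those parents reveals all $\Theta(1/\delta)$ planted nodes at a cost far below $\Omega(s/\delta)$, so the hard instance is no longer hard. Trying to repair this by padding the auxiliary in-degrees to $\Omega(s)$ per planted edge, or by giving each $w_j$ an exclusive auxiliary path, runs into the budgets you must respect: either you need $\Omega(s^2/\delta)$ extra edges (exceeding $m$ in the main regime), or in-degrees exceeding $\Deltain$, or you add $\Theta(s/\delta)$ nodes of contribution $\Theta(1)$ to $t$, inflating $n\pi(t)$ and pushing the planted nodes themselves below the threshold $\delta n\pi(t)$. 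None of this bookkeeping is resolved in your sketch, and I do not see a way to resolve it as stated. The fix is simply to drop the indistinguishability/below-threshold requirement and argue, as the paper does, that discovering each exclusive planted parent already costs $\Omega(d)$ expected queries regardless of what the algorithm can deduce about the nodes it has seen; with that change your case analysis and parameter settings go through.
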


\begin{proof}[Proof of Theorem~\ref{thm:contribution_lower_bound}] \hypertarget{proof:contribution_lower_bound}
We prove the result by constructing a graph $H$ for every sufficiently large $n$ as a hard instance for detecting the $\delta$-contributing set.
We will set the input distribution to be the uniform distribution over $\mathcal{H}$, where $\mathcal{H}$ is the set of all graphs isomorphic to $H$ (obtained by permuting the node labels in $H$).
In the following, we describe the overall structure of $H$, derive the lower bound for the expected query complexity of $\mathcal{A}$ over this input distribution, and then instantiate the parameters of $H$ according to the given functions $\Deltain$, $\Deltaout$, $m$, and $\delta$ to obtain the desired lower bounds.

\begin{figure}[ht]
  \centering
  \includegraphics[width=0.9\linewidth]{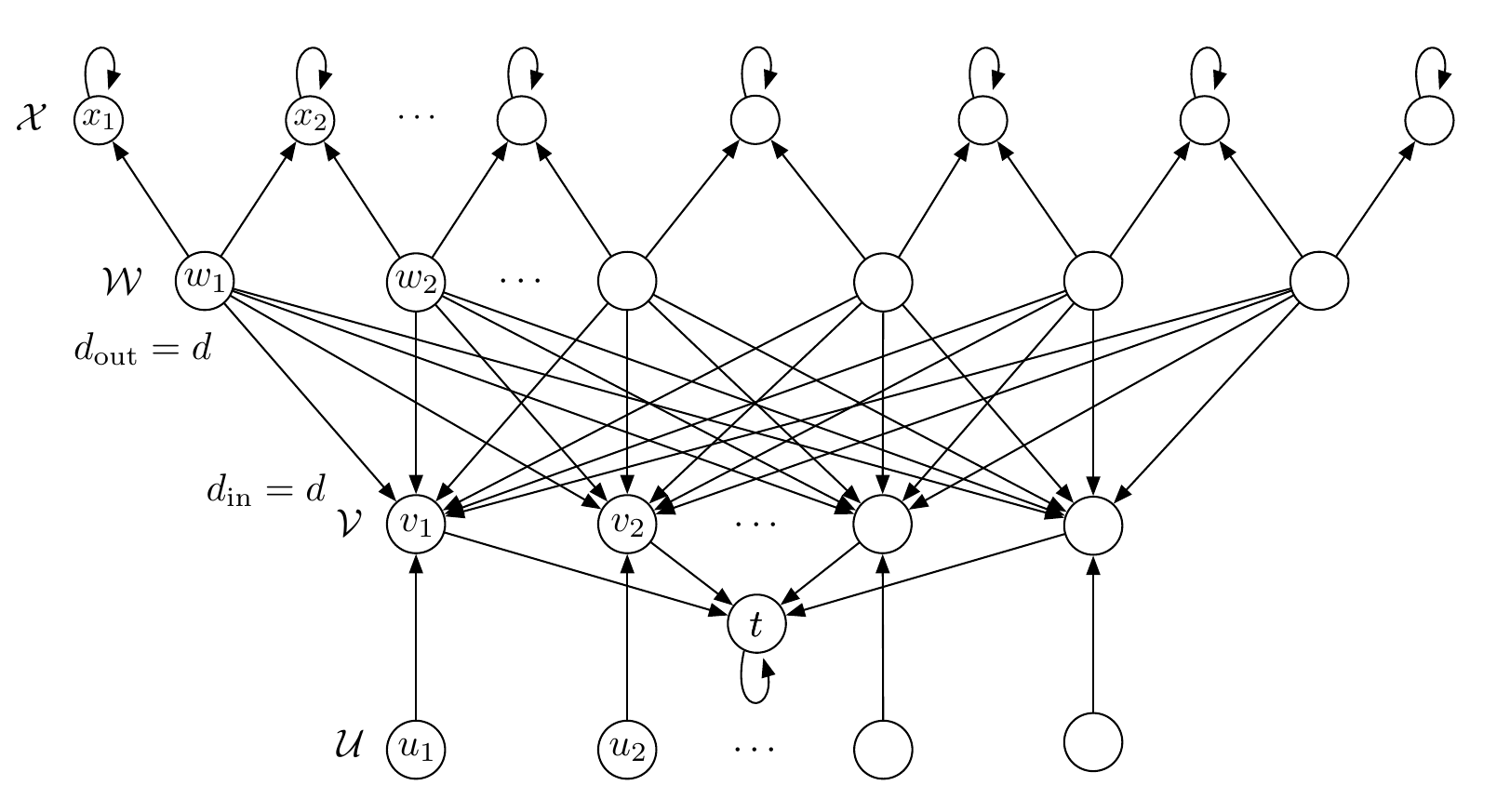}
  \caption{The graph $H$ as a hard instance for detecting the $\delta$-contributing set.} \label{fig:lower_bound_contribution}
\end{figure}

Figure~\ref{fig:lower_bound_contribution} depicts the graph $H$.
The node set of $H$ is composed of five disjoint node sets: $\{t\}$, $\mathcal{U}$, $\mathcal{V}$, $\mathcal{W}$, and $\mathcal{X}$.
We first set $\Nin(t)=\mathcal{V}\cup\{t\}$ and $|\mathcal{U}|=|\mathcal{V}|$, and add an outgoing edge from each node in $\mathcal{U}$ to an exclusive node in $\mathcal{V}$.
Next is the core part of our construction: we add edges from $\mathcal{W}$ to $\mathcal{V}$, making each node in $\mathcal{V}$ has in-degree $d$.
We also enforce that each node in $\mathcal{W}$ has out-degree $d$, and we achieve this by arbitrarily adding outgoing edges from them to nodes in $\mathcal{X}$ if needed, where $|\mathcal{X}|$ is set to be $n$ and each node in $\mathcal{X}$ has a self-loop.
We set $|\mathcal{W}|=\max\big(d,|\mathcal{V}|\big)$ to guarantee that this construction can be done: specifically, $|\mathcal{W}|\ge d$ guarantees that all nodes in $\mathcal{V}$ can receive $d$ incoming edges from different nodes in $\mathcal{W}$, and $|\mathcal{W}|\ge|\mathcal{V}|$ guarantees that the number of outgoing edges from $\mathcal{W}$ suffices to provide the $d|\mathcal{V}|$ incoming edges to $\mathcal{V}$.
We remark that if $|\mathcal{W}|=|\mathcal{V}|$, $\mathcal{X}$ will not receive incoming edges from $\mathcal{W}$.
Finally, we arbitrarily add to $H$ an isolated subgraph with $\Theta(n)$ nodes and $\Theta(m)$ edges, where the maximum in-degree and out-degree in the subgraph is $\Theta(\Deltain)$ and $\Theta(\Deltaout)$, resp.
This isolated subgraph is omitted in the figure.

The idea behind our construction of $H$ is to guarantee that $\mathcal{U}$ is a subset of the $\delta$-contributing set of $t$ and is difficult for $\mathcal{A}$ to detect.
In the following, we first detail the parameter constraints necessary for $H$ to exhibit the desired properties, i.e., to guarantee that $\mathcal{U}$ is a subset of the $\delta$-contributing set of $t$ and that $H$ meets the requirement (i) in the theorem statement.
Subsequently, we establish the difficulty of detecting $\mathcal{U}$ by giving a lower bound on the number of queries required for detecting it, which is an expression of the parameters of $H$.

\header{\textbf{Constraints on the parameters of $\boldsymbol{H}$.}}
We first calculate $t$'s PageRank score and the PPR value from a node $u\in\mathcal{U}$ to $t$, namely, $\pi(t)$ and $\pi(u,t)$.
By Equation~\eqref{eqn:iterative_pagerank}, nodes in $\mathcal{W}$ have PageRank scores $\Theta(1/n)$, nodes in $\mathcal{V}$ have PageRank scores $\Theta(1/n+d\cdot1/n/d)=\Theta(1/n)$, and $\pi(t)=\Theta\big(1/n\cdot|\mathcal{V}|\big)$.
We also have $\pi(u,t)=\Theta(1)$ by definition.
Consequently, $\pi(u,t)\big/\big(n\pi(t)\big)=\Theta\big(1/|\mathcal{V}|\big)$, and thus, as long as $|\mathcal{V}|=O(1/\delta)$, we can guarantee that $\pi(u,t)\ge\delta n\pi(t)$ by setting the hidden constants in the asymptotic notations properly.
As for ensuring that $H$ meets the requirement (i), as $H$ contains an isolated subgraph with proper parameters, we only need to ensure that the parameters of the main portion of $H$ do not surpass the requirements.
To sum up, the parameter constraints are:
\begin{enumerate}
  \item $|\mathcal{V}|=O(1/\delta)$ for guaranteeing the property of $\mathcal{U}$.
  \item The number of nodes $\Theta\left(n+|\mathcal{W}|\right)=\Theta\left(n+\max\big(d,|\mathcal{V}|\big)\right)$ is $O(n)$. \label{constraint:n}
  \item The number of edges $\Theta\big(d|\mathcal{W}|\big)=\Theta\big(d\cdot\max\big(d,|\mathcal{V}|\big)\big)$ is $O(m)$. \label{constraint:m}
  \item The maximum in-degree $\max\big(d,|\mathcal{V}|\big)$ is $O(\Deltain)$. \label{constraint:Deltain}
  \item The maximum out-degree $d=O(\Deltaout)$. \label{constraint:Deltaout}
\end{enumerate}

\header{\textbf{Lower bound.}}
Regarding the difficulty of detecting $\mathcal{U}$, we argue that when the underlying graph is chosen uniformly at random from $\mathcal{H}$, $\mathcal{A}$ must perform $\Omega\big(d|\mathcal{V}|\big)$ queries in expectation to detect all nodes in $\mathcal{U}$ w.p. at least $1/p$.
Recall that $\mathcal{A}$ can only explore unseen nodes through the graph oracle without using the $\textsc{jump}()$ query.
Thus, to detect a node $u\in\mathcal{U}$, $\mathcal{A}$ must call the corresponding $\textsc{parent}(v,\cdot)$ query that returns $u$, where $v$ is the child of $u$.
Note that there are $d$ queries of the form $\textsc{parent}(v,\cdot)$, each of which is equally likely to be the one relevant to $u$, and that $\mathcal{A}$ cannot learn the results of these queries without querying them.
Therefore, it must take $\Omega(d)$ queries in expectation for $\mathcal{A}$ to detect $u$ with a non-vanishing probability.
As this argument holds for each $u\in\mathcal{U}$, the desired lower bound of $\Omega\big(d|\mathcal{U}|\big)=\Omega\big(d|\mathcal{V}|\big)$ follows.

Now it remains to instantiate $|\mathcal{V}|$ and $d$ to derive the claimed $\Omega\left(\min\big(\Deltain/\delta,\Deltaout/\delta,\sqrt{m}/\delta,m\big)\right)$ lower bound.
We will give the settings of $|\mathcal{V}|$ and $d$ for different cases, and it would be straightforward to check that they meet the constraints above.
We first consider the easier case when $\sqrt{m}$ is the smallest among $\Deltain$, $\Deltaout$, and $\sqrt{m}$.

\header{\textbf{Case 1:} $\sqrt{m}=O(\Deltain)$, $\sqrt{m}=O(\Deltaout)$, and $\delta=\Omega\big(1/\sqrt{m}\big)$.}

We set $|\mathcal{V}|=\Theta(1/\delta)$ and $d=\Theta\big(\sqrt{m}\big)$, giving the lower bound of $\Omega\big(d|\mathcal{V}|\big)=\Omega\big(\sqrt{m}/\delta\big)=\Omega\left(\min\big(\Deltain/\delta,\Deltaout/\delta,\sqrt{m}/\delta,m\big)\right)$.

\header{\textbf{Case 2:} $\sqrt{m}=O(\Deltain)$, $\sqrt{m}=O(\Deltaout)$, and $\delta=O\big(1/\sqrt{m}\big)$.}

We set $|\mathcal{V}|=\Theta\big(\sqrt{m}\big)$ and $d=\Theta\big(\sqrt{m}\big)$, giving the lower bound of $\Omega\big(d|\mathcal{V}|\big)=\Omega(m)=\Omega\left(\min\big(\Deltain/\delta,\Deltaout/\delta,\sqrt{m}/\delta,m\big)\right)$.

For the more subtle cases where $\Deltain$ or $\Deltaout$ is the smallest, we need to modify the structure of $H$ slightly.
The problem is that node $t$ has in-degree $|\mathcal{V}|$ in Figure~\ref{fig:lower_bound_contribution}, which can be larger than the required $\Deltain$ if we set, say, $|\mathcal{V}|=\Theta(1/\delta)$ and $d=\Theta(\Deltaout)$.
To tackle this issue, we replace the edges from $\mathcal{V}$ to $t$ by the following \textit{multi-level structure}.

\header{\textbf{Multi-Level Structure.}}
Figure~\ref{fig:multi_level_structure} illustrates the simple multi-level structure.
Without loss of generality, we assume that all quantities involved below are integers.
The structure is akin to a reversed complete $\big(2/(1-\alpha)\big)$-ary tree: node $t$ is the ``root'' at the bottom, other nodes are partitioned into $L$ levels, and each node except the ``leaves'' at the top level has $2/(1-\alpha)$ distinct parents in the level above it.
We denote by $\mathcal{V}_i$ the set of nodes in level $i$ for $1\le i\le L$, counting from top to bottom.
Clearly, $L=\log_{2/(1-\alpha)}|\mathcal{V}_1|$, the number of nodes in the structure is $\Theta\big(|\mathcal{V}_1|\big)$, and the maximum in-degree and out-degree in the structure are both $\Theta(1)$.

\begin{figure}[ht]
\centering
\includegraphics[width=0.6\linewidth]{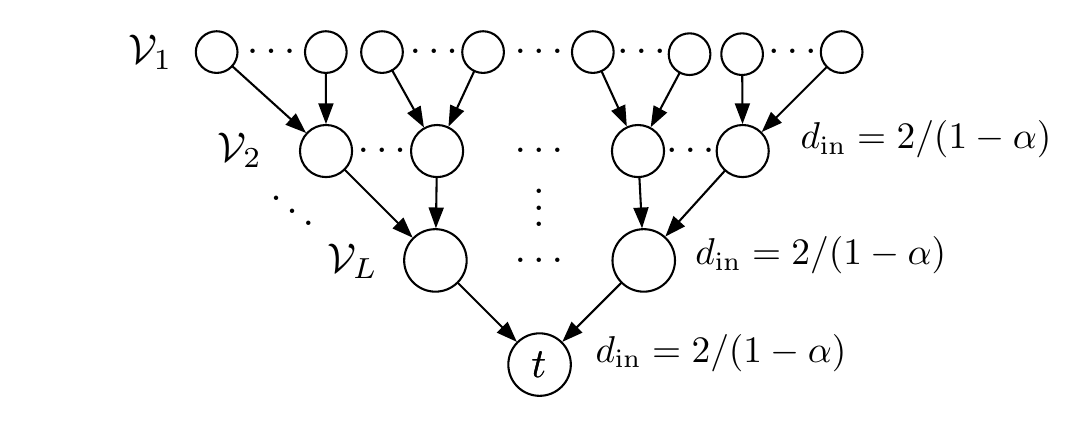}
\caption{The multi-level structure.} \label{fig:multi_level_structure}
\end{figure}

Now, we embed this multi-level structure into $H$, with $\mathcal{V}_1$ in Figure~\ref{fig:multi_level_structure} taking place of $\mathcal{V}$ in Figure~\ref{fig:lower_bound_contribution} (we will regard $\mathcal{V}=\mathcal{V}_1$).
Crucially, we need to analyze $t$'s PageRank score and the PPR score $\pi(v,t)$ for node $v\in\mathcal{V}$ after this modification.
By straightforward mathematical induction, we can prove that the nodes in $\mathcal{V}_i$ have PageRank scores $\Theta\left(2^{i-1}/n\right)$: for the induction step, we can derive that the nodes in $\mathcal{V}_{i+1}$ have PageRank scores $\Theta\left(1/n+2/(1-\alpha)\cdot2^{i-1}/n\cdot(1-\alpha)\right)=\Theta\left(2^i/n\right)$.
Hence, we have $\pi(t)=\Theta\left(1/n+2/(1-\alpha)\cdot2^{L-1}/n\cdot(1-\alpha)\right)=\Theta\left(2^L/n\right)$.
On the other hand, for each $v\in\mathcal{V}$, $\pi(v,t)=\Theta\left((1-\alpha)^L\right)$ and thus $\pi(v,t)\big/\big(n\pi(t)\big)=\Theta\left(\big(1-\alpha)/2\big)^L\right)=\Theta\big(1/|\mathcal{V}|\big)$.
Consequently, for each node $u\in\mathcal{U}$, we also have $\pi(u,t)\big/\big(n\pi(t)\big)=\Theta\big(1/|\mathcal{V}|\big)$, agreeing with the property in the original Figure~\ref{fig:lower_bound_contribution}.

In a word, the inserted multi-level structure takes exactly the same effect as the original edges from $\mathcal{V}$ to $t$ for our arguments, removing the constraint that $|\mathcal{V}|=O(\Deltain)$ without changing other properties of $H$.
Now we are ready to give the settings of $|\mathcal{V}|$ and $d$ for other cases.

\header{\textbf{Case 3:} $\Deltaout=O(\Deltain)$, $\Deltaout=O\big(\sqrt{m}\big)$, and $\delta=\Omega(\Deltaout/m).$}

We set $|\mathcal{V}|=\Theta(1/\delta)$ and $d=\Theta(\Deltaout)$, giving the lower bound of $\Omega\big(d|\mathcal{V}|\big)=\Omega(\Deltaout/\delta)=\Omega\left(\min\big(\Deltain/\delta,\Deltaout/\delta,\sqrt{m}/\delta,m\big)\right)$.
Note that as $m=O(n\Deltaout)$, we have $|\mathcal{V}|=O(m/\Deltaout)=O(n)$.

\header{\textbf{Case 4:} $\Deltaout=O(\Deltain)$, $\Deltaout=O\big(\sqrt{m}\big)$, and $\delta=O(\Deltaout/m).$}

We set $|\mathcal{V}|=\Theta(m/\Deltaout)$ and $d=\Theta(\Deltaout)$, giving the lower bound of $\Omega\big(d|\mathcal{V}|\big)=\Omega(m)=\Omega\left(\min\big(\Deltain/\delta,\Deltaout/\delta,\sqrt{m}/\delta,m\big)\right)$.

\header{\textbf{Case 5:} $\Deltain=O(\Deltaout)$, $\Deltain=O\big(\sqrt{m}\big)$, and $\delta=\Omega(\Deltain/m).$}

We set $|\mathcal{V}|=\Theta(1/\delta)$ and $d=\Theta(\Deltain)$, giving the lower bound of $\Omega\big(d|\mathcal{V}|\big)=\Omega(\Deltain/\delta)=\Omega\left(\min\big(\Deltain/\delta,\Deltaout/\delta,\sqrt{m}/\delta,m\big)\right)$.
Note that as $m=O(n\Deltain)$, we have $|\mathcal{V}|=O(m/\Deltain)=O(n)$.

\header{\textbf{Case 6:} $\Deltain=O(\Deltaout)$, $\Deltain=O\big(\sqrt{m}\big)$, and $\delta=O(\Deltain/m).$}

We set $|\mathcal{V}|=\Theta(m/\Deltain)$ and $d=\Theta(\Deltain)$, giving the lower bound of $\Omega\big(d|\mathcal{V}|\big)=\Omega(m)=\Omega\left(\min\big(\Deltain/\delta,\Deltaout/\delta,\sqrt{m}/\delta,m\big)\right)$.

As the six cases considered above cover all possible relationships among $\Deltain$, $\Deltaout$, $m$, and $\delta$, the proof is completed.
\end{proof}

\header{\textbf{Remark.}}
In summary, the difficulty of detecting the $\delta$-contributing set of $t$ in $H$ lies in detecting an unexplored parent (a node in $\mathcal{U}$) of a node in $\mathcal{V}$ among a number of its visited parents (nodes in $\mathcal{W}$).
This difficulty turns out to be a special property of the arc-centric graph-access model.
In contrast, under the node-centric graph-access model, our structure between $\mathcal{W}$ and $\mathcal{V}$ cannot increase the query complexity since $|\mathcal{V}|$ $\neigh(\cdot)$ queries to nodes in $\mathcal{V}$ suffice to reveal all nodes in $\mathcal{U}\cup\mathcal{V}\cup\mathcal{W}$.
Also, if the algorithm is allowed to access the parents of a node in some particular order, the difficulty of identifying the special parent may be dramatically reduced.
In fact, \RBS~\cite{wang2020personalized} is such an example: by pre-sorting the parents of each node in ascending order of their out-degrees using linear time and space, \RBS can effortlessly detect the special parents in $\mathcal{U}$ at query time, potentially breaking our lower bound under the arc-centric graph-access model.

\section{Lower Bounds for Single-Node PageRank Estimation} \label{sec:SNPR_lower_bound}

This section proves Theorems~\ref{thm:SNPR_lower_bound_informal} and \ref{thm:SNPR_lower_bound_Delta_in_informal}, which establish lower bounds for single-node PageRank estimation.
Theorem~\ref{thm:SNPR_lower_bound_informal} focuses on parameters $n$, $\Deltaout$, and $m$ and gives a clean lower bound of $\Omega\left(n^{1/2}\cdot\min\left(\Deltaout^{1/2},m^{1/4}\right)\right)$, while Theorem~\ref{thm:SNPR_lower_bound_Delta_in_informal} further considers $\Deltain$ and is more technical.
These two proofs share a similar framework with our \hyperlink{proof:contribution_lower_bound}{proof} of Theorem~\ref{thm:contribution_lower_bound}, and our arguments are partly inspired by the proof in \cite[Section~7]{bressan2023sublinear}.
We first give the formal version of Theorem~\ref{thm:SNPR_lower_bound_informal} as Theorem~\ref{thm:SNPR_lower_bound} and prove it.

\begin{theorem} \label{thm:SNPR_lower_bound}
    Choose any integer $p\ge2$ and any functions $\Deltaout(n)\in\Omega(1)\cap O(n)$ and $m(n)\in\Omega(n)\cap O\big(n\Deltaout(n)\big)$.
    Consider any (randomized) algorithm $\mathcal{B}(t)$ that w.p. at least $1/p$ estimates $\pi(t)$ within a multiplicative factor of $O(1)$, where $\mathcal{B}$ can only query the graph oracle to access unseen nodes and edges in the underlying graph.
    Then, for every sufficiently large $n$, there exists a graph $H$ such that:
    (i) $H$ contains $\Theta(n)$ nodes and $\Theta(m)$ edges, and its maximum out-degree is $\Theta(\Deltaout)$;
    (ii) $H$ contains a node $t$ such that $\mathcal{B}(t)$ requires $\Omega\left(n^{1/2}\cdot\min\left(\Deltaout^{1/2},m^{1/4}\right)\right)$ queries in expectation.
\end{theorem}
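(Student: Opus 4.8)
The plan is to adapt the hard-instance construction from the proof of Theorem~\ref{thm:contribution_lower_bound}, but this time arranging the construction so that a large, hard-to-detect portion of the graph is responsible for a constant fraction of $\pi(t)$. The idea is that if the algorithm fails to discover this portion, it will substantially underestimate $\pi(t)$, so any algorithm achieving a constant-factor approximation w.p. at least $1/p$ must spend enough queries to find it. Concretely, I would reuse the gadget in which a set $\mathcal{V}$ of nodes (each feeding into $t$, possibly via the multi-level structure to control $\Deltain$—though here only $\Deltaout$ and $m$ matter) receives incoming edges both from a hidden set $\mathcal{U}$ (each $u\in\mathcal{U}$ having a single outgoing edge into an exclusive node of $\mathcal{V}$) and from a ``camouflage'' set $\mathcal{W}$ with out-degree $d$, so that each node of $\mathcal{V}$ has in-degree $d$. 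The contribution of $\mathcal{U}$ to $\pi(t)$ should be a constant fraction of $\pi(t)$, which (unlike in Theorem~\ref{thm:contribution_lower_bound}, where a single node of $\mathcal{U}$ needed to be in the $\delta$-contributing set) forces $|\mathcal{U}|=|\mathcal{V}|=\Theta(n)$ so that the mass injected at the $\Theta(n)$ nodes of $\mathcal{U}$, each with PageRank $\Theta(1/n)$, is comparable to the remaining mass reaching $t$.

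First I would write down the PageRank bookkeeping for this instance: nodes in $\mathcal{W}$ and $\mathcal{V}$ have PageRank $\Theta(1/n)$, the nodes of $\mathcal{U}$ contribute $\Theta(1/n)$ each into their exclusive $\mathcal{V}$-node, and summing over $\mathcal{V}$ gives $\pi(t)=\Theta(|\mathcal{V}|/n)$, of which a $\Theta(1)$ fraction is traceable to $\mathcal{U}$. Crucially, if the algorithm halts having discovered only a $(1-\Omega(1))$ fraction of $\mathcal{U}$, then on the instance where the undiscovered $\mathcal{U}$-nodes are ``switched off'' (their outgoing edges redirected to dummy sinks) $\pi(t)$ changes by a constant factor, yet the algorithm's view is identical; hence no constant-factor estimator can succeed w.p. $\ge 1/p$ without discovering a constant fraction of $\mathcal{U}$. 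Second, I would lower-bound the cost of discovering a constant fraction of $\mathcal{U}$: to detect $u\in\mathcal{U}$ the algorithm must call the particular $\parent(v,\cdot)$ query returning $u$, where $v$ is $u$'s unique child; among the $d$ parent-slots of $v$, the relevant one is uniformly random and its identity cannot be learned without querying it, so detecting a $\Theta(1)$ fraction of the $|\mathcal{U}|$ nodes costs $\Omega(d\,|\mathcal{U}|)=\Omega(d\,|\mathcal{V}|)$ queries in expectation (a Yao-style argument over the uniform input distribution on $\mathcal{H}$, exactly as in the previous proof, possibly needing a little care to argue about a constant fraction rather than all of $\mathcal{U}$). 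Setting $|\mathcal{V}|=\Theta(n)$ and $d=\Theta\!\big(\min(\Deltaout,\sqrt{m})\big)$—checking $d=O(\Deltaout)$, node count $\Theta(n+|\mathcal{W}|)=\Theta(n)$ since $|\mathcal{W}|=\max(d,|\mathcal{V}|)=\Theta(n)$, and edge count $\Theta(d|\mathcal{W}|)=\Theta(dn)=O(m)$ because $d=O(\sqrt m)$ and $d=O(\Deltaout)$ with $m=O(n\Deltaout)$—yields the lower bound $\Omega\big(d|\mathcal{V}|\big)=\Omega\!\big(n\cdot\min(\Deltaout,\sqrt m)\big)$. A brief reflection shows this is not quite $\Omega\big(n^{1/2}\min(\Deltaout^{1/2},m^{1/4})\big)$; I must instead take $|\mathcal{V}|=\Theta(n/d)$ or similar so that $\mathcal{U}$'s relative contribution is balanced against query cost, and then optimize $d$, landing on $d=\Theta\!\big(\min(\Deltaout,\sqrt m)\big)$ but $|\mathcal{V}|=\Theta\!\big(n^{1/2}\min(\Deltaout^{1/2},m^{1/4})\big)$, so that the ``switch-off'' perturbation still moves $\pi(t)$ by a constant factor—this rebalancing is the delicate accounting step.

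The main obstacle I anticipate is precisely this indistinguishability-versus-multiplicative-approximation argument: unlike detecting a contributing set, where a single missed node is fatal, here I need the aggregate effect of the missed portion of $\mathcal{U}$ on $\pi(t)$ to exceed the allowed multiplicative slack, and simultaneously keep the instance within the prescribed $n$, $m$, $\Deltaout$ budgets. I expect to handle this by a coupling between the ``full'' instance and a ``pruned'' instance in which a random constant fraction of $\mathcal{U}$ has its outgoing edges rerouted to self-looping sinks (so $\Deltaout$, $m$, $n$ are essentially unchanged): the two instances are statistically indistinguishable to any algorithm making $o(d|\mathcal{U}|)$ queries, yet their $\pi(t)$ values differ by a constant factor by the PageRank computation above, so no algorithm can be a correct constant-factor estimator on both with probability $\ge 1/p$ unless it makes $\Omega(d|\mathcal{U}|)$ queries; averaging over the isomorphism class $\mathcal{H}$ and invoking Yao's principle then gives the stated expected-query lower bound for a deterministic algorithm and hence for any randomized one. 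A secondary, more mechanical obstacle is verifying all the parameter constraints hold in each regime of $(\Deltaout, m)$ (in particular $d\le\Deltaout$ and $d\le\sqrt m$ simultaneously with the node/edge budgets), which I would organize into a short case split as in the proof of Theorem~\ref{thm:contribution_lower_bound}, with an isolated padding subgraph added at the end to exactly meet requirement~(i).
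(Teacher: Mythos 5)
Your construction has the right skeleton (the $\mathcal{U}$--$\mathcal{V}$--$\mathcal{W}$ gadget, a hidden portion of the graph responsible for a constant fraction of $\pi(t)$, an indistinguishability argument over the isomorphism class), but there is a genuine gap at the heart of the lower-bound argument: you never account for the $\jump()$ query, which \emph{is} available to $\mathcal{B}$ in this theorem (unlike in Theorem~\ref{thm:contribution_lower_bound}). Your claim that the full and pruned instances are indistinguishable to any algorithm making $o(d|\mathcal{U}|)$ queries is false: an algorithm can invoke $\jump()$ $\Theta(n/|\mathcal{U}|)$ times, recognize which sampled nodes are ``live'' members of $\mathcal{U}$ (follow two out-edges and see whether $t$ is reached), and thereby estimate the live fraction of $\mathcal{U}$ --- and hence $\pi(t)$ to within a constant factor --- entirely bypassing the $\parent(\cdot,\cdot)$ bottleneck. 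This is precisely why the true obstruction is $\Omega\left(\min\big(d|\mathcal{V}|,\,n/|\mathcal{V}|\big)\right)$ and why the parameters must be balanced so that $d|\mathcal{V}|=\Theta(n/|\mathcal{V}|)$, i.e.\ $|\mathcal{V}|=\Theta\left(n^{1/2}d^{-1/2}\right)$ with $d=\Theta\big(\min(\Deltaout,\sqrt{m})\big)$. You do sense that rebalancing is needed, but you attribute it to the wrong cause (the perturbation of $\pi(t)$ remains a constant factor for any $|\mathcal{V}|$) and your stated value $|\mathcal{V}|=\Theta\big(n^{1/2}\min(\Deltaout^{1/2},m^{1/4})\big)=\Theta\big(n^{1/2}d^{1/2}\big)$ has the exponent of $d$ inverted; plugging it in gives $\min\big(d|\mathcal{V}|,n/|\mathcal{V}|\big)=n^{1/2}d^{-1/2}$, far below the claimed bound. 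The paper's proof localizes the hidden information differently to make the $\jump()$ tradeoff clean: only one node $u_*\in\mathcal{U}$ is special, and a set $\mathcal{Y}$ of $\Theta(|\mathcal{V}|)$ nodes pointing (or not) to $u_*$ carries the mass, so finding the hidden part costs either $\Omega(d|\mathcal{V}|)$ local queries (the one revealing slot among $\Theta(d|\mathcal{V}|)$ parent slots of $\mathcal{V}$) or $\Omega(n/|\mathcal{Y}|)$ jumps.

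A secondary problem is your two-instance (``full'' vs.\ ``half-pruned'') design: the theorem only assumes success probability $1/p\le1/2$, so on a uniform mixture of two instances a query-free algorithm that always outputs the $\pi(t)$ value of the full instance already succeeds with probability $1/2\ge1/p$. The paper avoids this by constructing $p+1$ graphs $H_0,\dots,H_p$ with $\pi_i(t)=\big(1+\Omega(1)\big)\pi_{i-1}(t)$, so that no fixed answer is acceptable with probability exceeding $1/(p+1)<1/p$ without distinguishing the instances. You would need a comparable multi-instance family (or a sharper argument) to close this.
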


\begin{proof}[Proof of Theorem~\ref{thm:SNPR_lower_bound}] \hypertarget{proof:SNPR_lower_bound}
We prove the result by constructing a family of graphs $\{H_i\}_{i=0}^{p}$ for every sufficiently large $n$ as hard instances for estimating single-node PageRank.
These graphs $\{H_i\}$ are designed to be difficult for the algorithm $\mathcal{B}$ to distinguish, but $\mathcal{B}$ must distinguish among them to meet the approximation requirement for estimating $\pi(t)$.
$\{H_i\}$ are constructed based on the graph $H$ given in the \hyperlink{proof:contribution_lower_bound}{proof} of Theorem~\ref{thm:contribution_lower_bound}.
In this proof, as we do not consider $\Deltain$, we do not need to use the multi-level structure.
We will set the input distribution to be the uniform distribution over the graph set $\bigcup_{i=0}^{p}\mathcal{H}_i$, where $\mathcal{H}_i$ is the set of all graphs isomorphic to $H_i$.
In the following, we describe the overall structure of $\{H_i\}$, derive the lower bound for the expected query complexity of $\mathcal{B}$ over this input distribution, and then instantiate the parameters of $\{H_i\}$ according to the given functions $\Deltaout$ and $m$ to obtain the desired lower bounds.

We depict the graph $H_i$ (for $0\le i\le p$) in Figure~\ref{fig:lower_bound_SNPR}.
Basically, $H_i$ is obtained by adding to $H$ a node set $\mathcal{Y}$ and its relevant edges, where we set $|\mathcal{Y}|=\Theta\big(|\mathcal{V}|\big)$.
We arbitrarily designate a node in $\mathcal{U}$ as $u_{*}$ and denote its unique child in $\mathcal{V}$ as $v_{*}$.
In $H_i$, $i/p\cdot|\mathcal{Y}|$ nodes in $\mathcal{Y}$ have an outgoing edge to $u_{*}$ and other nodes in $\mathcal{Y}$ only have a self-loop (again, without loss of generality we assume that all quantities involved are integers).
Note that for different $i$, the only difference between $H_i$ lies in the portion of nodes in $\mathcal{Y}$ that point to $u_{*}$.

\begin{figure}[ht]
  \centering
  \includegraphics[width=0.9\linewidth]{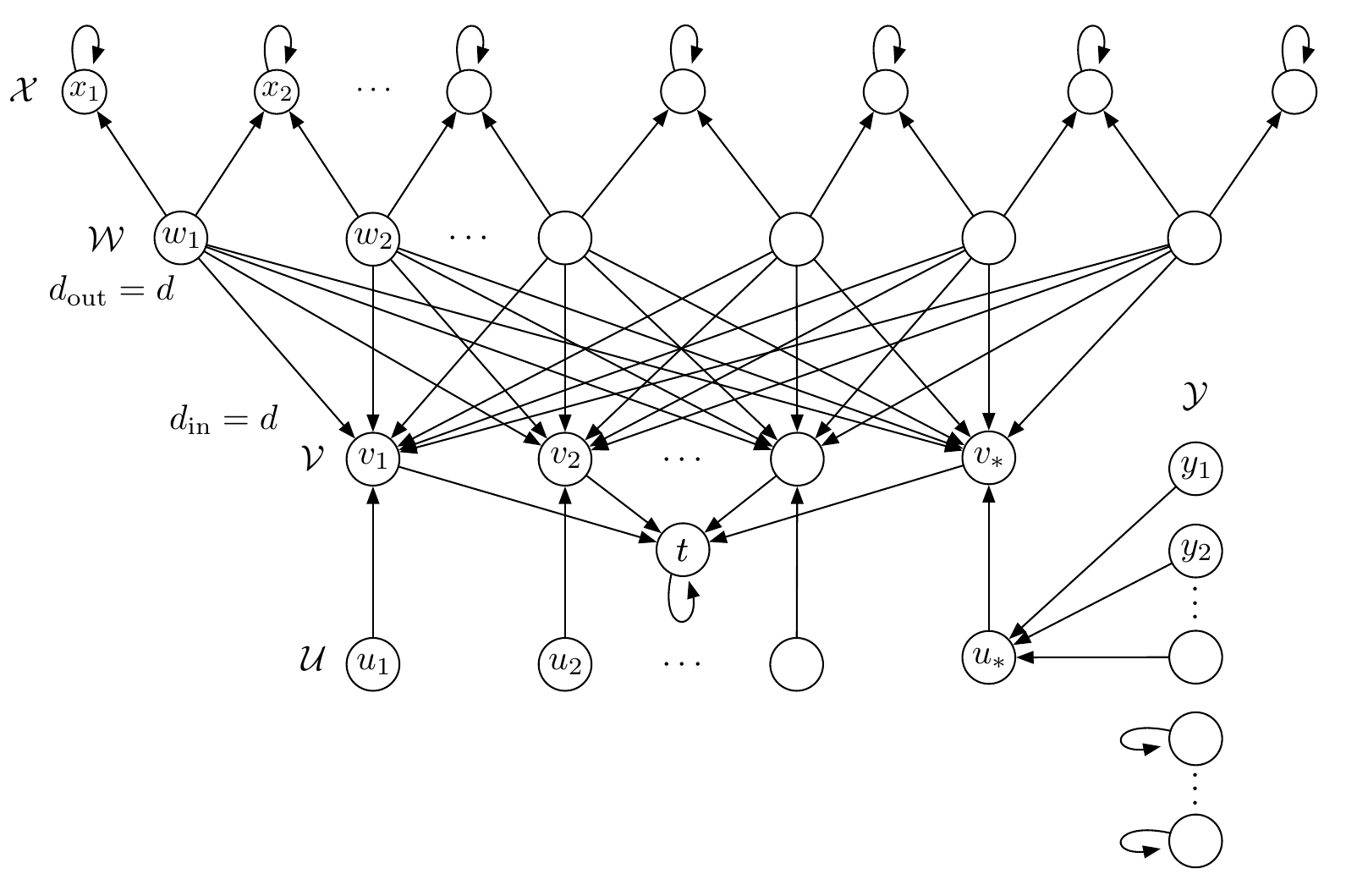}
  \caption{The graph $H_i$ for $0\le i\le p$ as hard instances for estimating single-node PageRank. Node $u_{*}$ has $i/p\cdot|\mathcal{Y}|$ parents in $\mathcal{Y}$.} \label{fig:lower_bound_SNPR}
\end{figure}

We first show that when the underlying graph is chosen uniformly at random from $\bigcup_{i=0}^{p}\mathcal{H}_i$, $\mathcal{B}$ must distinguish between the graphs $\{H_i\}_{i=0}^{p}$ to output an acceptable approximation of $\pi(t)$.
We denote the PageRank score of $t$ in $H_i$ as $\pi_i(t)$, and we aim to show that $\pi_i(t)=\big(1+\Omega(1)\big)\pi_{i-1}(t)$ for each $1\le i\le p$.
Recall that nodes in $\mathcal{V}\setminus\{v_{*}\}$ have PageRank score $\Theta(1/n)$.
We can compute that the PageRank score of $u_{*}$ is $\Theta\left(1/n\cdot\big(1+i/p\cdot|\mathcal{Y}|\big)\right)$, which is $O\big(|\mathcal{Y}|/n\big)=O\big(|\mathcal{V}|/n\big)$.
Thus, $v_{*}$ has PageRank score $O\big(|\mathcal{V}|/n\big)$ and $\pi_i(t)=\Theta\big(|\mathcal{V}|/n\big)$.
We also have $\pi_i(t)-\pi_{i-1}(t)=\Theta\big(1/n\cdot1/p\cdot|\mathcal{Y}|\big)=\Theta\big(|\mathcal{V}|/n\big)$, which implies $\pi_i(t)=\big(1+\Omega(1)\big)\pi_{i-1}(t)$ for each $1\le i\le p$.
Thus, if $\mathcal{B}$ does not distinguish between these graphs $\{H_i\}_{i=0}^{p}$, then the probability that it estimates $\pi(t)$ within a multiplicative factor of $O(1)$ is at most $1/(p+1)$, which is not acceptable.

Furthermore, by our construction, $\mathcal{B}$ must detect at least one node in $\{u_{*}\}\cup\mathcal{Y}$ to distinguish between $\{H_i\}$ since other parts of the graphs are identical.
Formally speaking, as long as no node in $\{u_{*}\}\cup\mathcal{Y}$ is detected by $\mathcal{B}$, conditioned on the outcome of the past queries the probability that the underlying graph is $H_i$ remains $1/(p+1)$ for each $0\le i\le p$.

Now we argue that $\mathcal{B}$ must perform $\Omega\left(\min\big(d|\mathcal{V}|,n/|\mathcal{V}|\big)\right)$ queries in expectation to detect at least one node in $\{u_{*}\}\cup\mathcal{Y}$ w.p. at least $1/p$.
Recall that $\mathcal{B}$ can only explore unseen nodes in the underlying graph through local queries or the global $\jump()$ query to the graph oracle.
Note that the output of the $\jump()$ query is independent of other operations made by $\mathcal{B}$.
To detect one node in $\{u_{*}\}\cup\mathcal{Y}$ w.p. at least $1/p$ using the $\jump()$ query, $\mathcal{B}$ must perform $\Omega\big(n/|\mathcal{Y}|\big)=\Omega\big(n/|\mathcal{V}|\big)$ $\textsc{jump}()$ queries in expectation.

If $\mathcal{B}$ fails to detect one of these nodes using the $\jump()$ query, then the only way for $\mathcal{B}$ to access them is to invoke the particular $\parent(v_{*},\cdot)$ query that returns $u_{*}$.
Now consider all $\Theta\big(d|\mathcal{V}|\big)$ different $\parent(\cdot,\cdot)$ queries relevant to the nodes in $\mathcal{V}$.
Since the node labels are uniformly randomly permuted in our input distribution, each of these queries is equally likely to be the query that returns $u_{*}$.
Thus, $\mathcal{B}$ needs $\Omega\big(d|\mathcal{V}|\big)$ $\parent(\cdot,\cdot)$ queries in expectation to detect $u_{*}$ with a non-vanishing probability.
Although finding a $\parent(\cdot,\cdot)$ query that connects $\mathcal{V}$ and $\mathcal{U}$ could help $\mathcal{B}$ to determine whether the returned node is the node $u_{*}$, this lower bound still holds since identifying such a query still needs $\Omega(d)$ time in expectation.
Combining the two bounds above yields the claimed lower bound of $\Omega\left(\min\big(d|\mathcal{V}|,n/|\mathcal{V}|\big)\right)$.

Now it remains to instantiate $|\mathcal{V}|$ and $d$ to derive the claimed $\Omega\left(n^{1/2}\cdot\min\left(\Deltaout^{1/2},m^{1/4}\right)\right)$ lower bound.
We consider two cases depending on the relationship between $\Deltaout$ and $\sqrt{m}$.
The settings of $|\mathcal{V}|$ and $d$ should satisfy Constraints~\ref{constraint:n}, \ref{constraint:m}, and \ref{constraint:Deltaout} given in the \hyperlink{proof:contribution_lower_bound}{proof} of Theorem~\ref{thm:contribution_lower_bound}, which is straightforward to verify.

\header{\textbf{Case 1:} $m^{1/2}=O(\Deltaout)$.}

We set $|\mathcal{V}|=\Theta\left(n^{1/2}m^{-1/4}\right),d=\Theta\left(m^{1/2}\right)$, giving the lower bound of $\Omega\left(\min\big(d|\mathcal{V}|,n/|\mathcal{V}|\big)\right)=\Omega\left(n^{1/2}m^{1/4}\right)=\Omega\left(n^{1/2}\cdot\min\left(\Deltaout^{1/2},m^{1/4}\right)\right)$.

\header{\textbf{Case 2:} $\Deltaout=O\left(m^{1/2}\right)$.}

We set $|\mathcal{V}|=\Theta\left(n^{1/2}\Deltaout^{-1/2}\right),d=\Theta(\Deltaout)$, giving the lower bound of $\Omega\left(\min\big(d|\mathcal{V}|,n/|\mathcal{V}|\big)\right)=\Omega\left(n^{1/2}\Deltaout^{1/2}\right)=\Omega\left(n^{1/2}\cdot\min\left(\Deltaout^{1/2},m^{1/4}\right)\right)$.

The proof is finished by combining these two cases.
\end{proof}

\header{\textbf{Remark.}}
Similarly, the difficulty of our hard instances here comes from the structure between $\mathcal{W}$ and $\mathcal{V}$ and is a property of the arc-centric graph-access model.
In comparison, the instances given in \cite{bressan2018sublinear,bressan2023sublinear} do not encompass this two-level structure and the decisive node therein (corresponding to our $u_{*}$) is a direct parent of the target node $t$, which is easier to detect.

Next, we focus on Theorem~\ref{thm:SNPR_lower_bound_Delta_in_informal}.
As Theorem~\ref{thm:SNPR_lower_bound_Delta_in_informal} further integrates $\Deltain$ as a parameter, it introduces some technicalities because, in Figure~\ref{fig:lower_bound_SNPR}, node $t$ and $u_{*}$ may have large in-degrees.
When $\min(\Deltain,\Deltaout)=\Omega\left(n^{1/3}\right)$, our previous construction of $\{H_i\}$ still works and we are able to prove a clean lower bound of $\Omega\left(n^{1/2}\cdot\min\left(\Deltain^{1/2},\Deltaout^{1/2},m^{1/4}\right)\right)$.
When $\min(\Deltain,\Deltaout)=o\left(n^{1/3}\right)$, again we need to embed the multi-level structure (see Figure~\ref{fig:multi_level_structure} in the \hyperlink{proof:contribution_lower_bound}{proof} of Theorem~\ref{thm:contribution_lower_bound}) into the constructed graphs $\{H_i\}$ to diminish their maximum in-degrees.
However, for single-node PageRank estimation, this modification cannot fully preserve the properties of the graphs $\{H_i\}$ and we prove a slightly weaker lower bound of $\Omega\left(n^{1/2-\smallexpo}\big(\min(\Deltain,\Deltaout)\big)^{1/2+\smallexpo}\right)$ for any constant $\smallexpo>0$.
We formally state Theorem~\ref{thm:SNPR_lower_bound_Delta_in_informal} as Theorem~\ref{thm:SNPR_lower_bound_Delta_in} and prove it.

\begin{theorem} \label{thm:SNPR_lower_bound_Delta_in}
    Choose any integer $p\ge2$ and any functions $\Deltain(n),\Deltaout(n)\in\Omega(1)\cap O(n)$ and $m(n)\in\Omega(n)\cap O\big(n\Deltain(n)\big)\cap O\big(n\Deltaout(n)\big)$.
    Consider any (randomized) algorithm $\mathcal{B}(t)$ that w.p. at least $1/p$ estimates $\pi(t)$ within a multiplicative factor of $O(1)$, where $\mathcal{B}$ can only query the graph oracle to access unseen nodes and edges in the underlying graph.

    If $\min(\Deltain,\Deltaout)=\Omega\left(n^{1/3}\right)$, then for every sufficiently large $n$, there exists a graph $H$ such that:
    (i) $H$ contains $\Theta(n)$ nodes and $\Theta(m)$ edges, and its maximum in-degree and out-degree is $\Theta(\Deltain)$ and $\Theta(\Deltaout)$, resp.;
    (ii) $H$ contains a node $t$ such that $\mathcal{B}(t)$ requires $\Omega\left(n^{1/2}\cdot\min\left(\Deltain^{1/2},\Deltaout^{1/2},m^{1/4}\right)\right)$ queries in expectation.

    If instead $\min(\Deltain,\Deltaout)=o\left(n^{1/3}\right)$, then for any constant real number $\smallexpo>0$ and every sufficiently large $n$, there exists a graph $H$ such that:
    (i) $H$ contains $\Theta(n)$ nodes and $\Theta(m)$ edges, and its maximum in-degree and out-degree is $\Theta(\Deltain)$ and $\Theta(\Deltaout)$, resp.;
    (ii) $H$ contains a node $t$ such that $\mathcal{B}(t)$ requires $\Omega\left(n^{1/2-\smallexpo}\big(\min(\Deltain,\Deltaout)\big)^{1/2+\smallexpo}\right)$ queries in expectation.
\end{theorem}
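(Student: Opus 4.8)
The plan is to follow the framework of the \hyperlink{proof:SNPR_lower_bound}{proof} of Theorem~\ref{thm:SNPR_lower_bound}, i.e., build a family $\{H_i\}_{i=0}^{p}$ of graphs that $\mathcal{B}$ must distinguish to estimate $\pi(t)$ within a constant factor, but now additionally controlling the maximum in-degree. Recall that in $H_i$ the only in-degrees that can be large are those of $t$ (namely $|\mathcal{V}|$), of $u_{*}$ (namely $\Theta(|\mathcal{V}|)$ via its $\mathcal{Y}$-parents), and of the nodes of $\mathcal{V}$ (namely $d$), while the lower bound obtained there is $\Omega\!\left(\min(d|\mathcal{V}|,\,n/|\mathcal{V}|)\right)$ and is optimized by $d=\Theta\!\big(\min(\Deltain,\Deltaout,\sqrt{m})\big)$ and $|\mathcal{V}|=\Theta\!\big(\sqrt{n/d}\big)$ (up to a small constant), together with the corresponding case analysis. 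For the first regime, $\min(\Deltain,\Deltaout)=\Omega(n^{1/3})$, I would argue that this construction already suffices verbatim: with the above settings one has $d\le\min(\Deltain,\Deltaout)$ for the in-degree of $\mathcal{V}$-nodes and the out-degree of $\mathcal{W}$-nodes, while $|\mathcal{V}|=\Theta(\sqrt{n/d})=O(\Deltain)$ (hence also $\deg(t),\din(u_{*})=O(\Deltain)$) because $n=O\!\big(d\,\Deltain^{2}\big)$ holds whenever $\min(\Deltain,\Deltaout)=\Omega(n^{1/3})$ (checking the two subcases $\sqrt{m}\lessgtr\min(\Deltain,\Deltaout)$ exactly as in the \hyperlink{proof:contribution_lower_bound}{proof} of Theorem~\ref{thm:contribution_lower_bound}). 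The maximum in-degree and out-degree are then padded up to $\Theta(\Deltain)$, $\Theta(\Deltaout)$ by the isolated subgraph, the PageRank computation $\pi_i(t)=\Theta\big((1+i/p)|\mathcal{V}|/n\big)$ is unchanged, and the query argument gives $\Omega\!\left(\min(d|\mathcal{V}|,n/|\mathcal{V}|)\right)=\Omega\!\left(n^{1/2}\min(\Deltain^{1/2},\Deltaout^{1/2},m^{1/4})\right)$.

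For the second regime, $\min(\Deltain,\Deltaout)=o(n^{1/3})$, the in-degrees of $t$ and of $u_{*}$ may now exceed $\Deltain$, and I would fix this by inserting the multi-level structure of Figure~\ref{fig:multi_level_structure} in place of the edges feeding into $t$ and (symmetrically) those feeding into $u_{*}$, but with branching factor $b=\Theta\!\big(\min(\Deltain,\Deltaout)\big)$ rather than the constant $2/(1-\alpha)$, so that the new maximum in-degree is $\Theta(\Deltain)$ and the structures have depth $L=\Theta(\log_{b}|\mathcal{V}|)$. The crucial quantity is the attenuation the ``heavy path'' from $v_{*}$ to $t$ suffers, which by the same level-by-level induction on the PageRank scores as in the \hyperlink{proof:contribution_lower_bound}{proof} of Theorem~\ref{thm:contribution_lower_bound} equals $(1-\alpha)^{L}=|\mathcal{V}|^{-\Theta(1/\ln b)}$; this is only a constant factor when $b$ is polynomial in $n$ (the structure then has constant depth), is $|\mathcal{V}|^{-o(1)}$ when $b$ is polylogarithmic, and in general costs a loss with exponent $\Theta(1/\ln b)=o(1)$. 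Since the heavy-path mass is attenuated while the ``background'' mass that the $\Theta(1/n)$-nodes deliver to $t$ is not, I would compensate by enlarging the signal — e.g. using $k=\Theta(|\mathcal{V}|/\Deltain)$ ``decisive'' nodes $u_{*}^{(1)},\dots,u_{*}^{(k)}$, each attached to its own exclusive $\mathcal{V}$-node and each carrying an $\mathcal{Y}$-attachment of size $\Theta(i\Deltain/p)$ in $H_i$ — so that the total heavy-path contribution again becomes $\Theta\big((i/p)\cdot|\mathcal{V}|^{1-\Theta(1/\ln b)}/n\big)$ and $\pi_i(t)=(1\pm\Omega(1))\,\pi_{i-1}(t)$ is preserved up to the $o(1)$-slack. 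I would then rerun the query argument (detecting one node of $\{u_{*}^{(1)},\dots,u_{*}^{(k)}\}\cup\mathcal{Y}$ still costs $\Omega\!\left(\min(d|\mathcal{V}|/k,\,n/|\mathcal{V}|)\right)$, and exploring the $O(|\mathcal{V}|)$-node multi-level structures does not change the order) and re-optimize $d$, $|\mathcal{V}|$, $k$, $b$ subject to the node, edge, in-degree, and out-degree budgets, yielding $\Omega\!\left(n^{1/2-o(1)}\big(\min(\Deltain,\Deltaout)\big)^{1/2+o(1)}\right)$; because the $o(1)$ exponents tend to $0$, for every constant $\smallexpo>0$ and every sufficiently large $n$ this is $\Omega\!\left(n^{1/2-\smallexpo}\big(\min(\Deltain,\Deltaout)\big)^{1/2+\smallexpo}\right)$, as claimed.

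The hard part will be the bookkeeping in the second regime. Unlike in Theorem~\ref{thm:contribution_lower_bound}, where the multi-level structure is a perfect replacement, here it genuinely degrades the instances: one must quantify precisely how the attenuation exponent $\Theta(1/\ln b)$ propagates into the multiplicative gap $\pi_i(t)/\pi_{i-1}(t)$, argue that the compensating devices (more decisive nodes, or suitably sized $\mathcal{Y}$-attachments behind their own shallow multi-level funnels) restore a $(1\pm\Omega(1))$-factor gap while keeping the decisive structure hard to locate, and verify that the global $\jump()$ query does not short-circuit the instance once there are $k>1$ decisive nodes — this last point forces the balance $k=\Theta(|\mathcal{V}|/\Deltain)$ above rather than $k=1$. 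Simultaneously, all of the constraints (number of nodes $O(n)$, number of edges $O(m)$, maximum in-/out-degree $\Theta(\Deltain)$/$\Theta(\Deltaout)$) must be met while $\min(d|\mathcal{V}|/k,\,n/|\mathcal{V}|)$ is pushed as high as possible, which I expect to require splitting into subcases according to the relative sizes of $\Deltain$, $\Deltaout$, and $m$, mirroring the six-case analysis in the \hyperlink{proof:contribution_lower_bound}{proof} of Theorem~\ref{thm:contribution_lower_bound}.
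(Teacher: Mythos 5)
Your treatment of the first regime ($\min(\Deltain,\Deltaout)=\Omega(n^{1/3})$) is correct and coincides with the paper's: the construction of Theorem~\ref{thm:SNPR_lower_bound} is reused unchanged, and the check $|\mathcal{V}|=O(\Deltain)$ goes through exactly as you describe.

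The second regime, however, has a genuine gap in the compensation mechanism. Splitting the signal over $k=\Theta(|\mathcal{V}|/\Deltain)$ decisive nodes, each with a direct $\mathcal{Y}$-attachment of size $O(\Deltain)$, caps the parent-query term of your bound at
\begin{align*}
\frac{d|\mathcal{V}|}{k}=\Theta(d\cdot\Deltain),
\end{align*}
\emph{independently of} $|\mathcal{V}|$: finding a single decisive node already reveals $i$ (its in-degree is $\Theta(i\Deltain/p)$), and one of $k$ special parent slots among $d|\mathcal{V}|$ can be located in $O(d|\mathcal{V}|/k)$ queries in expectation. In the subcase $\Deltain\le\Deltaout$ with $\Deltain=o(n^{1/3})$ this is $O(\Deltain^{2})$, which is polynomially \emph{smaller} than the target $n^{1/2-\smallexpo}\Deltain^{1/2+\smallexpo}$, since $\Deltain^{2}<n^{1/2}\Deltain^{1/2}$ precisely when $\Deltain^{3}<n$; the subcase $\Deltaout<\Deltain$ fails the same way. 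So the quantity $\min(d|\mathcal{V}|/k,\,n/|\mathcal{V}|)$ cannot be pushed up to the claimed bound no matter how you tune $|\mathcal{V}|$. The paper's fix is the one you mention in passing and then discard: keep a \emph{single} decisive node $u_{*}$, funnel a polynomially enlarged set $\mathcal{Y}$ with $|\mathcal{Y}|=|\mathcal{V}|^{1-\beta}$ (where $\beta=\log_{k}(1-\alpha)<0$) into $u_{*}$ through a second multi-level structure so that $\din(u_{*})=O(1)$, and pay the attenuation $(1-\alpha)^{L_{\mathcal{V}}}$ by degrading the $\jump()$ term from $n/|\mathcal{V}|$ to $n/|\mathcal{Y}|$; balancing $d|\mathcal{V}|=n/|\mathcal{Y}|$ then yields $n^{1/(2-\beta)}d^{1-1/(2-\beta)}$, which is where the controlled $\smallexpo$-loss actually comes from. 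A secondary issue is your choice of branching factor $b=\Theta(\min(\Deltain,\Deltaout))$: the attenuation exponent $\Theta(1/\ln b)$ is $o(1)$ only when $\min(\Deltain,\Deltaout)=\omega(1)$, whereas for $\min(\Deltain,\Deltaout)=\Theta(1)$ it is a fixed constant with no relation to the prescribed $\smallexpo$. The paper instead uses a \emph{constant} branching factor $k/(1-\alpha)$ with $k$ chosen as a function of $\alpha$ and $\smallexpo$ alone, which decouples the exponent loss from the graph parameters.
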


\begin{proof}[Proof of Theorem~\ref{thm:SNPR_lower_bound_Delta_in}] \hypertarget{proof:SNPR_lower_bound_Delta_in}
We first examine the easier case when $\min(\Deltain,\Deltaout)=\Omega\left(n^{1/3}\right)$.
In this case, we can use the same construction and arguments as in the \hyperlink{proof:SNPR_lower_bound}{proof} of Theorem~\ref{thm:SNPR_lower_bound}, save that our settings for $|\mathcal{V}|$ and $d$ should additionally satisfy Constraint~\ref{constraint:Deltain} in the \hyperlink{proof:contribution_lower_bound}{proof} of Theorem~\ref{thm:contribution_lower_bound}, i.e., $d=O(\Deltain)$ and $|\mathcal{V}|=O(\Deltain)$.
We give our settings as follows.

\header{\textbf{Case 1.1:} $\min(\Deltain,\Deltaout)=\Omega\left(n^{1/3}\right)$ and $m^{1/2}=O\big(\min(\Deltain,\Deltaout)\big)$.}

We set $|\mathcal{V}|=\Theta\left(n^{1/2}m^{-1/4}\right),d=\Theta\left(m^{1/2}\right)$, giving the lower bound of $\Omega\left(\min\big(d|\mathcal{V}|,n/|\mathcal{V}|\big)\right)=\Omega\left(n^{1/2}m^{1/4}\right)=\Omega\left(n^{1/2}\cdot\min\left(\Deltain^{1/2},\Deltaout^{1/2},m^{1/4}\right)\right)$.

\header{\textbf{Case 1.2:} $\min(\Deltain,\Deltaout)=\Omega\left(n^{1/3}\right)$ and $\Deltaout=O\left(\min\left(\Deltain,m^{1/2}\right)\right)$.}

We set $|\mathcal{V}|=\Theta\left(n^{1/2}\Deltaout^{-1/2}\right),d=\Theta(\Deltaout)$, giving the lower bound of $\Omega\left(\min\big(d|\mathcal{V}|,n/|\mathcal{V}|\big)\right)=\Omega\left(n^{1/2}\Deltaout^{1/2}\right)=\Omega\left(n^{1/2}\cdot\min\left(\Deltain^{1/2},\Deltaout^{1/2},m^{1/4}\right)\right)$.
Note that we have $|\mathcal{V}|=O\left(n^{1/3}\right)=O(\Deltain)$ since $\min(\Deltain,\Deltaout)=\Omega\left(n^{1/3}\right)$.

\header{\textbf{Case 1.3:} $\min(\Deltain,\Deltaout)=\Omega\left(n^{1/3}\right)$ and $\Deltain=O\left(\min\left(\Deltaout,m^{1/2}\right)\right)$.}

We set $|\mathcal{V}|=\Theta\left(n^{1/2}\Deltain^{-1/2}\right)$ and $d=\Theta(\Deltain)$, giving the lower bound of $\Omega\left(\min\big(d|\mathcal{V}|,n/|\mathcal{V}|\big)\right)=\Omega\left(n^{1/2}\Deltain^{1/2}\right)=\Omega\left(n^{1/2}\cdot\min\left(\Deltain^{1/2},\Deltaout^{1/2},m^{1/4}\right)\right)$.
Note that we have $|\mathcal{V}|=O\left(n^{1/3}\right)=O(\Deltain)$ since $\Deltain=\Omega\left(n^{1/3}\right)$.

We turn now to the more complicated case when $\min(\Deltain,\Deltaout)=o\left(n^{1/3}\right)$.
Note that this implies $\min(\Deltain,\Deltaout)=O\left(m^{1/2}\right)$.
In this case, we prove that for any constant $\smallexpo>0$, the lower bound of $\Omega\left(n^{1/2-\smallexpo}\big(\min(\Deltain,\Deltaout)\big)^{1/2+\smallexpo}\right)$ holds.

Following the idea in the \hyperlink{proof:contribution_lower_bound}{proof} of Theorem~\ref{thm:contribution_lower_bound}, to reduce the in-degree of $t$ and $u_{*}$ in Figure~\ref{fig:lower_bound_SNPR}, we embed two multi-level structures (see Figure~\ref{fig:multi_level_structure}) into $\{H_i\}$, replacing the edges from $\mathcal{V}$ to $t$ and the edges from $\mathcal{Y}$ to $u_{*}$.
Unlike Figure~\ref{fig:multi_level_structure}, here we set the in-degrees of the nodes in the structure (except at the top level) to be $k/(1-\alpha)$, where $k$ is a constant positive integer to be determined by $\alpha$ and $\smallexpo$.
We will set $|\mathcal{V}|$ and $|\mathcal{Y}|$ differently, and we denote the number of levels in the corresponding multi-level structures by $L_{\mathcal{V}}=\log_{k/(1-\alpha)}|\mathcal{V}|$ and $L_{\mathcal{Y}}=\log_{k/(1-\alpha)}|\mathcal{Y}|$, resp.
Additionally, here we slightly modify the structure between $\mathcal{Y}$ and $u_{*}$ as follows: in $H_i$ where $0\le i\le p$, $i/p\cdot|\mathcal{Y}|$ nodes in $\mathcal{Y}$ have an outgoing edge to the second level in the structure and other nodes in $\mathcal{Y}$ only have a self-loop, while the edges in other levels are all present.

Since the maximum in-degree in the two structures is $\Theta(1)$, this modification removes the constraint that $|\mathcal{V}|=O(\Deltain)$.
Also, the lower bound for estimating the PageRank score of $t$ remains $\Omega\left(\min\big(d|\mathcal{V}|,n/|\mathcal{Y}|\big)\right)$.
Crucially, we need to guarantee that $\pi_i(t)=\big(1+\Omega(1)\big)\pi_{i-1}(t)$ still holds for each $1\le i\le p$.
By the property of the multi-level structure, we can compute that the PageRank score of $u_{*}$ is $\Theta\left(k^{L_{\mathcal{Y}}}/n\right)$ and $\pi_i(t)=\Theta\left(k^{L_{\mathcal{V}}}/n+(1-\alpha)^{L_{\mathcal{V}}}\cdot k^{L_{\mathcal{Y}}}/n\right)$.
We also have $\pi_i(t)-\pi_{i-1}(t)=\Theta\left((1-\alpha)^{L_{\mathcal{V}}}\cdot k^{L_{\mathcal{Y}}}/n\right)$.
In light of these equations, we set $k^{L_{\mathcal{V}}}=(1-\alpha)^{L_{\mathcal{V}}}\cdot k^{L_{\mathcal{Y}}}$ to ensure that $\pi_i(t)=\big(1+\Omega(1)\big)\pi_{i-1}(t)$ for each $1\le i\le p$.
This setting is equivalent to
\begin{align*}
    L_{\mathcal{Y}}=\log_{k}\left(\frac{k}{1-\alpha}\right)^{L_{\mathcal{V}}}=\log_{k}|\mathcal{V}|,
\end{align*}
and is further equivalent to
\begin{align*}
    \quad|\mathcal{Y}|=\left(\frac{k}{1-\alpha}\right)^{L_{\mathcal{Y}}}=\left(\frac{k}{1-\alpha}\right)^{\log_{k}|\mathcal{V}|}=|\mathcal{V}|^{1-\log_{k}(1-\alpha)}.
\end{align*}
We define $\beta=\log_{k}(1-\alpha)$ for simplicity.
To maximize $\min\big(d|\mathcal{V}|,n/|\mathcal{Y}|\big)$, we force $d|\mathcal{V}|=\Theta\big(n/|\mathcal{Y}|\big)$, which leads to
\begin{align} \label{eqn:setting_V}
    |\mathcal{V}|=\Theta\left(\left(\frac{n}{d}\right)^{1/(2-\beta)}\right).
\end{align}
Thus, for a specified $d$, this setting gives the lower bound of
\begin{align*}
    \Omega\left(n^{1/(2-\beta)}d^{1-1/(2-\beta)}\right).
\end{align*}
We conclude our setting and lower bound in this case as follows.

\header{\textbf{Case 2:} $\min(\Deltain,\Deltaout)=o\left(n^{1/3}\right)$.}

We set $d=\Theta\big(\min(\Deltain,\Deltaout)\big)$ and set $|\mathcal{V}|$ according to Equation~\eqref{eqn:setting_V}.
The resultant lower bound is $\Omega\left(n^{1/(2-\beta)}\big(\min(\Deltain,\Deltaout)\big)^{1-1/(2-\beta)}\right)$.
Recalling that $\beta=\log_{k}(1-\alpha)$ and $k$ is a positive integer, 
we obtain the claimed lower bound of $\Omega\left(n^{1/2-\smallexpo}\big(\min(\Deltain,\Deltaout)\big)^{1/2+\smallexpo}\right)$ as long as we set $k$ to be sufficiently large.

This finishes the whole proof.
\end{proof}

\header{\textbf{Remark.}}
We clarify that our Theorem~\ref{thm:SNPR_lower_bound_Delta_in} addresses the asymptotic behavior of $\Deltain$ and does not apply when $\Deltain$ is limited by a fixed constant such as $2$ or $3$.
In such cases, the computational complexity of single-node PageRank estimation strongly depends on the value of $\alpha$ and can beat the $\Theta\big(\sqrt{n}\big)$ barrier for large $\alpha$.
For example, when $\Deltain$ is a fixed constant and $\alpha$ is close to $1$, the probability mass to be reversely pushed from $t$ in \bpush decays rapidly.
Thus, the \bpush process terminates early and the algorithm only needs to explore a small number of $t$'s ancestors to estimate $\pi(t)$.
However, this phenomenon is a consequence of the geometric damping property of PageRank instead of the intrinsic property of random walks.
If we define the centrality of $t$ as $1/(nL)\cdot\sum_{v\in V}\sum_{\ell=0}^{L}P^{(\ell)}(v,t)$, where $L=\lceil\log_2 n\rceil$ and $P^{(\ell)}(v,t)$ is the $\ell$-hop transition probability from $v$ to $t$, then our arguments still yield a matching lower bound of $\Omega\big(\sqrt{n}\big)$.
In fact, the scenario of fixed constant $\Deltain$ has been considered by \cite{bar2008local} before, giving a lower bound of $\Omega\big(n^{1/2-\smallexpo}\big)$ under the node-centric graph-access model when $\Deltain=2$, where $\smallexpo>0$ is a small constant that depends on $\alpha$.
More careful analyses under such settings are beyond the scope of this paper.

\section*{Acknowledgments}

This research was supported by National Natural Science Foundation of China (No. U2241212, No. 61932001, No. U2001212), Beijing Natural Science Foundation (No. 4222028), Beijing Outstanding Young Scientist Program No.BJJWZYJH012019100020098, Alibaba Group through Alibaba Innovative Research Program, and Huawei-Renmin University joint program on Information Retrieval.
We also acknowledge the support provided by the fund for building world-class universities (disciplines) of Renmin University of China and by the funds from Engineering Research Center of Next-Generation Intelligent Search and Recommendation, Ministry of Education, from Intelligent Social Governance Interdisciplinary Platform, Major Innovation \& Planning Interdisciplinary Platform for the ``Double-First Class'' Initiative, Renmin University of China, from Public Policy and Decision-making Research Lab of Renmin University of China, and from Public Computing Cloud, Renmin University of China.
Finally, we extend our gratitude to the anonymous reviewers for their valuable comments.

\appendix

\section{Appendix}

\subsection{Table of Notations and Pseudocodes} \label{sec:table_notations}

Table~\ref{tbl:def-notation} summarizes the frequently used notations in this paper, and Algorithms~\ref{alg:SampleNode} gives the pseudocodes for \SampleNode.

\begin{table*}[ht]
    \centering
    \renewcommand{\arraystretch}{1.3}
    \caption{Table of notations.} \label{tbl:def-notation}
    \begin{tabular}{ll}
    \toprule
    \textbf{Notation} & \textbf{Description} \\
    \midrule
    $G=(V,E)$ & underlying directed graph with node set $V$ and edge set $E$ (Section~\ref{sec:preliminaries}) \\
    $n, m$ & number of nodes and edges in $G$ (Section~\ref{sec:preliminaries}) \\
    $\Nin(v),\Nout(v)$ & set of parents and children of $v$ (Section~\ref{sec:preliminaries}) \\
    $\din(v),\dout(v)$ & in-degree and out-degree of $v$ (Section~\ref{sec:preliminaries}) \\
    $\Deltain,\Deltaout$ & maximum in-degree and out-degree of $G$ (Section~\ref{sec:preliminaries}) \\
    $\alpha$ & decay factor in defining PageRank and PPR, $\alpha\in(0,1)$ (Section~\ref{sec:preliminaries_PageRank}) \\
    $\pi(v),\pi(u,v)$ & PageRank score of $v$, PPR score of $v$ w.r.t. $u$ (Section~\ref{sec:preliminaries_PageRank}) \\
    $\delta$ & parameter in defining the $\delta$-contributing set (Section~\ref{sec:preliminaries_contribution}) \\
    \midrule
    $\rmax$ & threshold parameter in \bpush (Section~\ref{sec:preliminaries_bpush}) \\
    $\epib{},\rb{}$ & reserves and residues in \bpush (Section~\ref{sec:preliminaries_bpush}) \\
    \midrule
    $c$ & relative error parameter (Section~\ref{sec:SNPR_upper_bound}) \\
    $p_f$ & failure probability parameter (Section~\ref{sec:SNPR_upper_bound}) \\
    $\chi_v$ & indicator variable of $\SampleNode()=v$ (Section~\ref{sec:SNPR_upper_bound}) \\
    $q(t)$ & bidirectional estimator (Section~\ref{sec:SNPR_upper_bound}) \\
    $n_r$ & number of random walk simulations in \BiPPR (Section~\ref{sec:SNPR_upper_bound}) \\
    $\epi(t)$ & estimate of $\pi(t)$ returned by \BiPPR (Section~\ref{sec:SNPR_upper_bound}) \\
    \bottomrule
    \end{tabular}
\end{table*}

\begin{algorithm}[ht]
\DontPrintSemicolon
\caption{$\SampleNode()$} \label{alg:SampleNode}
    \KwOut{a sampled node $v$, where each node $v\in V$ is sampled w.p. $\pi(v)$}
    $v\gets\jump()$ \;
    \While{\textup{True}}
    {
        with probability $\alpha$ \Return $v$ \;
        $v\gets\child\Big(v,\randint\big(\outdeg(v)\big)\Big)$ \;
        \textcolor{gray}{// $\randint\big(\outdeg(v)\big)$ returns a uniformly random integer in $\big\{1,2,\dots,\outdeg(v)\big\}$} \;
    }
\end{algorithm}

\subsection{Deferred Proofs} \label{sec:deferred_proofs}

\begin{proof}[Proof of Theorem~\ref{thm:previous_cost_bp}] \hypertarget{proof:previous_cost_bp}
Recall that when a pushback operation is performed on $v$, $v$ sends probability mass to each of its parents, which takes $O\big(\din(v)\big)$ time.
We define $\SP(v)$ (shorthand for ``send probability'') to be the number of pushback operations performed on $v$ and $\RP(v)$ (shorthand for ``receive probability'') to be the number of times $v$ receives probability mass from its children.
It holds that $\RP(v)=\sum_{u\in\Nout(v)}\SP(u)$.
By the procedure of \bpush (Algorithm~\ref{alg:BP}), its complexity can be bounded by both
\begin{align*}
    O\left(\sum_{v\in V}\SP(v)\cdot\din(v)\right)\quad\text{and}\quad O\left(\sum_{v\in V}\RP(v)\right).
\end{align*}
Now recall from Theorem~\ref{thm:properties_bp} that the reserve $\epib{v}$ is always an underestimate of $\pi(v,t)$.
By the procedure of \bpush, each time a pushback operation is performed on $v$, its reserve is increased by at least $\alpha\rmax$.
Thus, $\SP(v)\le\pi(v,t)/(\alpha\rmax)=O\big(\pi(v,t)/\rmax\big)$.
Using $\RP(v)=\sum_{u\in\Nout(v)}\SP(u)$ and Inequality~\eqref{ineqn:sum_PPR_children}, we further have
\begin{align*}
    \RP(v)=O\left(\sum_{u\in\Nout(v)}\frac{\pi(u,t)}{\rmax}\right)=O\left(\frac{\pi(v,t)\cdot\dout(v)}{\rmax}\right).
\end{align*}
Substituting into $O\left(\sum_{v\in V}\SP(v)\cdot\din(v)\right)$ and $O\left(\sum_{v\in V}\RP(v)\right)$, we respectively obtain the complexity bound of 
\begin{align*}
    O\left(\frac{1}{\rmax}\sum_{v\in V}\pi(v,t)\cdot\din(v)\right)\quad\text{and}\quad O\left(\frac{1}{\rmax}\sum_{v\in V}\pi(v,t)\cdot\dout(v)\right),
\end{align*}
finishing the proof.
\end{proof}

\begin{proof}[Proof of Theorem~\ref{thm:contribution_upper_bound}] \hypertarget{proof:contribution_upper_bound}
First, note that one can trivially use $O(n+m)=O(m)$ time to explore the whole graph and return $V$ as a result.
Therefore, it suffices to prove the upper bound of $O\left(\min\big(\Deltain,\Deltaout,\sqrt{m}\big)/\delta\right)$.

For ease of presentation, in the following, we give three algorithms to separately achieve the upper bounds of $O(\Deltain/\delta)$, $O(\Deltaout/\delta)$, and $O\big(\sqrt{m}/\delta\big)$.
One can straightforwardly combine them into one algorithm that runs in $O\left(\min\big(\Deltain,\Deltaout,\sqrt{m}\big)/\delta\right)$ time.
The three algorithms share a unified framework: we repeatedly run \bpush with $\rmax=1,1/2,1/4,1/8,...$ and record a quantity $T_{\rmax}$ for each running ($T_{\rmax}$ depends on the running process of $\bpush(t,\alpha,\rmax)$ and is to be determined shortly), and once the sum of $T_{\rmax}$ is about to surpass $\Theta(1/\delta)$, we terminate the \bpush process, restore the results of the last running of it, and return all nodes with nonzero reserves as the output.
Our settings of $T_{\rmax}$ will ensure that recording it does not increase the running cost of \bpush asymptotically.
To establish our results, it suffices to guarantee that:
\begin{enumerate}
    \item The quantity $T_{\rmax}$ can be upper bounded by $O\big(n\pi(t)/\rmax\big)$;
    \item The cost of running $\bpush(t,\alpha,\rmax)$ is bounded by $O(\Deltain\cdot T_{\rmax})$, $O(\Deltaout\cdot T_{\rmax})$, or $O(\sqrt{m}\cdot T_{\rmax})$, correspondingly.
\end{enumerate}
To explain, the first condition ensures that the sum of $T_{\rmax}$ for running \bpush with $\rmax=1,1/2,1/4,\dots,\rmax_{*}$ is bounded by $O(1/\delta)$, where $\rmax_{*}$ is the largest value of the form $1\big/2^k$ such that $\rmax_{*}<\delta n\pi(t)$, for positive integers $k$.
Thus, the final value of $\rmax$ obtained by the algorithm is smaller than $\delta n\pi(t)$, ensuring that the output contains the $\delta$-contributing set of $t$.
On the other hand, the second condition ensures that the cost of the algorithm is bounded by $O(\Deltain/\delta)$, $O(\Deltaout/\delta)$, and $O\big(\sqrt{m}/\delta\big)$, correspondingly, since the sum of $T_{\rmax}$ in running the algorithm is $\Theta(1/\delta)$.
It remains to set $T_{\rmax}$ and check that it satisfies the two conditions.
To this end, we will use the notions and properties of $\SP(v)$ and $\RP(v)$ in running $\bpush(t,\alpha,\rmax)$, as described in the \hyperlink{proof:previous_cost_bp}{proof} of Theorem~\ref{thm:previous_cost_bp}.

To achieve the bound of $O(\Deltain/\delta)$, we set $T_{\rmax}=\sum_{v\in V}\SP(v)$.
For condition 1, recall that $\SP(v)=O\big(\pi(v,t)/\rmax\big)$, yielding $T_{\rmax}=O\left(\sum_{v\in V}\pi(v,t)/\rmax\right)=O\big(n\pi(t)/\rmax\big)$.
For condition 2, recall that the cost of \bpush is bounded by $O\left(\sum_{v\in V}\SP(v)\cdot\din(v)\right)$, which is further bounded by $O\left(\sum_{v\in V}\SP(v)\cdot\Deltain\right)=O(\Deltain\cdot T_{\rmax})$.

To achieve the bound of $O(\Deltaout/\delta)$, we set $T_{\rmax}=\sum_{v\in V}\RP(v)/\dout(v)$.
For condition 1, recall that $\RP(v)=O\big(\pi(v,t)\cdot\dout(v)/\rmax\big)$, yielding $T_{\rmax}=O\left(\sum_{v\in V}\pi(v,t)/\rmax\right)=O\big(n\pi(t)/\rmax\big)$.
For condition 2, recall that the cost of \bpush is bounded by $O\left(\sum_{v\in V}\RP(v)\right)$, which is further bounded by $O(\Deltaout\cdot T_{\rmax})$.

To achieve the bound of $O\big(\sqrt{m}/\delta\big)$, we define
\begin{align*}
    f(v)=\sum_{u\in\Nin(v)}\dout(u),\quad g(v)=\sum_{u\in\Nin(v)}\frac{1}{\dout(u)},
\end{align*}
and we set
\begin{align*}
    T_{\rmax}=\left(\sum_{v\in V}\SP(v)\right)^{1/2}\left(\sum_{v\in V}\SP(v)\cdot g(v)\right)^{1/2}.
\end{align*}
For condition 1, we have
\begin{align*}
    T_{\rmax}&=O\left(\left(\sum_{v\in V}\frac{\pi(v,t)}{\rmax}\right)^{1/2}\left(\sum_{v\in V}\frac{\pi(v,t)}{\rmax}\cdot g(v)\right)^{1/2}\right) \\
    &=O\left(\left(\frac{n\pi(t)}{\rmax}\right)^{1/2}\left(\frac{1}{\rmax}\sum_{u\in V}\frac{1}{\dout(u)}\sum_{v\in\Nout(u)}\pi(v,t)\right)^{1/2}\right) \\
    &=O\left(\left(\frac{n\pi(t)}{\rmax}\right)^{1/2}\left(\sum_{u\in V}\frac{\pi(u,t)}{\rmax}\right)^{1/2}\right)=O\left(\frac{n\pi(t)}{\rmax}\right),
\end{align*}
where we used Inequality~\eqref{ineqn:sum_PPR_children}.
For condition 2, 
using a similar derivation as in the \hyperlink{proof:cost_bp}{proof} of Theorem~\ref{thm:cost_bp}, we can derive that the cost of \bpush is bounded by
\begin{align*}
    O\left(\sum_{v\in V}\SP(v)\cdot\din(v)\right)=O\left(\left(\sum_{v\in V}\SP(v)\cdot f(v)\right)^{1/2}\left(\sum_{v\in V}\SP(v)\cdot g(v)\right)^{1/2}\right).
\end{align*}
As $f(v)\le m$, this can be further bounded by 
\begin{align*}
    O\left(\sqrt{m}\left(\sum_{v\in V}\SP(v)\right)^{1/2}\left(\sum_{v\in V}\SP(v)\cdot g(v)\right)^{1/2}\right)=O\big(\sqrt{m}\cdot T_{\rmax}\big),
\end{align*}
finishing the proof.
\end{proof}

\bibliographystyle{alphaurl}
\bibliography{paper}

\end{document}